\documentclass[a4paper,10pt]{article}

\usepackage{a4wide}
\usepackage[utf8]{inputenc}
\usepackage{amssymb,amsmath,amscd,amsfonts,amsthm,bbm,mathrsfs,enumerate}
\usepackage{booktabs}
\usepackage[colorlinks=true, linkcolor=blue, citecolor=blue]{hyperref}
\usepackage{mathabx}
\usepackage{graphicx}
\usepackage{color}
\usepackage{accents}
\usepackage{subcaption}
\usepackage{enumitem}

\DeclareUnicodeCharacter{00A0}{~}

\theoremstyle{definition}
\newtheorem{theorem}{Theorem}
\newtheorem*{theorem*}{Statement}
\newtheorem{lemma}[theorem]{Lemma}
\newtheorem{corollary}[theorem]{Corollary}
\newtheorem{proposition}[theorem]{Proposition}

\newtheorem{remark}{Remark}

\newtheorem*{condition*}{Condition}
\newtheorem{assumption}{Assumption}

\DeclareMathOperator{\PL}{PL}
\DeclareMathOperator*{\diag}{diag}
\DeclareMathOperator{\Tanh}{Tanh}

\newcommand{\1}{\mathbbm 1}
\newcommand{\T}{\top}

\newcommand{\PP}{{{\mathbb P}}} 
\newcommand{\EE}{{{\mathbb E}}} 
\newcommand{\NN}{{{\mathbb N}}} 
\newcommand{\RR}{{{\mathbb R}}}

\newcommand{\mcF}{{\mathscr F}} 
 
\newcommand{\mcL}{{\mathscr L}}

\newcommand{\cA}{{\mathcal A}} 
\newcommand{\cB}{{\mathcal B}} 
\newcommand{\cC}{{\mathcal C}} 
\newcommand{\cD}{{\mathcal D}} 
\newcommand{\cE}{{\mathcal E}} 
 
\newcommand{\cG}{{\mathcal G}}

\newcommand{\cK}{{\mathcal K}}

\newcommand{\cN}{{\mathcal N}} 
\newcommand{\cO}{{\mathcal O}} 
\newcommand{\cP}{{\mathcal P}} 
\newcommand{\cR}{{\mathcal R}} 
 
\newcommand{\cS}{{\mathcal S}} 
\newcommand{\cT}{{\mathcal T}}

\newcommand{\Ccard}{\bC_{\text{card}}} 
\newcommand{\Scirc}{S_{\text{circ}}}

\newcommand{\ton}{\xrightarrow[n\to\infty]{}} 

\newcommand{\ps}[2]{\left(#1 \cdot #2 \right)}

\newcommand{\EQ}[1]{\left\langle #1 \right\rangle}

\newcommand{\rownorm}[1]{\left\lvert\hspace{-1 pt}\left\lvert\hspace{-1 pt}\left\lvert {#1}\right\lvert\hspace{-1 pt}\right\lvert\hspace{-1 pt}\right\lvert}

\newcommand{\bC}{\bs C}

\newcommand{\tx}{\tilde x} 
 
\newcommand{\cq}{\check q} 
\newcommand{\cx}{\check x} 
\newcommand{\cX}{{\widecheck X}} 
 
\newcommand{\cz}{\check z} 
\newcommand{\Ac}{A^{\text{c}}} 

\newcommand{\bs}{\boldsymbol}

\usepackage{authblk}

\title{The SK model with a sparse variance profile: free energy 
and AMP algorithm for TAP equations at high temperature}

\author{Walid Hachem} 
\affil{CNRS, Laboratoire d’informatique Gaspard Monge (LIGM / UMR 8049),
Université Gustave Eiffel, ESIEE Paris, France \\
\texttt{walid.hachem@univ-eiffel.fr}}  

\date{\today}  
\begin{document}

\maketitle

\begin{abstract}
A generalization of the Sherrington–Kirkpatrick (SK) model for spin glasses is
considered, in which the interaction matrix is endowed with a variance profile
that has no particular structure and may be sparse. In the first part of the
paper, an asymptotic equivalent of the free energy is derived at sufficiently
high temperatures, regardless of the signature of the variance profile matrix.
In the second part, the mean of the spin vector under the Gibbs measure is
estimated using an Approximate Message Passing algorithm based on the
Thouless–Anderson–Palmer equations. The dynamical approach of Adhikari 
\emph{et al.} (J. Stat. Phys., 2021), originally developed for the classical 
SK model, is adapted to the present setting to obtain these results.
\end{abstract} 

\section{Model, problem and the results} 

For each integer $n > 0$, let $S^{(n)} = \begin{bmatrix} s^{(n)}_{ij}
\end{bmatrix}_{i,j=1}^n$ be a deterministic symmetric matrix with elements
$s^{(n)}_{ij} \geq 0$ and with a zero diagonal. Let $W^{(n)} = \begin{bmatrix}
W^{(n)}_{ij} \end{bmatrix}_{i,j=1}^n$ be a real symmetric $n\times n$ random
matrix such that the random variables $\{ W^{(n)}_{ij} \}_{1\leq i < j \leq n}$
are independent, and such that $W^{(n)}_{ij} \sim \cN(0, t s_{ij}^{(n)})$ for 
all $i,j \in [n]$ and for some $t > 0$. 

Let $\Sigma_n = \{-1,+1\}^n$ be the space of vectors of Ising spins with size
$n$. Define the random $\cP(\Sigma_n)$--valued Gibbs measure $G^{(n)}$ as
follows. The measure of a singleton $\{\sigma\} \subset \Sigma_n$ by $G^{(n)}$
is $G^{(n)}(\sigma) = \exp(H^{(n)}(\sigma)) / Z^{(n)}$, where $H^{(n)}$ is
the Hamiltonian defined as 
\[
H^{(n)}(\sigma) = \frac 12 \sigma^\T W^{(n)} \sigma + 
  h \ps{\sigma}{1_n} , 
\]
$h\in\RR$ is the amplitude of an external field, and $Z^{(n)} =
\sum_{\sigma\in\Sigma_n} \exp(H^{(n)}(\sigma))$ is the partition function.  In
the particular case where $s_{ij}^{(n)} = 1/n$ for $i\neq j$, this model for the
Gibbs measure boils down to the classical Sherrington-Kirkpatrick (SK) model
that has been studied at length in the statistical physics literature, as
detailed in the treatises \cite{tal-livre11-t1,tal-livre11-t2,pan-livre13}.
Since the interactions between our spins are subjected to the more general
variance profile represented by the matrix $S^{(n)}$, we term our model a SK 
model with a variance profile.

Let us state our conditions on this matrix. Considering a sequence of 
positive numbers $(K_n)$ such that $K_n\leq n$ and $K_n \to \infty$, we assume 
the following: 
\begin{assumption} 
\label{ass-K}
It holds that 
\begin{itemize}
\item There exists a constant $ \bs C_s > 0$ such that
 $s_{ij}^{(n)} \leq \bs C_s K_n^{-1}$ for all $n$ and all $i,j \in [n]$.
\item The number $\bC_{\text{row}} = \sup_n \rownorm{S^{(n)}}$ where 
$\rownorm{\cdot}$ is the max row sum norm is finite. 
\end{itemize}
\end{assumption}
One case of interest covered by Assumption~\ref{ass-K} and that can be useful
in the fields of statistical physics, graph inference, and large-dimensional
signal estimation among others is the case where $K_n \ll n$, and where
there are at most $\Ccard K_n$ non-zero $s_{ij}^{(n)}$'s in each row and column
of $S^{(n)}$ for some constant $\Ccard > 0$. We refer to these models as the 
``sparse'' ones. 

The first aim of this paper is to study the large $n$ asymptotics of the free 
energy 
\[
F_n = \frac 1n \EE\log Z^{(n)}
\]
in a ``high temperature'' regime represented by the following assumption:  
\begin{assumption} 
\label{ass-HT} 
$\displaystyle{t < \frac{\log 2}{\bC_{\text{row}}}}$. 
\end{assumption}

Denoting as usual as $\EQ{\cdot}$ the mean operator with respect to the measure
$(G^{(n)})^{\otimes\infty}$, the second aim of this paper is to provide a
large--$n$ approximation of the mean vector $m^{(n)} = \EQ{\sigma}$.  Returning
to the classical SK model, it is well-known that this vector can be
approximated in the high temperature regime by an Approximate Message Passing
(AMP) algorithm by building on the so-called Thouless-Anderson-Palmer (TAP)
equations~\cite{bol-14}.  The second purpose of this paper is to generalize
this result to our SK model with a variance profile in the temperature regime
specified by Assumption~\ref{ass-HT}. \\

The conditions on the variance profile that can be found in the mathematical
physics literature are usually much more restrictive than those provided by
Assumption~\ref{ass-K}. Regarding the free energy computation, this literature
is mainly limited to the so-called multi-species model, where it is assumed
that $S^{(n)}$ consists in a finite number of blocks which dimensions scale
with $n$ (implying that $K_n = n$ in our setting), and typically, that
$S^{(n)}$ is a non-negative matrix in the semi-definite positive ordering.
These two assumptions are made in, \emph{e.g.},
\cite{bar-con-min-tan-15,pan-15} which deal, on the other hand, with the free
energy problem at all temperatures.  More recent contributions dealing with the
multi-species model with the non-negativity assumption include
\cite{bat-slo-soh-19,alb-cam-con-min-(arxiv)20,che-iss-mou-(arxiv)25,
kim-(arxiv)25}.  Cases where this matrix can be indefinite were discussed in
\cite{dey-wu-21,che-(arxiv)24, wu-24,bat-soh-(arxiv)25}.  Diluted variants of
the multi-species model are considered in \cite{liu-don-24,
alb-con-min-zim-(arxiv)24}.  Considering the potential applications, graph max
$\kappa$--cut problems and low-rank matrix estimation problems related with the
multi-species model were considered in \cite{jag-ko-sen-18} and in
\cite{gui-ko-krz-zde-25} respectively. It would be useful to extend these
results to more flexible and more general variance profiles such as the ones
considered in this paper. \\

Denoting in all this paper as $\xi$ a standard Gaussian random variable, and
writing $\Tanh(x) = \tanh(x+h)$, define the scalar function $g:\RR_+\to\RR_+$ 
as 
\begin{equation} 
\label{def-g} 
g(x) = \EE \Tanh\left(\sqrt{x} \xi \right)^2 ,  
\end{equation} 
Consider the system of equations defined in $q^{(n)} \in \RR_+^n$ as 
\begin{equation}
\label{eq-q} 
q^{(n)} = t S^{(n)} g(q^{(n)}) 
\end{equation} 
where $g(q^{(n)})\in\RR_+^n$ is the vector obtained by an element-wise 
application of the function $g$ to the vector $q^{(n)}$ (this notational 
convention regarding scalar functions applied to vectors will be used all 
along this paper). 
\begin{lemma}
\label{lm-q} 
Under Assumptions~\ref{ass-K} and~\ref{ass-HT}, Equation~\eqref{eq-q} admits an
unique solution $q^{(n)} \in\RR_+^n$. Given any vector $q^{(n), 0} \in
\RR_+^n$, the iterative algorithm $q^{(n),l+1} = t S^{(n)} g( q^{(n),l})$
converges to this solution. Moreover, $\sup_n \| q^{(n)} \|_\infty < \log 2$ 
where $\|\cdot\|_\infty$ is the $\max$--norm. 
\end{lemma}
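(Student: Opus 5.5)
The plan is to apply the Banach fixed point theorem to the map $T : \RR_+^n \to \RR_+^n$, $T(q) = t S^{(n)} g(q)$, on the complete metric space $(\RR_+^n, \|\cdot\|_\infty)$. The map $T$ does send $\RR_+^n$ into itself, since $S^{(n)}$ has nonnegative entries and $g \geq 0$. Once $T$ is shown to be a contraction, uniqueness of the solution of~\eqref{eq-q} and convergence of the iterates $q^{(n),l+1} = T(q^{(n),l})$ from any starting point $q^{(n),0}\in\RR_+^n$ follow at once.

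\textbf{The sup bound.} Since $0 \leq \Tanh^2 < 1$, we have $0 \leq g(x) < 1$ for all $x \geq 0$. Hence for any $q \in \RR_+^n$,
\[
\|T(q)\|_\infty \leq t \rownorm{S^{(n)}} \|g(q)\|_\infty \leq t \bC_{\text{row}} .
\]
In particular the fixed point $q^{(n)}$ satisfies $\sup_n \|q^{(n)}\|_\infty \leq t\bC_{\text{row}} < \log 2$ by Assumption~\ref{ass-HT}. Assumption~\ref{ass-K} is used only through the finiteness of $\bC_{\text{row}}$.

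\textbf{Lipschitz bound on $g$.} This is the key step: I aim to show that $g$ is $1$-Lipschitz on $\RR_+$. Write $f(y) = \tanh^2(y+h)$. For $x>0$, Gaussian integration by parts (Stein's lemma, $\EE[\xi \varphi(\xi)] = \EE[\varphi'(\xi)]$) gives
\[
g'(x) = \frac{1}{2\sqrt x}\EE\big[\xi f'(\sqrt x\xi)\big] = \frac12 \EE f''(\sqrt x \xi) .
\]
A direct computation yields
\[
f''(y) = 2\,\mathrm{sech}^2(y+h)\big(1 - 3\tanh^2(y+h)\big).
\]
Setting $u = \tanh^2(y+h) \in [0,1)$, this becomes $f''(y) = 2(1-u)(1-3u)$. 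One checks $|(1-u)(1-3u)| \leq 1$ on $[0,1]$: the product equals $1$ at $u=0$, and its minimum is $-1/3$, attained at $u = 2/3$. Hence $|g'(x)| \leq 1$ on $(0,\infty)$. Since $g$ is continuous at $0$ by dominated convergence, the mean value theorem gives $|g(x)-g(y)| \leq |x-y|$ for all $x,y \geq 0$. The only care needed is justifying differentiation under the expectation and the behaviour at $x=0$; both are routine because $f$ and its derivatives are bounded.

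\textbf{Contraction.} For $q, q' \in \RR_+^n$,
\[
\|T(q) - T(q')\|_\infty \leq t\bC_{\text{row}} \|g(q) - g(q')\|_\infty \leq t \bC_{\text{row}} \|q-q'\|_\infty .
\]
By Assumption~\ref{ass-HT}, $t\bC_{\text{row}} < \log 2 < 1$, so $T$ is a contraction with a constant that is uniform in $n$. Banach's theorem then gives existence and uniqueness of $q^{(n)}$ and the convergence of the iterative algorithm, which completes the proof.
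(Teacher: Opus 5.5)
Your proof is correct and follows the same route as the paper. Both arguments show that $g$ is $1$-Lipschitz via Gaussian integration by parts, deduce that $q\mapsto tS^{(n)}g(q)$ is a $\|\cdot\|_\infty$-contraction with constant $t\bC_{\text{row}}<\log 2<1$ so that Banach's fixed point theorem applies, and use $0\le g<1$ to get $\|q^{(n)}\|_\infty\le t\bC_{\text{row}}<\log 2$ uniformly in $n$. Your explicit formula $g'(x)=\EE\,\mathrm{sech}^2(\sqrt{x}\xi+h)\bigl(1-3\tanh^2(\sqrt{x}\xi+h)\bigr)$ spells out the derivative calculation that the paper only cites.
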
 
This simple lemma is proven in Appendix~\ref{prf-lm-q} for completeness. The 
large--$n$ behavior of the free energy is specified by the following theorem: 
\begin{theorem}
\label{th-F} 
Let Assumptions \ref{ass-K} and \ref{ass-HT} hold true. 
Defining the function 
\[
\bs F_n = \log 2 + \frac 1n \sum_{i=1}^n 
 \EE\log\cosh\left( \sqrt{q^{(n)}_i} \xi + h \right)
 + \frac{t}{4n} \left(1_n-g(q^{(n)})\right)^\T S^{(n)}
    \left(1_n-g(q^{(n)})\right) , 
\]
where $q^{(n)} = \begin{bmatrix} q^{(n)}_i \end{bmatrix}_{i=1}^n$ is the 
solution of~\eqref{eq-q}, it holds that $\bs F_n$ is bounded, and moreover,
that 
\[
F_n - \bs F_n \ton  0 . 
\]
\end{theorem}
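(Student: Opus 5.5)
We follow the dynamical approach of Adhikari \emph{et al.}: interpolate in the temperature parameter and compare derivatives. For $u\in[0,t]$, let $Z_u$ be the partition function with $W$ replaced by $W_u$, where $W_{u,ij}\sim\cN(0,u s_{ij})$, and set $F_n(u)=n^{-1}\EE\log Z_u$. Let $q(u)$ be the solution of $q=uSg(q)$ given by Lemma~\ref{lm-q}, applied with $t$ replaced by $u\le t$, so that Assumption~\ref{ass-HT} still holds. Let $\bs F_n(u)$ be the corresponding candidate. At $u=0$ we have $q(0)=0$, and both quantities equal $\log 2+\log\cosh h$. Boundedness of $\bs F_n$ is immediate: $\|q\|_\infty<\log 2$, $0\le g\le1$, and $n^{-1}1^\T S1\le\bC_{\text{row}}$. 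It therefore suffices to show that $\sup_{u\le t}|F_n'(u)-\bs F_n'(u)|\to0$.

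\textbf{Step 1 (derivatives).} Gaussian integration by parts gives
\[
F_n'(u)=\frac1{4n}\sum_{i\ne j}s_{ij}\bigl(1-\EE\EQ{\sigma_i\sigma_j}_u^2\bigr).
\]
For $\bs F_n(u)$ we differentiate, using the stationarity of $q$. The $q$-derivative of $\sum_i\EE\log\cosh(\sqrt{q_i}\xi+h)$ is $\tfrac12(1-g(q_i))$. The $q$-derivative of the quadratic term is $-\tfrac u2\,g'(q)\odot S(1-g(q))$. Since $q=uSg(q)$, these contributions cancel through the chain rule, leaving only the explicit $u$-derivative:
\[
\bs F_n'(u)=\frac1{4n}(1-g)^\T S(1-g)=\frac1{4n}\sum_{i,j}s_{ij}\bigl(1-g(q_i)-g(q_j)+g(q_i)g(q_j)\bigr).
\]
(We will double-check this cancellation carefully.) Comparing the two expressions, it suffices to prove the asymptotic decorrelation
\[
\frac1n\sum_{i,j}s_{ij}\,\EE\EQ{\sigma_i\sigma_j}^2\approx\frac1n\sum_{i,j}s_{ij}\,g(q_i)g(q_j),
\]
that is, $\EE\langle\sigma_i\rangle^2\langle\sigma_j\rangle^2\approx g(q_i)g(q_j)$ and $\EE(\EQ{\sigma_i\sigma_j}-\EQ{\sigma_i}\EQ{\sigma_j})^2\to0$, averaged against the weights $s_{ij}$. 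Here the diagonal of $S$ is zero, and the row-sum bound controls the total weight.

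\textbf{Step 2 (cavity / local fields).} Following Adhikari \emph{et al.}, we control these quantities through the cavity field $\ell_i=\sum_j W_{ij}\sigma_j$. Under high temperature, the conditional law of the cavity field given the other spins is approximately Gaussian with variance $\approx u\sum_j s_{ij}\EQ{\sigma_j}^2$. This yields $\EQ{\sigma_i}\approx\Tanh(\text{field})$ and hence $\EE\EQ{\sigma_i}^2\approx g\bigl(u\sum_j s_{ij}\EE\EQ{\sigma_j}^2\bigr)$, which is the fixed-point equation~\eqref{eq-q}. Uniqueness and contractivity from Lemma~\ref{lm-q} (the derivative of $x\mapsto tSg(x)$ is bounded by $t\bC_{\text{row}}\sup g'<1$ in the $\|\cdot\|_\infty$ norm) then show that $\EE\EQ{\sigma_i}^2$ is close to $g(q_i)$ uniformly in $i$.

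\textbf{Step 3 (main obstacle).} The hard step is proving the vanishing of the weighted overlap covariances $\frac1n\sum s_{ij}\EE(\EQ{\sigma_i\sigma_j}-\EQ{\sigma_i}\EQ{\sigma_j})^2$ together with the concentration needed in Step 2, with only $s_{ij}\le\bC_s/K_n$ rather than $1/n$. We plan to bound $\EE\EQ{(\sum_j s_{ij}(\sigma^1_j\sigma^2_j-\cdot))^2}$ via a second-moment or Latała-type argument. The condition $t\bC_{\text{row}}<\log2$ makes the annealed comparison $\EE Z^2\lesssim(\EE Z)^2$-type bounds work through the row-sum norm, and each fluctuation term then carries a factor $\max_{ij}s_{ij}\le\bC_s/K_n\to0$. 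Summing against the bounded row sums gives $o(1)$ uniformly in $u\le t$, and integrating the derivative difference over $[0,t]$ concludes.
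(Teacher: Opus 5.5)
Replacing the paper's Guerra interpolation (towards decoupled fields $\eta\sim\cN(0,\diag(q))$) by an interpolation in temperature is a legitimate alternative, and it avoids the auxiliary field. But your Step~1 computation is wrong as written. Write $\Psi_u(q)$ for $\bs F_n(u)$ with $q$ left free. At $q=uSg(q)$,
\[
\partial_{q_i}\Psi_u(q)=\frac1{2n}\bigl[(1-g(q_i))-g'(q_i)\,(uS1_n-q)_i\bigr],
\]
which is nonzero in general, so the chain-rule term does not drop out. Stationarity holds instead in the overlap variable $\rho=g(q)$. Set $\Phi_u(\rho)=\log2+\frac1n\sum_i\EE\log\cosh(\sqrt{u[S\rho]_i}\,\xi+h)+\frac u{4n}(1_n-\rho)^\T S(1_n-\rho)$. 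Then $\nabla\Phi_u(\rho)=\frac u{2n}S(\rho-g(uS\rho))$, which vanishes at $\rho=g(q)$. Differentiating only the explicit $u$ then gives $\bs F_n'(u)=\frac1{4n}\sum_{i,j}s_{ij}\bigl(1-g(q_i)g(q_j)\bigr)$. This differs from your $\frac1{4n}(1-g)^\T S(1-g)$ by $\frac1{2n}g(q)^\T S(1_n-g(q))$, which is nonzero when $h\neq0$. With your displayed formula you would actually need $\EE\EQ{\sigma_i\sigma_j}^2\approx g(q_i)+g(q_j)-g(q_i)g(q_j)$. The reduction you go on to state happens to be the correct one for the corrected derivative, so this part is repairable.

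The genuine gap is Step~3, which you leave as a plan although it carries the whole proof. Write $\EQ{\sigma_i\sigma_j}=m_{ij}+m_im_j$. Your reduction then amounts to $\EE\EQ{\|uS(\sigma^1\sigma^2)-q(u)\|_n^2}\to0$, that is, to $\EE m_{ij}^2=O(1/K_n)$ together with concentration of $p=uSm^2$ around $q(u)$. Step~2 silently needs the latter as well: the cavity heuristic only gives $\EE m_i^2\approx\EE g(p_{(i),i})$, not $g(\EE p_i)$.

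The mechanism you propose cannot deliver this. For $h\neq0$ the annealed free energy $\frac1n\log\EE Z_u=\log2+\log\cosh h+\frac u{4n}1_n^\T S1_n$ exceeds the quenched one by an order-one amount. Hence $\EE Z_u^2/(\EE Z_u)^2$ is exponentially large and no second-moment comparison holds; besides, $\log 2$ is not a second-moment threshold. Latała's argument, in the form known for SK, covers a smaller temperature region, and it would have to be redone for a sparse, unstructured profile.

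The paper gets the estimate differently. The $\log2$ threshold comes from the iterated Itô/Gronwall bound on $\delta_i m_j^{[i]}$ combined with \eqref{mij-mi}: this is Lemma~\ref{lm-var}, via a geometric series in $e^{\bC_{\text{row}}t}-1$. That covariance bound yields the rigorous cavity estimates of Lemma~\ref{preTAP}. Lemma~\ref{lm-R12} then obtains the concentration \eqref{p-Ep} from the joint second-moment estimate \eqref{p2} and the $\|\cdot\|_n$-contraction of $x\mapsto tSg(x)$ (via $\|S\|\le\rownorm{S}$). These lemmas hold at every temperature $u\le t$, so inserting them into your corrected derivative formula would close your route. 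Without them, the proposal has no proof of its central step.
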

It is seen here that the signature of $S^{(n)}$ has no impact on the form of 
the large-$n$ approximation of the free energy. 

It is worth considering the particular case of this theorem where the matrix
$S^{(n)}$ is doubly stochastic. In this case, the solution of
Equation~\eqref{eq-q} is reduced to $q^{(n)} = q 1_n$ where the scalar $q$ is
the unique solution to $q = t g(q)$. Therefore, as long as the degree of
sparsity satisfies $K_n\to\infty$, we recover the expression of the asymptotic
free energy of the classical SK model at high temperature \cite{tal-livre11-t1}:
\begin{corollary}[the doubly stochastic case] 
\label{cor-bisto}
Assume that $S^{(n)}$ is a doubly stochastic matrix. Then, it holds under 
Assumptions~\ref{ass-K} and~\ref{ass-HT} (which reads $t< \log 2$) that 
\begin{equation}
\label{bisto} 
F_n \xrightarrow[n\to\infty]{}  
 \log 2 + \EE\log\cosh\left( \sqrt{q} \xi + h \right)
 + \frac{t}{4} (1-q/t)^2, 
\end{equation} 
where $q$ is the unique solution to the scalar equation $q = t g(q)$ on 
$\RR_+$. 
\end{corollary}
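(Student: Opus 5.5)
The corollary follows from Theorem~\ref{th-F} once we identify the solution of~\eqref{eq-q} and evaluate $\bs F_n$. Since $S^{(n)}$ is doubly stochastic, every row sum equals $1$, so $\bC_{\text{row}} = 1$ and Assumption~\ref{ass-HT} reads $t < \log 2$, as indicated in the statement.

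The first step is to produce a constant solution of the scalar equation $q = t g(q)$ on $\RR_+$. The function $g$ is continuous, takes values in $[0,1]$, and is nondecreasing, so $x \mapsto t g(x)$ maps $[0,t]$ into itself. By the intermediate value theorem there is therefore a point $q \in [0,t]$ with $q = t g(q)$. For this $q$, the vector $q 1_n$ satisfies
\[
t S^{(n)} g(q 1_n) = t g(q) S^{(n)} 1_n = t g(q) 1_n = q 1_n ,
\]
so $q1_n$ solves~\eqref{eq-q}. Lemma~\ref{lm-q} guarantees that~\eqref{eq-q} has a unique solution in $\RR_+^n$, hence $q^{(n)} = q 1_n$ for every $n$.

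The same lemma also yields uniqueness of the scalar solution, which the corollary asserts. If $q'$ were a second solution of $q' = t g(q')$, then $q' 1_n$ would be a second solution of~\eqref{eq-q}, contradicting Lemma~\ref{lm-q}. For this argument one fixed $n$ suffices, for instance $n = 2$ with $S = \begin{bmatrix} 0 & 1 \\ 1 & 0 \end{bmatrix}$; the uniqueness part of the lemma depends only on $t \rownorm{S} < \log 2$.

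It remains to evaluate $\bs F_n$. With $q^{(n)} = q 1_n$, the middle term of $\bs F_n$ equals $\EE\log\cosh(\sqrt q\,\xi + h)$. For the quadratic term we have
\[
1_n - g(q^{(n)}) = (1 - g(q)) 1_n = (1 - q/t) 1_n ,
\]
so this term equals
\[
\frac{t}{4n}\,(1 - q/t)^2\, 1_n^\T S^{(n)} 1_n = \frac{t}{4n}\,(1 - q/t)^2\, n = \frac t4\,(1 - q/t)^2 ,
\]
using that $1_n^\T S^{(n)} 1_n = n$ for a doubly stochastic matrix. Hence $\bs F_n$ is independent of $n$ and equals the right-hand side of~\eqref{bisto}. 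Theorem~\ref{th-F} gives $F_n - \bs F_n \to 0$, and since $\bs F_n$ is constant this yields the claimed limit.

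No step here is a real obstacle: existence comes from the intermediate value theorem and uniqueness is inherited from Lemma~\ref{lm-q}.
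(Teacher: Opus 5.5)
Your argument is correct and follows the paper's route: $S^{(n)}1_n = 1_n$ makes $q^{(n)} = q1_n$ the solution of~\eqref{eq-q}, and substituting this into $\bs F_n$ and applying Theorem~\ref{th-F} gives the limit. One harmless slip is the claim that $g$ is nondecreasing, which is false when $|h|$ is large (one computes $g'(0) = \cosh^{-2}(h)\,(1-3\tanh^2 h) < 0$), but you never use it: $0 \le g < 1$ already gives $t g([0,t]) \subset [0,t]$, and since $g$ is $1$-Lipschitz and $t<1$, the map $x \mapsto t g(x)$ is a contraction on $\RR_+$, which yields existence and uniqueness of $q$ directly.
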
 

A particular case that can be useful in the field of statistical physics is the
case where $S^{(n)}$ is a Toeplitz banded matrix with a bandwidth of order
$K_n\to\infty$ with $K_n\ll n$. Here, $K_n$ represents the range of the
interactions among the spins.  Within this range, the random interactions are
furthermore subjected to a variance profile that depends on the distance
between the sites as shown in the statement of the following corollary: 
\begin{corollary}[banded Toeplitz interaction profile] 
\label{cor-band}
Let $\psi^{(n)} : \{0, \ldots, n-1 \} \to \RR_+$ be a function that satisfies
the following assumptions: $\psi^{(n)}(i) \leq \bs C_s/K_n$, $\sum_i \psi^{(n)}(i)
= 1/2$, and the support of $\psi^{(n)}$ is included into $\{1, \ldots, K_n \}$.
Let $s^{(n)}_{ij} = \psi^{(n)}(|i-j|)$, and assume that $K_n\to\infty$ with
$K_n/n\to 0$. Then, for $t < \log 2$, the convergence~\eqref{bisto} holds true. 
\end{corollary}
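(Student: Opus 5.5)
The plan is to deduce the corollary from Theorem~\ref{th-F} in two steps: first check that the banded Toeplitz profile satisfies the hypotheses of the theorem, then show that $\bs F_n$ converges to the right-hand side of~\eqref{bisto}. For the hypotheses, $s^{(n)}_{ij}=\psi^{(n)}(|i-j|)$ is symmetric and non-negative, and it has zero diagonal because $\psi^{(n)}(0)=0$. We also have $s_{ij}\le \bs C_s/K_n$. Each row sum equals $\sum_{k\ge1,\, i-k\ge1}\psi(k)+\sum_{k\ge1,\, i+k\le n}\psi(k)\le 2\cdot\frac12=1$. Hence $\bC_{\text{row}}\le 1$, and Assumption~\ref{ass-HT} holds as soon as $t<\log 2$. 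Since the support of $\psi$ is in $\{1,\dots,K_n\}$, a row sum equals exactly $1$ whenever $K_n< i\le n-K_n$. It follows that the defect vector $\delta=1_n-S1_n$ has entries in $[0,1/2]$ and at most $2K_n$ nonzero entries, so $\|\delta\|_1\le K_n$. The matrix is thus doubly stochastic except in $O(K_n)=o(n)$ rows. By Theorem~\ref{th-F}, it is enough to prove $\bs F_n\to\log2+\EE\log\cosh(\sqrt q\xi+h)+\frac t4(1-q/t)^2$, where $q=tg(q)$.

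Next, compare $q^{(n)}$ with the constant vector $q1_n$. Set $e=q^{(n)}-q1_n$. Subtracting $q1_n=tS1_n g(q)+t\delta g(q)$ from~\eqref{eq-q} gives
\[
e = tS\bigl(g(q^{(n)})-g(q)1_n\bigr) - t g(q)\,\delta .
\]
Because $S$ is symmetric, its column sums equal its row sums and are $\le1$, so $\|Sx\|_1\le\|x\|_1$. Let $L$ be a Lipschitz constant of $g$ on $[0,\log 2]$, an interval containing every $q^{(n)}_i$ and $q$ by Lemma~\ref{lm-q}. Then
\[
\|e\|_1\le tL\|e\|_1+t g(q)K_n .
\]
The crucial point is that $tL<1$. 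I expect this to be the same contraction estimate that underlies Lemma~\ref{lm-q}: there, $x\mapsto tSg(x)$ is presumably shown to be a sup-norm contraction with constant $t\bC_{\text{row}}\sup|g'|$, and the same constant works in $\ell^1$ by symmetry of $S$. I expect this to be the main obstacle: one must extract from the proof of Lemma~\ref{lm-q} a bound of the form $t\sup_{[0,\log2]}|g'|<1$ rather than a bound relying on a sup-norm-specific argument. If only the latter is available, I would fall back on a duality argument, or on a weighted $\ell^1$ norm built from $S$. Granting $tL<1$, we get $\|e\|_1/n\le C K_n/n\to0$.

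Finally, pass to the limit in each term of $\bs F_n$. The map $x\mapsto\EE\log\cosh(\sqrt x\xi+h)$ has derivative $(1-g(x))/2\in[0,1/2]$ by Gaussian integration by parts. Hence the middle term of $\bs F_n$ differs from $\EE\log\cosh(\sqrt q\xi+h)$ by at most $\|e\|_1/(2n)\to0$. For the quadratic term, put $u=1_n-g(q^{(n)})$ and $v=(1-g(q))1_n$. Using the symmetry of $S$,
\[
|u^\T Su-v^\T Sv|=|(u-v)^\T S(u+v)|\le \|u-v\|_1\,\|S(u+v)\|_\infty\le 2L\|e\|_1 ,
\]
since $\|u+v\|_\infty\le2$ and $\bC_{\text{row}}\le1$. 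Moreover $v^\T Sv=(1-g(q))^2(n-1_n^\T\delta)=(1-q/t)^2(n-O(K_n))$. Dividing by $n$ and multiplying by $t/4$, the quadratic term converges to $\frac t4(1-q/t)^2$. This proves the convergence of $\bs F_n$ and hence~\eqref{bisto}.
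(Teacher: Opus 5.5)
Your proof is correct, but it takes a different route from the paper's.

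\textbf{The paper's route.} The paper never looks at the fixed-point system for the banded matrix. It completes $S$ into the circulant matrix $\Scirc$, which is exactly doubly stochastic. It builds both disorders from the same GOE matrix, so that $\widetilde H(\sigma)=H(\sigma)+E(\sigma)$, where $E$ is a Gaussian field independent of $H$, living on the corner entries where $S$ vanishes, with $\EE E(\sigma)^2\le tK_n$. Two applications of Jensen's inequality then give $0\le \widetilde F_n-F_n\le tK_n/(2n)$, and Corollary~\ref{cor-bisto} applied to $\widetilde F_n$ concludes.

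\textbf{Your route.} You apply Theorem~\ref{th-F} to $S$ itself. You then carry out a deterministic $\ell^1$ perturbation analysis of $q^{(n)}$ and of each term of $\bs F_n$, in terms of the row-sum defect $\delta=1_n-S1_n$.

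\textbf{What each buys.} Your argument is more general. It proves the SK limit for every symmetric profile satisfying Assumption~\ref{ass-K} with $\bC_{\text{row}}\le 1$ and $\|1_n-S1_n\|_1=o(n)$, with no need to exhibit a doubly stochastic comparison profile. The paper's comparison $|F_n-\widetilde F_n|\le tK_n/(2n)$ is a soft free-energy estimate valid at every temperature and field. It therefore transfers any result known for doubly stochastic profiles, not only the high temperature formula, and it needs no knowledge of the form of $\bs F_n$.

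\textbf{Your flagged obstacle.} The step you single out as the main difficulty is not one. The proof of Lemma~\ref{lm-q} (and of Lemma~\ref{qcq}) states that $g$ is $1$-Lipschitz on $\RR_+$. Indeed, by Gaussian integration by parts, $g'(x)=\EE(1-T^2)(1-3T^2)\in[-1/3,1]$ with $T=\Tanh(\sqrt{x}\xi)$. This is a scalar property, so it applies coordinatewise in any norm. Hence $tL\le t<\log 2<1$, and $\|e\|_1\le t\,g(q)K_n/(1-t)$ follows directly.
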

The proof is provided in Appendix~\ref{prf-cor-band}. \\

We now tackle the approximation problem of $m^{(n)} = \EQ{\sigma}$ with the 
help of an AMP algorithm. To this end, we need to strengthen a bit 
Assumption~\ref{ass-K}. Keeping our sequence $K_n\to\infty$ with $K_n\leq n$, 
we set: 
\begin{assumption} The following facts hold true.
\label{ass-card}
\begin{itemize}
\item There exists a constant $ \bs C_s > 0$ such that
 $s_{ij}^{(n)} \leq \bs C_s K_n^{-1}$ for all $n$ and all $i,j \in [n]$.
\item There exists a constant $\Ccard > 0$ such that
\[
\forall n, \ \forall i \in [n], \ 
 \left| \left\{ j \in [n] \, : \, s_{ij}^{(n)} > 0 \right\} \right|
    \leq \Ccard K_n , 
\]
where $|\cdot|$ is the cardinality of a set.
\end{itemize}
\end{assumption}
Of course, Assumption~\ref{ass-card} implies Assumption~\ref{ass-K} with 
$\bC_{\text{row}} = \sup_n \rownorm{S^{(n)}}$ satisfying 
$\bC_{\text{row}} \leq \bs C_s \Ccard$. If $K_n \ll n$, 
Assumption~\ref{ass-card} models the sparse cases alluded to above. 

In the remainder, we denote as $\|\cdot\|$ the Euclidean norm of a vector or
the spectral norm of a matrix. We also write $\|\cdot\|_n =
\|\cdot\|/\sqrt{n}$. 

\begin{theorem}
\label{th-amp} 
Let Assumptions \ref{ass-card} and \ref{ass-HT} hold true. Assume that
$K_n \geq \log n$. Consider the iterates $(q^{(n),l})_{l\in\NN}$ defined in
the statement of Lemma~\ref{lm-q} starting with $q^{(n),0} = 0_n$. 
For each $l=0,1,\ldots$, let $X^{(n),l} \sim \cN(0, \diag(q^{(n),l}))$. 
Starting with $x^{(n),0} = 0$ and $x^{(n),1} = W^{(n)} \Tanh(0)$, 
consider the following iterative AMP algorithm in $l=1,\ldots$ 
\begin{align} 
x^{(n),l+1} &= W^{(n)} \Tanh\left(x^{(n),l} \right) 
 - \diag\left( t S^{(n)} 1_n - q^{(n),l+1} \right) 
  \Tanh\left(x^{(n),l-1} \right) \nonumber \\
 &= W^{(n)} \Tanh\left(x^{(n),l} \right) 
 - \diag\left( t S^{(n)} \EE\Tanh'\left(X^{(n),l}\right) \right) 
  \Tanh\left(x^{(n),l-1} \right) . 
\label{amp-x} 
\end{align} 
Then, the vector $m^{(n)} = \EQ{\sigma}$ satisfies 
\[
\lim_{k\to\infty} \limsup_n 
\EE \left\| m^{(n)} - \Tanh\left(x^{(n),k} \right) \right\|_n^2 = 0. 
\]
\end{theorem}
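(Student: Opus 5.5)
The plan follows the dynamical approach of Adhikari \emph{et al.}: compare the Gibbs mean with the AMP iterates by interpolating in the temperature parameter, and combine this with a state evolution result for AMP with a sparse variance profile. Write $m = m^{(n)}$ and $u^k = \Tanh(x^{(n),k})$. The argument has three ingredients.

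\textbf{(i) State evolution.} Under Assumption~\ref{ass-card} and $K_n \geq \log n$, the AMP iterates~\eqref{amp-x} satisfy a state evolution. For every fixed $k$ and every pseudo-Lipschitz test function, empirical averages over $i$ of functions of $(x^{(n),1}_i,\dots,x^{(n),k}_i)$ behave like averages of the same functions of a Gaussian vector. Its marginal at site $i$ and step $l$ has variance $q^{(n),l}_i$, and its cross-covariances are given by $t S^{(n)} \EE[\Tanh(X^{l})\Tanh(X^{l'})]$ with the obvious joint Gaussian coupling. I would prove this by the Bolthausen conditioning technique, or by a moment/tree-counting argument adapted to variance profiles as in the AMP literature for non-identically distributed matrices. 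Here $K_n \geq \log n$ and the row sparsity control the non-backtracking error terms, which are of order $\sqrt{\log n / K_n}$ per row.

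\textbf{(ii) Cauchy property of the iterates.} From state evolution and the contraction of Lemma~\ref{lm-q}, I would show that $\limsup_n \EE\|u^{k}-u^{k+1}\|_n^2 \to 0$ as $k\to\infty$. The covariance recursion $\rho^{l+1}_i = t\,(S^{(n)}\,\EE[\Tanh(X^l)\Tanh(Y^l)])_i$ is a contraction toward $q^{(n)}$ in the max norm, with constant $t\bC_{\text{row}}\sup|g'|$, which is less than $1$ on $[0,\log 2)$ by the argument of Lemma~\ref{lm-q}. Hence $\rho^{k}$ and $q^{k}$ both converge to $q^{(n)}$, uniformly in $n$.

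\textbf{(iii) Gibbs mean is close to AMP.} This is the main obstacle. I would follow Adhikari \emph{et al.}: consider the Gibbs measure at inverse parameter $s\in[0,t]$, coupled through $W_s = \sqrt{s/t}\,W + \sqrt{(t-s)/t}\,W'$ or through a Brownian matrix $W(s)$ with covariance profile $sS$. Then track, via It\^o's formula, the quantity $\EE\|m(s) - \Tanh(x^{k}(s))\|_n^2$, with the AMP recomputed at level $s$. The drift terms involve $\EQ{\sigma}$, the two-replica overlaps weighted by $S$, and the TAP correction. The high temperature condition enters through an overlap concentration estimate, $\EE\langle (\sigma^1-m)^\T S (\sigma^2-m)\rangle/n \to 0$, uniformly for $s< \log 2/\bC_{\text{row}}$. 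I would derive this from Theorem~\ref{th-F}, using the convexity of $F_n$ in an added perturbation and the fact that the derivative of $\bs F_n$ in $s$ matches the replica-symmetric prediction. This yields the approximate TAP equations $m \approx \Tanh\bigl(Wm - \diag(tS(1_n-m^2))m\bigr)$ in $\|\cdot\|_n$. Combining them with (i) and (ii), the error $\|m-u^k\|_n$ satisfies a Gronwall-type inequality in $s$ whose forcing terms vanish as $n\to\infty$ and then $k\to\infty$. The key difficulty is obtaining this overlap concentration with a matrix-weighted overlap rather than the scalar SK overlap, because $S$ may be indefinite. I would first try to control it by the row-norm bound $\bC_{\text{row}}$ together with a cavity argument at each site, where the cavity field has variance $(tS\,\cdot)_i$.
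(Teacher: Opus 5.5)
There is a genuine gap, and it sits in step (iii), which carries the whole content of the theorem. Steps (i)--(ii), joint state evolution and convergence of the AMP iterates, are plausible but are not where the difficulty lies; the paper never needs (ii).

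First, the overlap estimate you single out is vacuous as written. Under $\EQ{\cdot}$ the replicas $\sigma^1,\sigma^2$ are independent with mean $m$, so $\EQ{(\sigma^1-m)^\T S(\sigma^2-m)}=0$ identically. The meaningful, vector-valued statement $\EE\EQ{\|R_{12}-q\|_n^2}\to 0$ is already Lemma~\ref{lm-R12}, obtained from the covariance bound of Lemma~\ref{lm-var} via Lemma~\ref{preTAP}. Deriving it from Theorem~\ref{th-F} by convexity is backwards, since the paper proves Theorem~\ref{th-F} from Lemma~\ref{lm-R12}. Moreover, differentiating the free energy in $t$ only yields the scalar $n^{-1}\EE\EQ{(\sigma^1\sigma^2)^\T S(\sigma^1\sigma^2)}$, which for indefinite $S$ does not control $\|tS(\sigma^1\sigma^2)-q\|_n$.

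Second, and more seriously, approximate TAP equations for $m$ plus convergence of AMP do not by themselves give $m\approx\Tanh(x^k)$. You would need a uniqueness or stability property of approximate TAP solutions in $\|\cdot\|_n$. That is exactly the hard point in the SK literature, and your It\^o--Gronwall scheme does not supply it. Differentiating $\Tanh(x^k(s))$ in the Brownian entries produces the Jacobian of the $k$-step AMP map with respect to $W$. This would have to be matched against Gibbs correlations such as $\partial_{W_{jl}}m_i=\EQ{\sigma_i\sigma_j\sigma_l}-m_i\EQ{\sigma_j\sigma_l}$. Nothing in the sketch shows that the resulting drift and quadratic-variation terms are bounded by $C\|m(s)-\Tanh(x^k(s))\|_n^2$ plus forcing that vanishes.

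The missing idea is the one the paper uses, following Chen and Tang: work with cavity means instead of $m$ itself. The pre-TAP bound \eqref{preTAP(A)} reads $m_{(A),i}\approx\Tanh([Wm_{(A\cup\{i\})}]_i)$, with error $C/K_n$ for every cavity set of fixed size. It carries no Onsager term, because $m_{(A\cup\{i\})}$ is independent of row $i$ of $W$.

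\begin{itemize}
\item Unrolling this bound $k$ times along paths of distinct indices gives $m_i\approx\Tanh(y^k_i)$ (Lemma~\ref{v-m}).
\item By independence along paths and $\rownorm{tS}<\log 2$, replacing the unknown deep-cavity initialization by $\Tanh(0)$ costs at most $4(\log 2)^k$ (Lemma~\ref{u-v}). This is where high temperature and the limit $k\to\infty$ enter, with no TAP uniqueness and no convergence of AMP required.
\item The resulting path-independent iteration is compared with the actual AMP iterates in three moves: polynomial approximation of $\Tanh$; the non-backtracking tree expansion with the counting Lemma~\ref{cA}, which uses Assumption~\ref{ass-card}; and the operator-norm bound on $W$, which uses $K_n\geq\log n$.
\end{itemize}

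The Onsager term thus emerges from this comparison, not from a TAP equation satisfied by $m$.
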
 

\section{Proofs} 

In the remainder, we omit the superscript $^{(n)}$ from the notations unless
when useful. We denote as $C > 0$ a constant that depends on
$\bC_{\text{row}}$, $\bs C_s$, $\bs C_{\text{card}}$ and $t$ at most, and that 
can change from a display to another. 

We shall approximate the free energy at high temperature via the so-called
Guerra's interpolation of the Hamiltonian. In our setting, this gives the
following scheme.  Let 
\[
\eta = \begin{bmatrix} \eta_i \end{bmatrix}_{i=1}^n 
  \sim \cN\left(0, \diag(q)\right)
\]
be independent with $W$, where we recall that $q$ is the unique solution of the
system $q = tSg(q)$ as shown by Lemma~\ref{lm-q}. Given $u\in[0,1]$, define 
the Hamiltonian on the space of spins $\Sigma_n$ as 
\[
H_u(\sigma) = \frac{\sqrt{u}}{2} \sigma^\T W \sigma + 
  h \ps{\sigma}{1} + \sqrt{1-u} \ps{\sigma}{\eta} , 
\]
and consider the Gibbs measure $G_u$ defined on $\Sigma_n$ as 
$G_u(\sigma) = \exp(H_u(\sigma)) / Z_u$ where 
$Z_u = \sum_\sigma \exp(H_u(\sigma))$ is the partition function. Our Gibbs 
measure of interest is of course $G_1$, and its free energy is 
$F_n = n^{-1} \EE\log Z_1$. 
The Hamiltonian $H_u$ is an interpolation between the Hamiltonian of interest 
$H_1$ and the Hamiltonian $H_0$ which free energy has a tractable expression. 

To pursue, we introduce some notations.  Given a set of indices $A\subset[n]$,
we write $\Ac = [n]\setminus A$.  Given a vector $x = [ x_i ]\in \RR^n$ we
denote respectively  $x_A\in\RR^n$ the vector $x$ which elements $x_i$ are set
to zero when $i\in\Ac$.  

We denote as $H_{u,(A)}$ the reduced Hamiltonian obtained by removing the spins
$\{ \sigma_i \}_{i\in A}$ from the system. More precisely, this Hamiltonian is
written as 
\[
H_{u,(A)}(\sigma_{\Ac}) = 
 \frac{\sqrt{u}}{2} \sigma_{\Ac}^\T W \sigma_{\Ac} + 
 h \ps{\sigma_{\Ac}}{1_n} + \sqrt{1-u} \ps{\sigma_{\Ac}}{\eta_{\Ac}} , 
\]
and is considered as a Hamiltonian on $\{-1,+1\}^{|\Ac|}$. We denote as 
$G_{u,(A)}(\sigma_{\Ac}) \propto \exp( H_{u,(A)}(\sigma_{\Ac}))$
the Gibbs probability measure on $\{-1,+1\}^{|\Ac|}$ which Hamiltonian is
$H_{u,(A)}(\sigma_{\Ac})$. For $i\in\Ac$, we also denote as $m_{(A),i}$ the 
mean of the spin $\sigma_i$ with respect to $G_{u,(A)}$. Conventionally,
we write $m_{(A),i} = 0$ when $i\in A$ so that we can define the vector
$m_{(A)} = [ m_{(A),i}]_{i\in[n]} \in \RR^n$. 
When $A = \{ i_1, \ldots, i_k \}$, we sometimes write 
$m_{(i_1,\ldots,i_k)} = [ m_{(i_1,\ldots,i_k),i} ]_i$ for the vector  
$m_{(A)} = [ m_{(A),i} ]_i$. Of course, $m_{(\emptyset)} = m = [m_i]$.

\paragraph{Proof idea.} 
Our starting point will be the approach of Adikhari \emph{et.al.}~in
\cite{adh-bre-vSo-yau-21}, where the SK case with $u=1$ was
considered.  This approach falls within a research axis that dates back up to
our knowledge to the work of Comets and Neveu \cite{com-nev-95}, and that 
considers the SK model from a stochastic calculus perspective.  Denoting as 
$\EQ{\cdot}_u$ the mean with respect to the measure $G_u^{\otimes\infty}$, the 
first step is to show that the covariances 
$m_{ij} = \EQ{(\sigma_i - \EQ{\sigma_i}_u) (\sigma_j - \EQ{\sigma_j}_u)}_u$ 
for $i\neq j$ satisfy $\EE m_{ij}^2 \sim 1/K_n$. This is shown in 
Lemma~\ref{lm-var} below, which is a straightforward adaptation of 
\cite{adh-bre-vSo-yau-21} to the variance profile case of interest in this 
paper. 

Using this result, the ``pre-TAP'' bound 
\begin{equation} 
\label{bndpreTAP} 
\EE \left( m_i - \Tanh\left( 
 \sqrt{u} \sum_{k} W_{ik} m_{(i),k} + \sqrt{1-u} \eta_i \right) 
  \right)^2 \leq \frac{C}{K_n}, 
\end{equation} 
as well as the bound 
\[
\EE \left( m_l - m_{(i), l} \right)^2 \leq \frac{C}{K_n} , \quad l\neq i 
\]
can be obtained. These bounds are generalizations to our model of quite  
well-known results in the SK literature, see, 
\emph{e.g.}, \cite[Lemma 1.7.4]{tal-livre11-t1}. Here, they will serve two 
purposes. 

First, defining the random vector 
\[
R_{12} = \begin{bmatrix} R_{12}(i) \end{bmatrix}_{i=1}^n 
 = t S (\sigma^1 \sigma^2) ,
\]
where $\sigma^1\sigma^2$ is the vector obtained by an elementwise product of
the elements of the replicas $\sigma^1$ and $\sigma^2$, it can be deduced from
these bounds that $\EE\EQ{\|R_{12} - q\|_n^2}_u \leq C K_n^{-1/2}$
(Lemma~\ref{lm-R12} below). Employing the usual Guerra's interpolation trick in
order to compute the free energy of $G_1$, we can see that thanks to the bound
$\EE\EQ{\|R_{12} - q\|_n^2}_u \leq C K_n^{-1/2}$, the ``annoying'' term
obtained through this interpolation is negligible, which leads to
Theorem~\ref{th-F}. We note here that in the references cited above which deal
with the multi-species model, it is assumed that $S$ is a non-negative matrix
specifically to force this term to be non-positive. 

Second, the bound \eqref{bndpreTAP} taken for $u=1$ leads to the
construction~\eqref{mi1} for approximating $m_i$, as was done in Chen and Tang
in~\cite{che-tan-21} for the SK model. Starting from this construction, we
shall devise a series of approximations of the vector $m = [m_i]$ that will
ultimately lead to the AMP approximation given by Theorem~\ref{th-amp}. These
approximations will be based on the approach of Bayati
\emph{et.al.}~in \cite{bay-lel-mon-15}, devoted to the classical AMP algorithm,
which was generalized to the AMP algorithm with an interaction matrix with a
variance profile in~\cite{hac-24}. 

\subsection{Adapting the approach of \cite{adh-bre-vSo-yau-21} to the
SK model with a variance profile} 

We need to introduce some more notations. For a set  $A \subset [n]$, we need 
to work on the conditional interpolated Gibbs measure $G_u(\cdot | \sigma_A)$ 
given $\sigma_A$, which is the measure on $\{-1,+1\}^{|\Ac|}$ with the 
Hamiltonian $\sigma_{\Ac} \mapsto H^{[A]}_u(\sigma_{\Ac})(\sigma_A)$ given as 
\[
H^{[A]}_u(\sigma_{\Ac})(\sigma_A) = 
 \frac{\sqrt{u}}{2} \sigma_{\Ac}^\T W \sigma_{\Ac} + 
 h \ps{\sigma_{\Ac}}{1} + \sqrt{1-u} \ps{\sigma_{\Ac}}{\eta_{\Ac}} 
 + \sqrt{u} \sigma_{\Ac}^\T W \sigma_{A} . 
\]  
Given $i,j\in \Ac$, we denote as $m_i^{[A]}(\sigma_A)$ and
$m_{ij}^{[A]}(\sigma_A)$ the mean of $\sigma_i$ and the covariance of
$\sigma_i$ and $\sigma_j$ with respect to the conditional probability
$G_u(\cdot | \sigma_A)$.  We also write 
$m_i^{[i_1,\ldots,i_k]}$ and $m_{ij}^{[i_1,\ldots,i_k]}$ for 
$m_i^{[A]}$ and $m_{ij}^{[A]}$ respectively when 
$A = \{i_1,\ldots, i_k \}$. 
For $i\in A$ and a real function $f(\sigma_A)$, we
also define the functions $\sigma_{A\setminus\{i\}} \mapsto \delta_i
f(\sigma_{A\setminus\{i\}})$ and $\sigma_{A\setminus\{i\}} \mapsto
\varepsilon_i f(\sigma_{A\setminus\{i\}})$ as 
\[
\delta_i f(\sigma_{A\setminus\{i\}}) = 
  \frac 12\left( f(\sigma_{A})_{| \sigma_i = 1}  
  - f(\sigma_{A})_{| \sigma_i = -1} \right) , 
\]
and 
\[
\varepsilon_i f(\sigma_{A\setminus\{i\}}) = 
  \frac 12\left( f(\sigma_{A})_{| \sigma_i = 1}  
  + f(\sigma_{A})_{| \sigma_i = -1} \right) . 
\]
The following key identity can be obtained by direct calculation and is 
provided in \cite[Eq.~(3.1)]{adh-bre-vSo-yau-21}:  
\begin{equation}
\label{mij-mi} 
m_{ij}^{[A]} = \left( 1 - \left( m_i^{[A]}\right)^2\right) 
  \delta_i m_j^{[A\cup\{i\}]} 
\end{equation} 
for $i,j\in \Ac$ with $i\neq j$. 
Following \cite{adh-bre-vSo-yau-21}, we consider $\sqrt{u} W_{ij}$ as the
value at $tu$ of the process $ \sqrt{s_{ij}} B_{ij}(v)$ where $B_{ij}$ is a
standard Brownian Motion. Given a set $A \subset [n]$ and an index 
$i \in \Ac$, it is possible to obtain a characterization of 
$\delta_i m_j^{[A\cup \{i\}]}(\sigma_A)$ with the help of Itô's lemma. 
Similarly to \cite[Eq.~(3.3)]{adh-bre-vSo-yau-21}, we obtain by this lemma 
\begin{align*} 
\delta_i m_j^{[A\cup \{i\}]}(\sigma_A) &= 
 \sum_{k\not\in A\cup \{i\}} 
  \sqrt{s_{ik}} 
  \int_0^{tu} \varepsilon_i m_{kj}^{[A\cup \{i\}]}(\sigma_A)(v) dB_{ik}(v)  \\
 &\phantom{=} - \sum_{k\not\in A\cup \{i\}} s_{ik} 
  \int_0^{tu} 
 \delta_i \left( m_{k}^{[A\cup \{i\}]} m_{kj}^{[A\cup \{i\}]}\right)(\sigma_A)(v) dv .
\end{align*} 
(here, $\varepsilon_i m_{kj}^{[A\cup \{i\}]}(\sigma_A)(v)$ is of course the 
value of $\varepsilon_i m_{kj}^{[A\cup \{i\}]}(\sigma_A)$ for which 
$\sqrt{u} W_{ij}$ in the Hamiltonian is replaced with 
$\sqrt{s_{ij}} B_{ij}(v)$, and similarly for the second integrand). 

This Itô characterization of $\delta_i m_j^{[A\cup \{i\}]}$ together with the
identity~\eqref{mij-mi} lie at the basis of proof of the following result,
which is an adaptation of \cite[Lemma 3.1]{adh-bre-vSo-yau-21} to our
situation.  For completeness, we provide this proof in
Appendix~\ref{prf-lm-var}. 
\begin{lemma}
\label{lm-var} 
For $t\in[0, \log 2/\bC_{\text{row}})$, there exists a constant $C$ such that
\[
 \EE m_{ij}^2 \leq \frac{C}{K_n} \quad \text{for all } i\neq j . 
\]
\end{lemma}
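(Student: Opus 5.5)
The plan is to adapt the induction-on-removed-spins argument of \cite[Lemma 3.1]{adh-bre-vSo-yau-21}. We prove a statement uniform over conditioning sets and boundary spins. For $k\ge 0$, set
\[
a_k = \sup_{|A|= k}\ \sup_{i\ne j\notin A}\ \sup_{\sigma_A}\ \EE \left(\delta_i m_j^{[A\cup\{i\}]}(\sigma_A)\right)^2 / s_{ij}' ,
\]
where the normalization $s'_{ij}$ is chosen so that the target bound is of order $1/K_n$; the natural choice is to compare with $C K_n^{-1}$ directly. By \eqref{mij-mi} and $|1-m_i^2|\le 1$, it suffices to show $\EE(\delta_i m_j^{[i]})^2\le C/K_n$ for $A=\emptyset$. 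Because the induction on $|A|$ never closes by itself, we instead use the Itô representation of $\delta_i m_j^{[A\cup\{i\}]}$ given before the lemma. We apply the Itô isometry to the martingale part and Cauchy--Schwarz to the drift part, and use that at time $v=0$ the value $\delta_i m_j^{[A\cup\{i\}]}$ vanishes unless the coupling $W_{ij}$ enters. More precisely, the $B_{ij}$ term contributes $\sqrt{s_{ij}}\int \varepsilon_i(1-m_j^2)\,dB_{ij}$, which gives the $C/K_n$ source term since $s_{ij}\le \bC_s K_n^{-1}$.

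Next we bound the integrands. We have $\varepsilon_i m_{kj}^{[A\cup\{i\}]}$, and by \eqref{mij-mi} again $m_{kj}^{[A\cup\{i\}]} = (1-(m_k)^2)\,\delta_k m_j^{[A\cup\{i,k\}]}$, so its second moment is bounded by the same type of quantity with one more removed spin. Similarly, $\delta_i(m_k m_{kj})$ expands into terms involving $\delta_i m_k$ and $\delta_k m_j$. Writing $\phi_A(v)=\sup \EE(\delta_i m_j^{[A\cup\{i\}]}(v))^2$ (sup over $i,j$, $\sigma_A$, and sets of size $|A|$), we obtain the Grönwall-type recursion
\[
\phi_{k}(w) \le \frac{C}{K_n} + \bC_{\text{row}}\int_0^{w}\phi_{k+1}(v)\,dv\,(1+\text{drift factor}) ,
\]
where the row-sum norm arises from $\sum_k s_{ik}\le \bC_{\text{row}}$. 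For the drift we use $(\sum_k s_{ik} x_k)^2\le \bC_{\text{row}}\sum_k s_{ik}x_k^2$. The trivial bound $\phi\le 1$ (since $|\delta_i m_j|\le 1$) controls the tail.

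We then iterate the recursion $L$ times. This gives
\[
\phi_0(tu)\le \frac{C}{K_n}\sum_{\ell<L}\frac{(c\,t)^\ell}{\ell!}\cdot(\cdots)+\frac{(c\, t)^L}{L!}.
\]
The main obstacle is that naive iteration produces factors growing like $2^\ell$, because $\delta_i$ of a product splits into two terms and the $\sup$ over $\sigma_A$ doubles the configurations. This is exactly where the condition $t<\log 2/\bC_{\text{row}}$ enters. Following \cite{adh-bre-vSo-yau-21}, we would not use the crude exponential series but track a generating quantity $\sum_\ell 2^{-\ell}\phi_\ell$ (or, equivalently, weight the sup by $e^{-\lambda |A|}$). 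The goal is to reach a linear differential inequality whose solution stays bounded exactly when $\bC_{\text{row}} t<\log 2$. I would first attempt to reproduce their weighting verbatim, replacing the factor $t$ by $t\,\bC_{\text{row}}$ everywhere a sum $\sum_k s_{ik}$ appears, and the $1/n$ source term by $\max_{ij}s_{ij}\le \bC_s/K_n$. Taking $A=\emptyset$, $u=1$ (or general $u$, uniformly) then yields $\EE m_{ij}^2\le C/K_n$.
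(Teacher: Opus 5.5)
You have the paper's skeleton: reduce via \eqref{mij-mi} to $\EE(\delta_i m_j^{[i]})^2$, use the Itô representation in the row-$i$ couplings, and let the $k=j$ terms supply the source $O(s_{ij})=O(K_n^{-1})$. You then rewrite $m_{kj}^{[i]}=(1-(m_k^{[i]})^2)\,\delta_k m_j^{[i,k]}$ to pass to one more removed spin, take a max over the frozen spin, iterate, and use $|\delta m|\le 1$ at the bottom.

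The gap is the one step that decides the temperature range. You defer it to an unspecified ``weighting'', and the mechanism you give for $\log 2$ is not the real one.

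There is no $2^\ell$ growth. The split of $\delta_i(m_k m_{kj})$ and the two values of $\sigma_i$ are absorbed by the pointwise bound $(\varepsilon_i x)^2+(\delta_i(yx))^2\le \frac12\sum_{\sigma_i=\pm1}x^2(1+y^2)$, whose expectation is at most $\max_{\sigma_i}\EE\, x^2(1+y^2)$. Together with $(1+m_k^2)(1-m_k^2)^2\le 1$, this merges both integrands into the single quantity $\max_{\sigma_i}\EE(\delta_k m_j^{[i,k]})^2$.

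Nor is there a $1/\ell!$ in the iterated bound. Each newly removed spin brings its own row of couplings, interpolated over the whole of $[0,t]$ while the earlier rows stay frozen. So the iterated integrals run over cubes $[0,t]^\ell$, not simplices. With a factorial the argument would work at every temperature, which cannot be right.

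In the paper, $\log 2$ comes from applying Itô's formula to the \emph{square} $(\delta_i m_j^{[i]})^2$ and bounding the cross term $2\,\delta_i m_j^{[i]}\,\delta_i(m_k^{[i]} m_{kj}^{[i]})$ by $(\delta_i m_j^{[i]})^2+(\delta_i(m_k^{[i]} m_{kj}^{[i]}))^2$. This creates a self term:
$\EE(\delta_i m_j^{[i]})^2\le 3s_{ij}t+\bC_{\text{row}}\int_0^t\psi(v)\,dv+\bC_{\text{row}}\int_0^t\EE(\delta_i m_j^{[i]}(v))^2dv$, with $\psi$ the next-level quantity. Grönwall turns this into $6s_{ij}t+\bC_{\text{row}}\int_0^t e^{\bC_{\text{row}}(t-v)}\psi(v)\,dv$. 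Each level therefore costs the factor $\bC_{\text{row}}\int_0^t e^{\bC_{\text{row}}(t-v)}dv=e^{\bC_{\text{row}}t}-1$. Iterating to depth $n-2$ gives $\frac{6\bs C_s t}{K_n}\sum_{\ell}(e^{\bC_{\text{row}}t}-1)^\ell+(e^{\bC_{\text{row}}t}-1)^{n-2}$, which is $O(K_n^{-1})$ precisely when $e^{\bC_{\text{row}}t}-1<1$, i.e.\ $t<\log 2/\bC_{\text{row}}$.

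The tools you actually name do not reach this range. Itô isometry on the martingale part, Cauchy--Schwarz in time on the drift, and an optimized splitting $\EE(M+D)^2\le(1+\epsilon)\EE M^2+(1+\epsilon^{-1})\EE D^2$ give a per-level factor of about $x+x^2$ with $x=\bC_{\text{row}}t$. It is worse, $(\sqrt{x}+x)^2$, if the two integrands are bounded separately. Either way it is below $1$ only for $x<(\sqrt5-1)/2\approx 0.62$, short of $\log 2\approx 0.69$. No reweighting of the levels can rescue a per-level ratio that is not below $1$.
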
  

With the help of the previous lemma, we obtain the following result by 
a straightforward adaptation of the proof of 
\cite[Lemma 4.1]{adh-bre-vSo-yau-21}:
\begin{lemma}
\label{preTAP} 
For $t\in[0, \log 2/\bC_{\text{row}})$, there exists a constant $C$ such that 
\[
\EE \left( m_i - \Tanh\left( 
 \sqrt{u} \sum_{k\neq i} W_{ik} m_{(i),k} + \sqrt{1-u} \eta_i \right) 
  \right)^2 \leq \frac{C}{K_n}, 
\]
and 
\[
\EE \left( m_l - m_{(i), l} \right)^2 \leq \frac{C}{K_n} 
\]
for each $i \neq l \in [n]$. 
\end{lemma}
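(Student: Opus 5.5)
We plan to follow the cavity approach of \cite[Lemma 4.1]{adh-bre-vSo-yau-21}, isolating the spin $\sigma_i$ and expressing both $m_i$ and $m_l$ through the reduced Gibbs measure $G_{u,(i)}$. Writing $H_u(\sigma) = H_{u,(i)}(\sigma_{\{i\}^{\text{c}}}) + \sigma_i\bigl(\sqrt u \sum_{k\neq i} W_{ik}\sigma_k + h + \sqrt{1-u}\,\eta_i\bigr)$, we get the exact representations
\[
m_i = \frac{\EQ{\sinh(Y_i+h)}_{(i)}}{\EQ{\cosh(Y_i+h)}_{(i)}}, \qquad
m_l = \frac{\EQ{\sigma_l\cosh(Y_i+h)}_{(i)} + \EQ{\sigma_l\sinh(Y_i+h)}_{(i)}\,\cdot\,(\text{coupling})}{\cdots},
\]
with $Y_i = \sqrt u \sum_k W_{ik}\sigma_k + \sqrt{1-u}\,\eta_i$ and $\EQ{\cdot}_{(i)}$ the mean under $G_{u,(i)}$; the key point is that the row $(W_{ik})_k$ and $\eta_i$ are independent of $G_{u,(i)}$. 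Alternatively, and more in the spirit of the Itô machinery above, we can use the conditional measure: $m_i = \Tanh$-type functional of the law of $\sigma_{\{i\}^{\text{c}}}$, while $m_l - m_{(i),l}$ is controlled by $\delta_i m_l^{[i]}$ and the identity \eqref{mij-mi}.

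For the first bound, we write $Y_i = \mu_i + Z_i$ where $\mu_i = \sqrt u\sum_k W_{ik} m_{(i),k} + \sqrt{1-u}\,\eta_i$ and $Z_i = \sqrt u\sum_k W_{ik}(\sigma_k - m_{(i),k})$ is the fluctuation under $G_{u,(i)}$. Conditionally on $G_{u,(i)}$, the variance in $W$ of $\EQ{Z_i^2}_{(i)}$ has mean $tu\sum_k s_{ik}\EQ{(\sigma_k-m_{(i),k})^2}_{(i)}$ plus off-diagonal terms $tu\sum_{k\neq k'}$ which vanish in $W$-expectation; the real content is that $Z_i$ is approximately Gaussian with a variance $\gamma_i$ that is almost deterministic given the cavity measure, so that $\EQ{e^{\pm Z_i}}_{(i)} \approx e^{\gamma_i/2}$ identically in the sign, and the ratio reduces to $\tanh(\mu_i + h)$. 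To quantify this we would expand $\EQ{\cosh(\mu_i+h+Z_i)}_{(i)}$ and $\EQ{\sinh(\cdot)}_{(i)}$, and show that $\EE\bigl(\EQ{Z_i^2}_{(i)} - \gamma_i\bigr)^2$ and the corresponding higher-moment discrepancies are bounded by $C\sum_{k\neq k'} s_{ik}s_{ik'}\EE (m^{(i)}_{kk'})^2 + C\sum_k s_{ik}^2$. By Lemma~\ref{lm-var} applied to the system with $n-1$ spins (its hypotheses are inherited since $S$ restricted to $\{i\}^{\text{c}}$ still satisfies Assumption~\ref{ass-K} with the same constants), $\EE (m^{(i)}_{kk'})^2 \le C/K_n$, and together with $\sum_k s_{ik}\le \bC_{\text{row}}$ and $s_{ik}\le \bC_s/K_n$ this yields the $C/K_n$ rate. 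Since $\tanh$ is $1$-Lipschitz and the denominators are bounded below ($\cosh\ge1$ and Jensen), the ratio error transfers to the claimed bound.

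For the second bound, we use \eqref{mij-mi} with $A=\emptyset$: conditioning on $\sigma_i$, $m_l = \varepsilon_i m_l^{[i]} + m_i\,\delta_i m_l^{[i]}$, while $m_{(i),l}$ is the value of $m_l^{[i]}$ with the coupling row $W_{i\cdot}$ removed. The Itô representation above (at $A=\emptyset$) expresses $\delta_i m_l^{[i]}$ as a stochastic integral against $dB_{ik}$ weighted by $\sqrt{s_{ik}}\,\varepsilon_i m_{kl}^{[i]}$ plus a drift weighted by $s_{ik}$; its second moment is thus $\le C\sum_k s_{ik}\EE (m_{kl}^{[i]})^2 + C(\sum_k s_{ik}|\cdot|)^2$, which by the proof of Lemma~\ref{lm-var} (it already controls these conditional covariances uniformly in $\sigma_A$) is $\le C/K_n$. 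Similarly $\varepsilon_i m_l^{[i]} - m_{(i),l}$ is obtained by running the same Itô argument in the interpolation that switches on $\sqrt{s_{ik}}B_{ik}$, giving the same bound. The main obstacle is the first bound's concentration of $\EQ{e^{Z_i}}_{(i)}$: the off-diagonal covariance terms must be summed with weights $s_{ik}s_{ik'}$ and one needs fourth-order (not just second) control, so we expect to reuse the exponential-moment version of the Lemma~\ref{lm-var} argument, which is where the high-temperature condition $t<\log 2/\bC_{\text{row}}$ enters.
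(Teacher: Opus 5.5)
\paragraph{Verdict.} The second bound is fine, but the first bound has a genuine gap: its decisive estimate is never supplied.

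\paragraph{What works.} Your handling of the second bound is sound and fits the dynamical framework the paper relies on. You use $m_l=\varepsilon_i m_l^{[i]}+m_i\,\delta_i m_l^{[i]}$. The proof of Lemma~\ref{lm-var} gives $\EE(\delta_i m_l^{[i]})^2\le C/K_n$. It\^o along $v\mapsto\sqrt{s_{ik}}B_{ik}(v)$, with martingale integrand $\sqrt{s_{ik}}\,\delta_i m_{kl}^{[i]}$, drift $-s_{ik}\,\varepsilon_i(m_k^{[i]}m_{kl}^{[i]})$, and Gronwall, gives $\EE(\varepsilon_i m_l^{[i]}(v)-m_{(i),l})^2\le C/K_n$ uniformly in $v\in[0,tu]$.

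\paragraph{The gap in the first bound.} Write $T=\Tanh(\mu_i)$ and $\rho=\EQ{\sinh Z_i}_{(i)}/\EQ{\cosh Z_i}_{(i)}$. Your representation gives $m_i-T=\rho(1-T^2)/(1+T\rho)$, hence $|m_i-T|\le 2|\EQ{\sinh Z_i}_{(i)}|$. Everything therefore hinges on $\EE\EQ{\sinh Z_i}_{(i)}^2\le C/K_n$, and this is the step you do not provide.
\begin{itemize}
\item Concentration of $\EQ{Z_i^2}_{(i)}$ around $\gamma_i$, or of any even moment, says nothing about the odd part $\EQ{\sinh Z_i}_{(i)}$.
\item The odd moments $\EQ{Z_i^3}_{(i)},\dots$ involve higher-point correlations that Lemma~\ref{lm-var}, as stated, does not control.
\item The ``exponential-moment version of Lemma~\ref{lm-var}'' you plan to invoke is not available in the paper.
\end{itemize}
Your diagnosis that fourth-order control is needed is also off. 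Two-point decay suffices, which is why the paper can simply defer to \cite[Lemma 4.1]{adh-bre-vSo-yau-21}.

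\paragraph{Repair via It\^o.} The most economical fix reuses the It\^o machinery of your second step. Let $Z^{[i]}(\sigma_i)=\sum_{\sigma_{\{i\}^{\text{c}}}}\exp(H^{[i]}_u(\sigma_{\{i\}^{\text{c}}})(\sigma_i))$. Then exactly $m_i=\Tanh(\sqrt{1-u}\,\eta_i+\delta_i\log Z^{[i]})$, and $\delta_i\log Z^{[i]}=0$ at $v=0$. Since $\partial_{B_{ik}}\log Z^{[i]}=\sqrt{s_{ik}}\,\sigma_i m_k^{[i]}$ and $\partial^2_{B_{ik}}\log Z^{[i]}=s_{ik}(1-(m_k^{[i]})^2)$, It\^o's lemma gives
\[
\delta_i\log Z^{[i]}=\sum_{k\neq i}\sqrt{s_{ik}}\int_0^{tu}\varepsilon_i m_k^{[i]}(v)\,dB_{ik}(v)-\sum_{k\neq i} s_{ik}\int_0^{tu}\varepsilon_i m_k^{[i]}(v)\,\delta_i m_k^{[i]}(v)\,dv .
\]
Meanwhile $\sqrt u\sum_k W_{ik}m_{(i),k}=\sum_k\sqrt{s_{ik}}\int_0^{tu}m_{(i),k}\,dB_{ik}(v)$. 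Combine the It\^o isometry, $\sum_k s_{ik}\le\bC_{\text{row}}$, and the $1$-Lipschitz property of $\Tanh$. The first bound then follows from two estimates you already have: $\EE(\varepsilon_i m_k^{[i]}(v)-m_{(i),k})^2\le C/K_n$ and $\EE(\delta_i m_k^{[i]}(v))^2\le C/K_n$.

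\paragraph{Alternative repair within your cavity route.} Your cavity representation can also be closed with two-point decay alone. Average over the row $W_{i\cdot}$ with two replicas:
\[
\EE_{W_{i\cdot}}\EQ{\sinh Z_i}_{(i)}^2=\EQ{e^{(A_1+A_2)/2}\sinh c_{12}}_{(i)},
\]
where $A_a=tu\sum_k s_{ik}(\sigma^a_k-m_{(i),k})^2$ and $c_{12}=tu\sum_k s_{ik}(\sigma^1_k-m_{(i),k})(\sigma^2_k-m_{(i),k})$. The part linear in $c_{12}$ factorizes as $tu\sum_k s_{ik}\EQ{e^{A/2}(\sigma_k-m_{(i),k})}_{(i)}^2$. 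A first-order expansion of $e^{A/2}$ around $\EQ{A}_{(i)}$, together with $|\sinh c-c|\le Cc^2$ and weighted Cauchy--Schwarz, reduces every term to
\[
\sum_k s_{ik}^2+\sum_{k\neq k'}s_{ik}s_{ik'}\EE(m^{(i)}_{kk'})^2\le C/K_n .
\]
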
 

Building the random vector $R_{12} = tS (\sigma^2\sigma^2)$ from two 
i.i.d.~vectors $\sigma^1$ and $\sigma^2$ under $G_u$ (the so-called replicas), 
we can now use this result to show that $\EE\EQ{\| R_{12} - q \|^2_n}_u$ 
converges to zero uniformly in $u\in [0,1]$: 
\begin{lemma}
\label{lm-R12} 
For $t\in[0, \log 2/\bC_{\text{row}})$, there exists a constant $C$ such that 
\[
\EE\EQ{\left\| R_{12} - q \right\|^2_n}_u \leq \frac{C}{\sqrt{K_n}}. 
\]
\end{lemma}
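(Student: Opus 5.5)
Write $R_{12}(i) - q_i = \big(R_{12}(i) - tS(m^2)(i)\big) + \big(tS(m^2)(i) - q_i\big)$, where $m=\EQ{\sigma}_u$ and $m^2$ is the elementwise square. The first term is the Gibbs fluctuation of the overlap and the second is the deviation of the quenched means from the fixed point; I will bound each in $\EE\EQ{\|\cdot\|_n^2}_u$.

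\emph{Fluctuation term.} Since $\sigma^1,\sigma^2$ are i.i.d.\ under $G_u$, $\EQ{\sigma^1_k\sigma^2_k}_u = m_k^2$, and
\[
\EQ{(R_{12}(i) - tS(m^2)(i))^2}_u = t^2\sum_{k,l} s_{ik}s_{il}\left(\EQ{\sigma_k\sigma_l}_u^2 - m_k^2m_l^2\right) .
\]
For $k\neq l$, $\EQ{\sigma_k\sigma_l}_u^2-m_k^2m_l^2 = m_{kl}(m_{kl}+2m_km_l)$, whose absolute value is at most $3|m_{kl}|$. The diagonal terms are bounded by $\sum_k s_{ik}^2\le \bC_s\bC_{\text{row}}/K_n$. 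Taking $\EE$ and using Cauchy--Schwarz with Lemma~\ref{lm-var}, $\EE|m_{kl}|\le CK_n^{-1/2}$, gives a bound $C(\sum_k s_{ik})^2K_n^{-1/2} + C/K_n\le CK_n^{-1/2}$, uniformly in $i$. Averaging over $i$ then gives the required bound for this term.

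\emph{Mean term.} Here I want $\EE\|tS(m^2)-q\|_n^2\le CK_n^{-1/2}$. By Lemma~\ref{preTAP}, $m_i$ is close in $L^2$ (error $C/K_n$) to $\Tanh(Y_i)$, where $Y_i=\sqrt u\sum_k W_{ik}m_{(i),k}+\sqrt{1-u}\eta_i$. Conditionally on $W_{(i)}$ (the matrix without row/column $i$) and on $\eta_{\{i\}^c}$, $Y_i$ is centered Gaussian with variance $Q_i = u t\sum_k s_{ik}m_{(i),k}^2 + (1-u)q_i$, because $m_{(i)}$ is independent of the $W_{ik}$ and of $\eta_i$. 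Hence $\EE[\Tanh(Y_i)^2\mid\cdot] = g(Q_i)$. Now define $D_i = tS(m^2)(i)-q_i$. Using the second bound in Lemma~\ref{preTAP}, I replace $m_{(i),k}^2$ by $m_k^2$ inside $Q_i$ at a cost $C K_n^{-1/2}$ in $L^1$; after this replacement, $Q_i - q_i\approx u D_i$.

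Next I write $t\sum_k s_{ik}m_k^2 = t\sum_k s_{ik}g(Q_k) + t\sum_k s_{ik}(m_k^2 - g(Q_k))$. The second sum is a weighted average of centered-ish fluctuations. Its second moment involves the cross terms $\EE(m_k^2-g(Q_k))(m_l^2-g(Q_l))$ for $k\neq l$. These should be small by a decoupling argument: remove both spins $k,l$ and use that $Y_k$ and $Y_l$ are conditionally independent given $W_{(k,l)}$, up to $O(K_n^{-1/2})$ errors controlled by Lemma~\ref{preTAP} applied to the two-spin removed system, together with $s_{kl}\le\bC_s/K_n$. The diagonal terms contribute $\sum_k s_{ik}^2\le C/K_n$. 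Thus I get $D_i \approx t\sum_k s_{ik}(g(Q_k)-g(q_k))$ up to errors of order $K_n^{-1/4}$ in $L^2$.

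\emph{Contraction step.} The function $g$ is Lipschitz with constant $L$ on $\RR_+$, and the fixed-point argument of Lemma~\ref{lm-q} shows that $x\mapsto tSg(x)$ is a contraction in $\|\cdot\|_\infty$ under Assumption~\ref{ass-HT}. I would instead run the contraction in $\|\cdot\|_n$. Since $S$ is symmetric with row sums at most $\bC_{\text{row}}$, its spectral norm is at most $\bC_{\text{row}}$, so
\[
\|D\|_n\le t\bC_{\text{row}}L\,u\|D\|_n + \text{error} .
\]
Provided $t\bC_{\text{row}}L<1$, this closes and gives $\EE\|D\|_n^2\le CK_n^{-1/2}$.

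\textbf{Main obstacle.} I expect the hardest step to be this final contraction: checking that the Lipschitz constant of $g$ (at most $1$ near $0$, since $g'(x)=\EE[\Tanh''\Tanh+\Tanh'^2](\sqrt{x}\xi)$ by Gaussian integration by parts) times $t\bC_{\text{row}}<\log 2$ is less than $1$, as in the proof of Lemma~\ref{lm-q}. A close second is the off-diagonal decoupling of the $m_k^2-g(Q_k)$ fluctuations, which will need a two-spin-removal version of Lemma~\ref{preTAP}.
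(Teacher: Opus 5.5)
Your proof is correct. The Gibbs-fluctuation half is the same as the paper's: the variance of $R_{12}(i)$ is written through $m_{kl}(m_{kl}+2m_km_l)$ and bounded with Lemma~\ref{lm-var}. For the mean term $p=tSm^2$ you take a more direct route.

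\emph{The paper's route.} It works with moments. Writing $X_p=up+(1-u)q$, it shows $\EE p_i\approx[tS\EE g(X_p)]_i$ and $\EE p_i^2\approx\EE[tSg(X_p)]_i^2$ up to $C/\sqrt{K_n}$. It then uses these two moment identities with the Euclidean contraction of $x\mapsto tSg(x)$ to get concentration, $\EE\|p-\EE p\|_n^2\le C/\sqrt{K_n}$. A second, deterministic contraction then gives $\|\EE p-q\|_n\le CK_n^{-1/4}$.

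\emph{Your route.} You bound the random residual $p-tSg(X_p)$ directly in $L^2$. The off-diagonal terms $\EE(m_k^2-g(Q_k))(m_l^2-g(Q_l))$ are $O(K_n^{-1/2})$: after passing to the two-spin cavity fields, the two factors are conditionally independent and conditionally centered given $\mcF_{-(k,l)}$ and $\{\eta_r\}_{r\notin\{k,l\}}$. A single pathwise contraction $\|D\|_n\le t\bC_{\text{row}}\|D\|_n+\|E\|_n$ then finishes the argument.

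\emph{Comparison.} The ingredients are identical: Lemma~\ref{preTAP}; its cavity version $\EE(m_{(k),r}-m_{(k,l),r})^2\le C/K_n$, which the paper also uses when it replaces $\Tanh_{(i)}(i)$ by $\Tanh_{(i,j)}(i)$; and $\|S\|\le\rownorm{S}$. So the two-spin removal you anticipate needing is exactly what the paper uses. Your organization avoids the mean/variance split and the second fixed-point step. The paper's version is a near-verbatim transcription of the scalar argument of Adhikari \emph{et al.}, and it only needs expectations such as $\EE m_i^2m_j^2$, never mixed terms like $\EE\, m_k^2\, g(Q_l)$ (which your conditioning handles equally easily).

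\emph{Your flagged obstacle is not one.} Gaussian integration by parts gives $g'(x)=\EE[(1-T^2)(1-3T^2)]$ with $T=\Tanh(\sqrt{x}\xi)$. Since $(1-y)(1-3y)\in[-1/3,1]$ for $y\in[0,1]$, $g$ is $1$-Lipschitz on all of $\RR_+$; this is the fact invoked in the proof of Lemma~\ref{lm-q}. Your contraction constant is therefore $t\bC_{\text{row}}<\log 2<1$, with no proviso needed.
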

\begin{proof}
Writing $R_{12} = [ R_{12}(i) ]_{i=1}^n$, we first show that 
\begin{equation}
\label{R-<R>}
\EE\EQ{(R_{12}(i) - \EQ{R_{12}(i)}_u)^2}_u \leq C / \sqrt{K_n} .
\end{equation}
We write 
\begin{align*} 
\EE\EQ{(R_{12}(i) - \EQ{R_{12}(i)}_u)^2}_u &= 
\EE\EQ{R_{12}(i)^2}_u - \EE\EQ{R_{12}(i)}_u^2 \\ 
 &= \EE\sum_{j,\ell} t^2 s_{ij} s_{i\ell} 
  \EQ{\sigma_j^1\sigma_j^2\sigma_\ell^1 \sigma_\ell^2}_u 
 - \EE\left( \sum_j t s_{ij} \EQ{\sigma_j}_u^2 \right)^2 \\ 
&= \sum_{j\ell} t^2 s_{ij} s_{i\ell} \EE\left[ 
   \EQ{\sigma_j\sigma_\ell}_u^2 - \EQ{\sigma_j}_u^2 \EQ{\sigma_\ell}_u^2 
  \right] , 
\end{align*} 
where $\sigma^1=[\sigma^1_i]$ and $\sigma^2 = [\sigma^2_i]$. 
The contribution of the terms $j=\ell$ is bounded by $C/K_n$. Regarding the
terms $j\neq \ell$, we note that 
$\EQ{\sigma_j\sigma_\ell}_u^2 - \EQ{\sigma_j}_u^2 \EQ{\sigma_\ell}_u^2 
 = m_{j\ell}(m_{j\ell} + 2 m_j m_\ell)$ and we use Lemma~\ref{lm-var} to 
obtain~\eqref{R-<R>}, which leads to the inequality 
\[
\EE \EQ{\left\| R_{12} - \EQ{R_{12}}_u \right\|^2_n}_u \leq C/\sqrt{K_n}.
\]
To obtain the result of the lemma, it remains to prove that 
\begin{equation}
\label{<R>-q} 
\EE \left\| \EQ{R_{12}}_u - q \right\|^2_n \leq C/\sqrt{K_n} .
\end{equation} 
The proof of this result is just an adaptation of the proof 
of~\cite[Proposition 1.2]{adh-bre-vSo-yau-21} to our context. The main 
modifications are related with the fact that Equation~\eqref{eq-q} is no more
a scalar equation. 

We write $p = [p_i]_{i=1}^n =\EQ{R_{12}}_u = tS m^2$. Given a set 
$A = \{i_1,\ldots, i_k\}\subset[n]$ and an index $i\in A$, we write for brevity 
\[
\Tanh_{(i_1,\ldots,i_k)}(i) = 
  \Tanh\left( \sum_{l} \sqrt{u} W_{il} m_{(i_1,\ldots,i_k),l}  
  + \sqrt{1-u} \eta_i \right) . 
\]
Recall that $m_{(i_1,\ldots,i_k),l} = 0$ if $l\in A$, and notice that 
$\{ W_{il} \}_{l\not\in A}$ and $\{ m_{(i_1,\ldots,i_k),l} \}_l$ are 
independent, a fact that we shall used repeatidly in the proof without further 
mention. We also write 
\[
p_{(i_1,\ldots,i_k),i} = \sum_l t s_{il} m_{(i_1,\ldots,i_k),l}^2 
 = \begin{bmatrix} tS m_{(i_1,\ldots,i_k)}^2 \end{bmatrix}_i , 
\] 
to be compared with $p_i = \sum_l t s_{il} m_l^2$. 

Our first purpose is to show that 
\begin{equation}
\label{p} 
\left| \EE p_i -  [t S \EE g(u p + (1-u) q)]_i \right| \leq C / \sqrt{K_n}.
\end{equation}

Using Lemma \ref{preTAP}, we have that 
\[
\EE \left| m_i^2 - \Tanh_{(i),i}^2 \right| \leq 
 2 \EE \left| m_i - \Tanh_{(i),i} \right| \leq 
C / \sqrt{K_n} .
\]

Observing that the conditional distribution of the random variable $\sum_{l}
\sqrt{u} W_{il} m_{(i),l} + \sqrt{1-u} \eta_i$ with respect to the
$\sigma$--field $\mcF_{-i}$ generated by 
$\{ W_{k\ell} \, : \, k,l\neq i, k < l, \ \eta_j \, : \, j\neq i \}$ is 
$\cN(0, u p_{(i),i} + (1-u) q_i)$, we have 
\[ 
\EE \Tanh_{(i),i}^2 = 
 \EE\left[ \EE \left[ \Tanh_{(i),i}^2 \ | \ \mcF_{-i} \right]\right] 
 =  \EE g(u p_{(i),i} + (1-u) q_i).  
\]
By Lemma~\ref{preTAP}, we know that 
$\EE ( m_l - m_{(i),l} )^2 \leq C / K_n$ for $l\neq i$. Therefore, 
$\EE | m_l^2 - \EE m_{(i),l}^2 | \leq C / \sqrt{K_n}$, and then, 
$\EE | p_i - p_{(i),i} | \leq C / \sqrt{K_n}$. Since $g$ is Lipschitz (see
the proof of Lemma~\ref{lm-q}), we obtain that 
$| \EE g\left( u p_i + (1-u) q_i \right) - 
 \EE g\left( u p_{(i),i} + (1-u) q_i \right) | \leq C / \sqrt{K_n}$, and we
deduce from these bounds that 
\[
\left| \EE m_i^2 - \EE g\left( u p_i + (1-u) q_i \right) \right| \leq
  C / \sqrt{K_n}, 
\]
and the bound~\eqref{p} follows. 

Next, we show that 
\begin{equation}
\label{p2} 
\left| \EE p_i^2 - \EE [tS g(u p + (1-u) q)]_i^2 \right| \leq C / \sqrt{K_n}
\end{equation}
along the same principle. For $i\neq j$, we have by Lemma~\ref{preTAP} again 
that
\[
\EE| m_i^2 m_j^2 - \Tanh_{(i)}(i)^2\Tanh_{(j)}(j)^2 | \leq C/\sqrt{K_n}.
\]
We now need to replace $\Tanh_{(i)}(i)$ and $\Tanh_{(j)}(j)$ with 
$\Tanh_{(i,j)}(i)$ and $\Tanh_{(i,j)}(j)$ respectively. We have 
\begin{align*} 
\EE \left( \Tanh_{(i),i} - \Tanh_{(i,j),i} \right)^2 
 &\leq u \EE 
 \left( \sum_{r}  W_{ir} m_{(i),r} - \sum_{r}  W_{ir} m_{(i,j),r}   
 \right)^2  \\
&\leq 2u \sum_{r\neq j}  \EE W_{ir}^2 \EE( m_{(i),r} - m_{(i,j),r} )^2 
  + 2u \EE W_{ij}^2 m_{(i),j}^2 \\
&\leq C / K_n. 
\end{align*}
This implies that 
$\EE | m_i^2 m_j^2 - \EE \Tanh_{(i,j)}(i)^2 \Tanh_{(i,j)}(j)^2 | 
  \leq C/\sqrt{K_n}$. Denoting as $\mcF_{-(i,j)}$ the $\sigma$--field 
generated by $\{ W_{kl} \, : \, k,l \not\in \{i,j\}, \, k < l, 
   \ \eta_r \, : \, r \not\in \{i,j\} \}$, and writing as 
$\mcL(\cdot \ | \ \mcF_{-(i,j)})$ the conditional distribution with respect
to this $\sigma$--field, we have 
\begin{multline*} 
\mcL\left(
 \begin{bmatrix} 
   \sum_r \sqrt{u} W_{ir} m_{(i,j),r}  + \sqrt{1-u} \eta_i \\
   \sum_r \sqrt{u} W_{jr} m_{(i,j),r}  + \sqrt{1-u} \eta_j 
 \end{bmatrix} \ | \ \mcF_{-(i,j)} \right)  \\ 
 = \cN\left( 0 , 
  \begin{bmatrix} 
   u p_{(i,j),i} + (1-u) q_i & 0 \\ 0 & u p_{(i,j),j} + (1-u) q_j 
  \end{bmatrix} 
  \right) , 
\end{multline*} 
which shows that 
\[
 \EE \Tanh_{(i,j)}(i)^2 \Tanh_{(i,j)}(j)^2 =  
 \EE g(u p_{(i,j)}(i) + (1-u) q_i) g(u p_{(i,j)}(j) + (1-u) q_j).
\]
Similarly to above, we also have 
$\EE | p_{(i,j)}(i) - p_i| + \EE | p_{(i,j)}(j) - p_j|\leq C/\sqrt{K_n}$,
thus, 
\[
 \left| \EE \Tanh_{(i,j)}(i)^2 \Tanh_{(i,j)}(j)^2 - 
 \EE g(u p_i + (1-u) q_i) g(u p_j + (1-u) q_j) \right| \leq C/\sqrt{K_n} 
\]
since $g$ is Lipschitz and bounded. Gathering these bounds, we obtain that
\[
\left| \EE m_i^2 m_j^2 - 
 \EE g(u p_i + (1-u) q_i) g(u p_j + (1-u) q_j) \right| \leq C/\sqrt{K_n},  
\]
and since $p_i^2 = \sum_{k,\ell} s_{ik} s_{i\ell} m_k^2 m_\ell^2$, the bound
\eqref{p2} follows. 

From Inequalities~\eqref{p} and \eqref{p2}, we have 
\[
\left\| \EE p - \EE t S g(X_p) \right\|_n \leq C/\sqrt{K_n} 
 \quad\text{and} \quad 
\left| \EE \| p \|^2_n - \EE \| t S g(X_p) \|_n^2 \right| \leq C/\sqrt{K_n} , 
\]
where we wrote $X_p = up + (1-u) q$ for notational simplicity. 

Since the spectral norm $\| S \|$ satisfies $\| S \| \leq \rownorm{S}$, 
it holds by Assumptions~\ref{ass-K} and~\ref{ass-HT} that the function
$x\in\RR_+^n \mapsto tS g(x)$ is Lipschitz for the Euclidean norm with 
a Lipschitz constant $\alpha < 1$ independent of $n$. From what precedes,
we therefore have 
\begin{align*} 
\EE \| p - \EE p \|^2_n &\leq \EE \| p - tS g(\EE X_p ) \|_n^2 \\
&= \EE \| p \|_n^2 +  \|tS g(\EE X_p )  \|_n^2 
  - 2n^{-1} \ps{\EE p}{ tS g(\EE X_p )  } \\ 
&= \EE \| tS g(X_p) \|^2_n + \| tS g(\EE X_p) \|^2_n 
   - 2n^{-1} \ps{\EE tS g(X_p)}{tS g(\EE X_p)} + \cO(1/\sqrt{K_n}) \\ 
 &= \EE \| tSg(X_p) - tSg(\EE X_p) \|^2_n + \cO(1/\sqrt{K_n}) \\
 &\leq \alpha^2 \EE \| p - \EE p \|^2_n + \cO(1/\sqrt{K_n}) 
\end{align*} 
which implies that 
\begin{equation}
\label{p-Ep} 
\EE \| p - \EE p \|^2_n \leq C/\sqrt{K_n} .
\end{equation} 
From this result, we have 
\[
\| \EE p - tS g(\EE X_p) \|_n \leq \| \EE tS g(X_p) - tS g(\EE X_p) \|_n 
  + C/\sqrt{K_n} \leq \EE \| p - \EE p \|_n + C/\sqrt{K_n}
  \leq C/K_n^{1/4}. 
\]
Finally, 
\[
\| \EE p - q \|_n \leq \| tS g(u\EE p+(1-u)q) - tS g(uq + (1-u) q) \|_n 
   + C/K_n^{1/4} \leq \alpha \| \EE p - q \|_n + C/ K_n^{1/4}, 
\]
which leads to 
\[
\| \EE p - q \|_n \leq C/ K_n^{1/4}. 
\]
Together with~\eqref{p-Ep}, we obtain the bound~\eqref{<R>-q}, and the lemma
is proven. 
\end{proof}

\subsection{Proof of Theorem~\ref{th-F}}
Since $g$ is a bounded function, the quadratic form in the expression of 
$\bs F_n$ is bounded. By Lemma~\ref{lm-q}, $\| q\|_\infty$ is bounded, thus, 
the second term in the expression of $\bs F_n$ is bounded, hence the 
boundedness of $\bs F_n$. 

To establish Theorem~\ref{th-F}, we use the classical Guerra's approach 
based on the interpolated Hamiltonian $H_u$. Defining the function $\varphi$ 
on $[0,1]$ as 
\[
\varphi(u) = \frac 1n \EE \log Z_u^{(n)} = 
 \frac 1n \EE \log \sum_{\sigma\in\Sigma_n} e^{H_u(\sigma)} , 
\]
we have 
\begin{align*}
\varphi(0) &= \log 2 + \frac 1n \sum_{i=1}^n 
 \EE\log\cosh\left( \sqrt{q_i} \xi + h \right), \quad 
  \text{and} \\ 
\varphi(1) &= F_n.
\end{align*} 
We need to compute the derivative 
$\varphi'(u) = n^{-1} \EQ{\partial_u H_u(\sigma)}_u$. Writing  
\[
U(\sigma^1,\sigma^2) = \EE (\partial_u H_u(\sigma^1)) H_u(\sigma^2) 
 = \frac t4 (\sigma^1\sigma^2)^\T S (\sigma^1\sigma^2) 
  - 2 \ps{(\sigma^1\sigma^2)}{Sg(q)}, 
\]
we know by the well-known Gaussian integration by parts formula, 
see, \emph{e.g.}, \cite[Lemma 1.1]{pan-livre13}, that  
\begin{align*} 
\varphi'(u) &= \frac 1n 
   \EE\EQ{U(\sigma^1,\sigma^1) -  U(\sigma^1,\sigma^2)}_u \\ 
 &= \frac{t}{4n} (1-g(q))^\T S(1-g(q)) - \frac{t}{4n}  
  \EE\EQ{(\sigma^1\sigma^2 - g(q))^\T S (\sigma^1\sigma^2 - g(q))}_u . 
\end{align*} 
Noticing that $\| \sigma^2\sigma^2 - g(q) \|_\infty \leq 2$ and using 
Lemma~\ref{lm-R12}, we have 
\[
 \frac{t}{4n} 
 \left| \EE\EQ{(\sigma^1\sigma^2 - g(q))^\T S (\sigma^1\sigma^2 - g(q))}_u 
  \right| \leq 
\frac 12 \EE\EQ{\| tS (\sigma^1\sigma^2 - g(q)) \|_n }_u 
= \frac 12 \EE\EQ{\| R_{12} - q \|_n }_u \leq 
 \frac{C}{K_n^{1/4}} 
\]
where we recall that the constant $C$ does not depend on $u$. Writing 
\[
F_n = \varphi(1) = \varphi(0) + \int_0^1 \varphi'(u) du , 
\]
and using the last bound, we obtain the result of Theorem~\ref{th-F}. 

\subsection{Proof of Theorem~\ref{th-amp}}

In all the remainder of the paper, we shall work on the original 
Hamiltonian defined in the introduction, and the use of Lemma~\ref{preTAP} 
will be restricted to $u = 1$. 

The first bound obtained for $u=1$ in the statement of Lemma~\ref{preTAP} can 
be rewritten as 
\[
\max_{i\in[n]} \EE(m_i - \Tanh( [ W m_{(i)} ]_{i} ))^2 \leq C / K_n. 
\]  
By a straightforward adaptation of this lemma, we obtain that for a fixed 
integer $M > 0$, it holds that  
\begin{equation}
\label{preTAP(A)} 
\max_{A\subset[n], |A| = M} \max_{i\in\Ac} 
 \EE \left( m_{(A),i} - \Tanh\left( 
 \left[ W m_{(A\cup\{i\})} \right]_{i} \right) 
  \right)^2 \leq \frac{C}{K_n} , 
\end{equation} 
This bound will be at the basis of our proof. Let us fix an integer $k > 0$. 
Given indices $i_1,i_2,\ldots, i_k \in [n]$ which are all different, we obtain
thanks to the bound~\eqref{preTAP(A)} that 
\begin{align} 
\label{mi1} 
m_{i_1} &= \Tanh\left( \left[ W m_{(i_1)} \right]_{i_1} \right) + e_1 \\
m_{(i_1),i_2} &= \Tanh\left( \left[ W m_{(i_1,i_2)} \right]_{i_2} 
               \right) + e_2 \nonumber \\
\cdots \nonumber \\ 
 m_{(i_1,\ldots,i_{k-1}), i_k} &= 
  \Tanh\left( \left[ W m_{(i_1,\ldots,i_{k})} \right]_{i_k} \right) + e_k 
\nonumber 
\end{align}
with $\EE e_l^2 \leq C / K_n$ for each $l\in[k]$. 

This construction will be at the basis of a series of approximations of the
vector $m$ ending with the one provided by the AMP algorithm of 
Theorem~\ref{th-amp}. 
We first define the sequence of families of $\RR^n$--valued vectors
$\{y^1_{(A_1)} \}_{A_1\subset[n], |A_1| = k-1}$, 
$\{y^2_{(A_2)} \}_{A_2\subset[n], |A_2| = k-2}$, ..., $\{y^k \}$ as follows.
We write $y^l_{(A_l)} = [ y^l_{(A_l),i} ]_{i\in[n]}$, and we set 
$y^l_{(A_l),i} = -h$ if $i\in A_l$ in such a way that 
$\Tanh(y^l_{(A_l),i}) = 0$ if $i\in A_l$. With this convention, given indices 
$i_1,i_2,\ldots, i_k \in [n]$ which are all different, we set  
\begin{align*} 
y^1_{(i_1,\ldots,i_{k-1}),i_k} &= 
 \left[ W m_{(i_1,\ldots,i_{k})} \right]_{i_k} 
   \\ 
y^2_{(i_1,\ldots,i_{k-2}),i_{k-1}} &= 
 \left[ W  \Tanh\left( y^1_{(i_1,\ldots,i_{k-1})}  \right) \right]_{i_{k-1}} \\ 
y^3_{(i_1,\ldots,i_{k-3}),i_{k-2}} &= 
 \left[ W  \Tanh\left( y^2_{(i_1,\ldots,i_{k-2})}  \right) \right]_{i_{k-2}} \\ 
 \cdots  \\
y^{k}_{i_1} &= \left[ W 
  \Tanh\left( y^{k-1}_{(i_1)}  \right) \right]_{i_1} . 
\end{align*}
The same kind of construction is provided in~\cite{che-tan-21}. To make things
clearer to the reader, let us assume that $k=3$. Then we have 
\begin{align*}
y^3_{i_1} &= \sum_{i_2\not\in\{i_1\}} W_{i_1i_2} \Tanh\left( 
  \sum_{i_3\not\in\{i_1,i_2\}} W_{i_2i_3} \Tanh\left( 
 \sum_{i_4\not\in\{i_1,i_2,i_3\}} W_{i_3i_4} m_{(i_1,i_2,i_3),i_4}\right)
 \right) .
\end{align*} 
We notice here that the family $\{ W_{i_1i_2} \}_{i_2}$ is independent of the
family of random variables $\Tanh(...)$ that follow these terms in the first
summand, the family $\{W_{i_2i_3}\}_{i_3\not\in\{i_1,i_2\}}$ is independent
from what follows, and so on. More formally, by writing $y^0_{(i_1,\ldots,i_k)}
= \Tanh^{-1}(m_{(i_1,\ldots,i_{k})})$, we have  
\[
y^{l}_{(i_1,\ldots,i_{k-l}),i_{k-l+1}} = 
 \sum_{j\not\in \{i_1,\ldots,i_{k-l+1}\}}  
 W_{i_{k-l+1},j} \Tanh(y^{l-1}_{(i_1,\ldots,i_{k-l+1}),j}), \quad l\in[k], 
\]
and we observe that the families $\{W_{i_{k-l+1},j}\}_{j\not\in
\{i_1,\ldots,i_{k-l+1}\}}$ and 
$\{\Tanh(y^{l-1}_{(i_1,\ldots,i_{k-l+1}),j}\}_{j\not\in
\{i_1,\ldots,i_{k-l+1}\}}$ are independent for each $l\in[k]$. 
The same kind of remark will hold for the next two algorithms. We call
this phenomenon the ``independence along a path of indices''. 

The next algorithm is similar to the previous one except for the fact that the
initial value is $\Tanh(0)$ instead of being $m_{(i_1,\ldots,i_{k})}$. Namely,
we define the $\RR^n$--valued vectors 
$\{\tx^1_{(A_1)} \}_{A_1\subset[n], |A_1| = k-1}$, 
$\{\tx^2_{(A_2)} \}_{A_2\subset[n], |A_2| = k-2}$, ..., $\{\tx^k \}$ 
with $\tx^l_{(A_l)} = [ \tx^l_{(A_l),i} ]_{i\in[n]}$ as follows 
(as above, $\tx^l_{(A_l),i} = -h$ if $i\in A_l$): for indices $i_1,\ldots,
i_k$ which are all different, we set 
\begin{align*} 
\tx^1_{(i_1,\ldots,i_{k-1}),i_k} &= 
 \left[ W_{(i_1,\ldots, i_k)}  \Tanh(0) \right]_{i_k} \\ 
\tx^2_{(i_1,\ldots,i_{k-2}),i_{k-1}} &= 
 \left[ W \Tanh\left( \tx^1_{(i_1,\ldots,i_{k-1})} \right) \right]_{i_{k-1}}\\ 
 \cdots  \\
\tx^{k}_{i_1} &= \left[ W 
 \Tanh\left( \tx^{k-1}_{(i_1)} \right) \right]_{i_1} , 
\end{align*}
where $W_{(i_1,\ldots, i_k)}$ is the matrix $W$ in which the rows 
$i_1,\ldots, i_k$ and the columns $i_1,\ldots, i_k$ are set to zero. 

Our next step consists in replacing the function $\Tanh$ with a polynomial. 
Writing 
\[
f(x) = \sum_{\ell=0}^d \alpha_\ell x^\ell 
\]
as a degree--$d$ polynomial, we define the iterates  
$\cx^1_{(i_1,\ldots,i_{k-1})}$, $\cx^2_{(i_1,\ldots,i_{k-2})}$, ..., $\cx^k$ 
with the same notational conventions as above as 
\begin{align*} 
\cx^1_{(i_1,\ldots,i_{k-1}),i_k} &= 
 \left[ W_{(i_1,\ldots,i_k)} f(0) \right]_{i_k} \\ 
\cx^2_{(i_1,\ldots,i_{k-2}),i_{k-1}} &= 
 \left[ W  f(\cx^1_{(i_1,\ldots,i_{k-1})}) \right]_{i_{k-1}} \\ 
  \cdots  \\
 \cx^{k}_{i_1} &= \left[ W f( \cx^{k-1}_{(i_1)} ) \right]_{i_1} , 
\end{align*} 
by setting $f(\cx^l_{(A_l),i}) = 0$ if $i\in A_l$. 
The iterations for these two last algorithms can be rewritten as follows
for later use. Writing $A_l = \{ i_1, \ldots, i_{k-l} \}$ for $l=0,\ldots, k$, 
we have for $l\in[k]$ 
\begin{align} 
\tx^{l}_{(A_l),i_{k-l+1}} &= \sum_{r\not\in A_{l-1}} W_{i_{k-l+1},r} 
   \Tanh(\tx^{l-1}_{(A_{l-1}),r}), 
 \quad\text{and}  \label{tx} \\ 
\cx^{l}_{(A_l),i_{k-l+1}} &= \sum_{r\not\in A_{l-1}} W_{i_{k-l+1},r} 
   f(\cx^{l-1}_{(A_{l-1}),r})  , 
\label{cx} 
\end{align} 
starting with $\cx^0_{(A_0),r} = \tx^0_{(A_0),r} = 0$. 

As said above, the three preceding algorithms share the property of the
independence along a path of indices.  In order to be able to use the approach
of \cite{bay-lel-mon-15} and \cite{hac-24} as announced at the beginning of
this section, we need to introduce another kind of dependence, namely the one
based on the so-called Non-Backtraking (NB) iterations.  At every iteration
$l$, our next algorithm produces a family of vectors 
$\cz^l_{(j)} = [ \cz^l_{(j),i} ]_{i\in[n]}$ as follows. 
We initialize the algorithm with $\cz^0_{(j),i} = 0$, and write 
\[
\cz^{\ell+1}_{(j),i} = \sum_{r\neq j} W_{ir} f(\cz^\ell_{(i),r}) . 
\] 
Furthermore, stopping at Iteration $k$, we set 
\[
\cz^{k}_{i} = \sum_{r} W_{ir} f(\cz^{k-1}_{(i),r}) . 
\]
Our last intermediate is the following AMP algorithm with the polynomial 
activation function $f$. Starting with $z^0 = 0$ and $z^1 = W f(0)$, 
it reads  
\begin{equation}
\label{amp-z} 
z^{l+1} = W f\left(z^l\right) - \diag\left( (W\odot W) f'(z^l) \right) 
    f\left(z^{l-1}\right) . 
\end{equation} 
For a better readability, we summarize the main features of these five 
algorithms along with the AMP algorithm~\eqref{amp-x} in the following table:
\begin{center} 
\begin{tabular}{|c|c|c|c|}
\hline 
      & Activation function & Algorithm structure  & Initialization \\ \hline   
$y^l_{(\cdots)}$   &  $\Tanh$            & independence along a path of indices & $m_{(\cdots)}$ \\ \hline 
$\tx^l_{(\cdots)}$ &  $\Tanh$            
    & independence along a path of indices & $\Tanh(0)$ \\ \hline 
$\cx^l_{(\cdots)}$ & polynomial $f$      
  & independence along a path of indices & $f(0)$ \\ \hline 
$\cz^l_{(\cdot)}$ & polynomial $f$      & NB         & $f(0)$ \\ \hline 
$z^l$   & polynomial $f$      & AMP         & $z^0 = 0, z^1 = W f(0)$ \\ \hline 
$x^l$   & $\Tanh$             & AMP         
    & $x^0 = 0, x^1 = W \Tanh(0)$ \\ \hline 
\end{tabular} 
\end{center} 

We shall develop below a sequence of approximation results starting with 
the vector $m$ and ending with $x^k$. Before we begin, some new notations and 
preliminary results are necessary. 

We say that a real continuous function $\varphi$ belongs to the set $\PL$ of 
pseudo-Lipschitz functions if there exists $C > 0$ and an integer $a > 0$ such 
that 
\[
\left| \varphi(x) - \varphi(y)  \right| \leq 
C |x - y | \left( 1 + |x|^a + |y|^a\right). 
\]
It is easy to show that each polynomial belongs to $\PL$, and so is the case of
the functions $\Tanh$ and $\Tanh'$. Pseudo-Lipschitz functions are conveniently 
used as test functions in the AMP literature, see, 
\emph{e.g.}, \cite{fen-etal-21}.  

Given $i, j \in[n]$ with $i\neq j$, we also denote as $\mcF_{-i}$ and
$\mcF_{-(i,j)}$ the $\sigma$--fields generated by the random variables $\{
W_{kl} \ : \ k<l \ \text{and} \ k,l \in[n]\setminus\{i\} \}$ and $\{ W_{kl} \ :
\ k<l \ \text{and} \ k,l\in[n]\setminus\{i,j\} \}$ respectively.  The notation
$A_l^{(n)} = A_l$ will always refer to a set of indices $A_l \subset [n]$ such
that $|A_l| = k-l$. 

We begin with an approximation result related with the function $\Tanh$ and 
its derivative: 
\begin{lemma}
\label{approx} 
Let $C > 0$.  For each $e > 0$, there exists a polynomial $p_e$ such 
that $p_e(0) = \Tanh(0)$, 
\[
\max_{\alpha \in [0, C]} 
 \EE \left( p_e(\alpha\xi) - \Tanh(\alpha\xi) \right)^2 \leq e 
\quad \text{and} \quad 
\max_{\alpha \in [0, C]} 
 \EE \left( p'_e(\alpha\xi) - \Tanh'(\alpha\xi) \right)^2 \leq e .
\]
\end{lemma}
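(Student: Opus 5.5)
The plan is to approximate the derivative first and then integrate. Set $\Tanh'(x) = 1-\tanh^2(x+h)$. This is a bounded continuous function on $\RR$, so it lies in $L^2(\RR,\gamma_\beta)$, where $\gamma_\beta$ denotes the centered Gaussian measure with variance $\beta^2$. We fix a single reference variance $\beta^2 > C^2$, say $\beta = 2C$. Polynomials are dense in $L^2(\gamma_\beta)$, since the Hermite polynomials form an orthogonal basis. Hence for any $\epsilon>0$ there is a polynomial $r$ with $\int (r-\Tanh')^2 d\gamma_\beta \le \epsilon$. We then define $p_e(x) = \Tanh(0) + \int_0^x r(s)\,ds$. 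This is a polynomial with $p_e(0)=\Tanh(0)$ and $p_e' = r$.

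\textbf{Step 1: uniform control over $\alpha\in(0,C]$ by a change of measure.} For $\alpha\in(0,C]$ and any function $\phi$,
\[
\EE\phi(\alpha\xi)^2 = \int \phi^2 \, d\gamma_\alpha = \int \phi(x)^2 \frac{\beta}{\alpha}\exp\left(-\frac{x^2}{2\alpha^2}+\frac{x^2}{2\beta^2}\right) d\gamma_\beta(x) \le \frac{\beta}{\alpha}\int\phi^2\,d\gamma_\beta .
\]
The factor $\beta/\alpha$ blows up as $\alpha\to 0$, so the small-$\alpha$ range needs separate treatment. We handle it by splitting into $\alpha\in[\delta,C]$ and $\alpha\in[0,\delta]$.

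On $[0,\delta]$ we use continuity. The functions $\alpha\mapsto \EE(r(\alpha\xi)-\Tanh'(\alpha\xi))^2$ and $\alpha\mapsto \EE(p_e(\alpha\xi)-\Tanh(\alpha\xi))^2$ are continuous in $\alpha$ by dominated convergence, since polynomial growth is integrable against the Gaussian. At $\alpha=0$ they equal $(r(0)-\Tanh'(0))^2$ and $0$ respectively.

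The first quantity is the one that requires care. To force $r(0)=\Tanh'(0)$, we add the correction $c\,H$ to $r$, where $H$ is a fixed polynomial with $H(0)=1$ and small $L^2(\gamma_\beta)$ norm. Such an $H$ exists: take $H(x) = (1-x^2/M)^{N}$-type polynomials, or simply note that evaluation at $0$ is not continuous on $L^2(\gamma_\beta)$, so the constraint $r(0)=\Tanh'(0)$ costs arbitrarily little in $L^2(\gamma_\beta)$.

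A cleaner alternative avoids this correction. The bound $\EE\phi(\alpha\xi)^2\le (\beta/\alpha)\|\phi\|^2_{\gamma_\beta}$ is only needed down to $\alpha=\delta$, giving a constant $\beta/\delta$. For $\alpha<\delta$ we instead use the uniform-in-$\alpha$ modulus of continuity on compacts. Fix $\delta$ first so that the contribution from $[0,\delta]$ is small, then choose $r$.

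\textbf{Step 2: from the derivative to the function.} Once $\|r-\Tanh'\|_{L^2(\gamma_\beta)}$ is small, we control $p_e-\Tanh$. Write $(p_e-\Tanh)(x) = \int_0^x (r-\Tanh')(s)\,ds$, so by Cauchy--Schwarz $(p_e-\Tanh)(x)^2 \le |x|\,\bigl|\int_0^x (r-\Tanh')^2 ds\bigr|$. Integrating against $\gamma_\alpha$ and using Fubini, this is bounded by $\int (r-\Tanh')^2(s)\, w_\alpha(s)\,ds$ with $w_\alpha(s)=\EE[|\alpha\xi|\1_{|\alpha\xi|\ge|s|}]\le C' e^{-s^2/(2\alpha^2)}$. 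The weight $e^{-s^2/(2\alpha^2)}$ is dominated by $e^{-s^2/(2C^2)}$, which in turn is at most a constant times the density of $\gamma_\beta$, because $\beta>C$. This estimate is uniform in $\alpha\in[0,C]$ and has no blow-up at $\alpha=0$, so it bounds the first requirement by a constant times $\|r-\Tanh'\|^2_{\gamma_\beta}$.

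\textbf{Main obstacle.} The delicate point is the derivative estimate near $\alpha=0$, where $\gamma_\alpha$ is not dominated by $\gamma_\beta$. We resolve it with the split at $\delta$. Since $\Tanh'$ is uniformly continuous and $r$ is a polynomial, it is cleaner to take $r$ converging to $\Tanh'$ uniformly on a compact $[-L,L]$ (Weierstrass) as well as in $L^2(\gamma_\beta)$. For the combination, we let $r$ be the weighted $L^2$ projection onto polynomials of degree $N$, which converges in $L^2(\gamma_\beta)$, and add a small uniform adjustment. Alternatively, we use that truncated Hermite expansions of a smooth bounded function with bounded derivatives converge locally uniformly.

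Given local uniform convergence on $[-L,L]$ and the $L^2(\gamma_\beta)$ bound, the tails $|x|>L$ are handled by the change-of-measure bound for $\alpha\ge\delta$. For $\alpha<\delta$, the mass of $\gamma_\alpha$ outside $[-L,L]$ is negligible, provided we combine it with a growth bound on $r$ through Cauchy--Schwarz and the $L^4(\gamma_\beta)$ norm. Choosing $\delta$, then $L$, then $N$ in this order yields both bounds $\le e$.
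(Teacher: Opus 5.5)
Your Step 2 is correct, and it gives more than you use. If $D(0)=0$, then $\EE D(\alpha\xi)^2\le\int D'(s)^2\,\frac{\alpha}{\sqrt{2\pi}}e^{-s^2/(2\alpha^2)}\,ds\le C^2\|D'\|^2_{L^2(\gamma_C)}$ for every $\alpha\in[0,C]$, because the weight increases with $\alpha$.

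The gap is the derivative bound for small $\alpha$. You flag it as the main obstacle but do not resolve it.
\begin{itemize}
\item Closeness of $r$ to $\Tanh'$ in $L^2(\gamma_\beta)$, even together with $r(0)=\Tanh'(0)$, does not control $\sup_{\alpha\in(0,C]}\EE(r(\alpha\xi)-\Tanh'(\alpha\xi))^2$. Approximate $\Tanh'+\1_{[a,2a]}$ closely enough in $L^2(\gamma_\beta)$ by a polynomial with value $\Tanh'(0)$ at $0$ (such polynomials are dense). The resulting $r$ has $L^2(\gamma_\beta)$ error $O(\sqrt a)$, yet $\EE(r(a\xi)-\Tanh'(a\xi))^2$ is close to $\PP(1\le\xi\le 2)$.
\item Hence the correction $cH$ only repairs the single point $\alpha=0$: $c$ need not be small, and $H$ must be spiky near $0$.
\item ``Fix $\delta$ first, then choose $r$'' is circular. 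The modulus of continuity of $\alpha\mapsto\EE(r(\alpha\xi)-\Tanh'(\alpha\xi))^2$ at $0$ depends on $r$, while the constant $\beta/\delta$ forces $\delta$ to be chosen before $r$.
\item Your final resolution rests on two unproved claims. First, that one sequence of polynomials converges to $\Tanh'$ both uniformly on $[-L,L]$ and in $L^2(\gamma_\beta)$. This is true for Hermite truncations of $\Tanh'$, via decay of its Hermite coefficients, but you only assert it. Second, that $\|r_N-\Tanh'\|_{L^4(\gamma_\beta)}$ stays bounded as $N\to\infty$, which you need because $N$ is chosen last. This does not follow from $L^2$ convergence, since Hermite partial sums are not bounded on $L^p$ of a Gaussian for $p\neq 2$.
\item The $L^4$ detour is also unnecessary. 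For $\beta\ge C$ and $|x|\ge L\ge C$, one has $(d\gamma_\alpha/d\gamma_\beta)(x)\le\beta/C$ for all $\alpha\in(0,C]$, so the tails need only the $L^2(\gamma_\beta)$ error.
\end{itemize}

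The paper avoids the difficulty with a stronger mode of approximation. A weighted polynomial approximation theorem on $\RR$ gives, for each $\delta>0$, a polynomial $u$ with $|u(x)-\Tanh'(x)|\le\delta e^{\delta x^2}$ for all $x$. Then $p_e(x)=\Tanh(0)+\int_0^x u$ satisfies $|p_e(x)-\Tanh(x)|\le\delta|x|e^{\delta x^2}$. These pointwise bounds integrate against the law of $\alpha\xi$ uniformly in $\alpha\in[0,C]$, with nothing special at $\alpha=0$.

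Within your own framework there is an elementary repair: apply Step 2 one derivative higher.
\begin{itemize}
\item Pick a polynomial $\rho$ with $\|\rho-\Tanh''\|_{L^2(\gamma_C)}\le\eta$, which is possible since $\Tanh''$ is bounded.
\item Set $r(x)=\Tanh'(0)+\int_0^x\rho$ and $p_e(x)=\Tanh(0)+\int_0^x r$.
\item Step 2 with $D=r-\Tanh'$ gives $\sup_{\alpha\in[0,C]}\EE(r(\alpha\xi)-\Tanh'(\alpha\xi))^2\le C^2\eta^2$.
\item Step 2 with $D=p_e-\Tanh$ gives $\sup_{\alpha\in[0,C]}\EE(p_e(\alpha\xi)-\Tanh(\alpha\xi))^2\le C^2\|r-\Tanh'\|^2_{L^2(\gamma_C)}\le C^4\eta^2$.
\end{itemize}
This uses only the $L^2$ density of polynomials. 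It needs no splitting in $\alpha$ and no uniform approximation on compacts.
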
 
\begin{proof}
Given a small $\delta > 0$, it is known, see \cite[Th.~1]{mha-86} 
or~\cite{dvr-tav-57}, that there exists a polynomial $u$ on $\RR$ such that 
\[
\forall x\in\RR, \ \left| u(x) - \Tanh'(x) \right| \leq 
 \delta \exp(\delta x^2) . 
\]
Defining the polynomial $U$ as 
\[
U(x) = \Tanh(0) + \int_0^x u(s) \, ds 
\]
we obtain that 
$| U(x) - \Tanh(x) | = |\int_0^x (u(s) - \Tanh'(s) ) \, ds | \leq 
 \delta |x| \exp(\delta x^2)$. Therefore, given $\alpha > 0$ not too large, we 
have after a simple derivation that 
\begin{align*}
\EE (u(\alpha\xi) - \Tanh'(\alpha\xi))^2 &\leq 
        \delta^2 / \sqrt{1-\delta \alpha^2}, 
   \quad \text{and} \\
\EE (U(\alpha\xi) - \Tanh(\alpha\xi))^2 &\leq 
          \delta \alpha^2 / (1-\delta \alpha^2)^{3/2}. 
\end{align*} 
By assumption, $0\leq \alpha^2\leq C^2$. Thus, by setting $\delta$ small 
enough, we can take $p_e = U$.  
\end{proof} 

Given a polynomial $f$, we need to introduce the sequence $(\cq^l)_{l\in\NN}$
of $\RR_+^n$--valued vectors defined recursively as  
\begin{equation}
\label{iter-cq}
\cq^0 = 0, \quad \cq^{l+1} = tS \EE f\left(\cX^l\right), 
\end{equation} 
where $\cX^l \sim \cN(0, \diag(\cq^l))$.

\begin{lemma}
\label{qcq} 
The iterations $q^l$ defined in the statement of Theorem~\ref{th-amp} satisfy
$\sup_l \| q^l \|_\infty \leq \log 2$.  Let $e$ be a positive number such that
$\sqrt{e} \leq (1- \log 2)/10$. Let $p_e$ be a polynomial such that $p_e(0) =
\Tanh(0)$ and 
\[
\max_{\alpha \in [0, 1]} 
 \EE \left( p_e(\alpha\xi) - \Tanh(\alpha\xi) \right)^2 
  \leq e ,  
\]
which existence is guaranteed by Lemma~\ref{approx}.  Consider the iterates
$\cq^l$ provided by equations~\eqref{iter-cq}.  with $f = p_e$. Then, $\sup_l
\| \cq^l \|_\infty \leq 1$, and 
$\sup_l \| \cq^l - q^l \|_\infty \leq 10\sqrt{e}$.  
\end{lemma}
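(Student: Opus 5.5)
We argue by induction on $l$, controlling $\|\cq^l\|_\infty$ and $\|\cq^l - q^l\|_\infty$ together. We read the recursion~\eqref{iter-cq} as $\cq^{l+1} = tS\, \EE f(\cX^l)^2$, i.e.\ $\cq^{l+1}_i = [tS\, \EE p_e(\cX^l)^2]_i$. This is the analogue of $q^{l+1} = tSg(q^l) = tS\,\EE\Tanh(X^l)^2$, and it is what makes the two iterations comparable. All bounds use only $\|tS\|_{\infty\to\infty} \le t\bC_{\text{row}} < \log 2$, from Assumption~\ref{ass-HT}.

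\emph{Bound on $q^l$.} Since $0\le g\le 1$, we get $q^{l+1}_i = \sum_j t s_{ij} g(q^l_j) \le t\bC_{\text{row}} < \log 2$ for every $l \ge 0$, and $q^0=0$.

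\emph{Bound on $\cq^l$.} Assume $\|\cq^l\|_\infty\le 1$; this holds for $l=0$. Each coordinate $\cX^l_j$ is distributed as $\alpha_j\xi$ with $\alpha_j=\sqrt{\cq^l_j}\in[0,1]$, so the hypothesis on $p_e$ applies. Minkowski's inequality gives
\[
\bigl(\EE p_e(\alpha_j\xi)^2\bigr)^{1/2}\le \bigl(\EE \Tanh(\alpha_j\xi)^2\bigr)^{1/2}+\sqrt e\le 1+\sqrt e .
\]
Hence $\cq^{l+1}_i\le \log 2\,(1+\sqrt e)^2$. Since $\sqrt e\le (1-\log 2)/10\approx 0.031$, this is at most about $0.74\le 1$, which closes the induction.

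\emph{Comparison of the two iterations.} Fix $l$. For $\alpha\in[0,1]$, write $a=\EE p_e(\alpha\xi)^2$ and $b=\EE\Tanh(\alpha\xi)^2 = g(\alpha^2)$. Then
\[
|a-b| = \bigl|\|p_e\|-\|\Tanh\|\bigr|\bigl(\|p_e\|+\|\Tanh\|\bigr)\le \sqrt e\,(2+\sqrt e)\le 3\sqrt e,
\]
where $\|\cdot\|$ denotes the $L^2$ norm of $\alpha\xi\mapsto\cdot$. Next we need $g$ Lipschitz on $\RR_+$ with constant at most $1$. By the Gaussian heat-equation identity, $g'(x)=\tfrac12\EE\,\phi''(\sqrt x\xi)$ with $\phi=\Tanh^2$. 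Moreover $\phi''=2(1-\Tanh^2)(1-3\Tanh^2)$, which has absolute value at most $2$, so $|g'|\le 1$. Setting $d_l=\|\cq^l-q^l\|_\infty$, both $\cq^l$ and $q^l$ have coordinates in $[0,1]$, and therefore
\[
|\cq^{l+1}_i-q^{l+1}_i|\le \sum_j ts_{ij}\Bigl(\bigl|\EE p_e(\sqrt{\cq^l_j}\xi)^2-g(\cq^l_j)\bigr| + |g(\cq^l_j)-g(q^l_j)|\Bigr)\le \log 2\,(3\sqrt e + d_l).
\]
With $d_0=0$, iterating gives $d_l\le 3\sqrt e\,\log 2/(1-\log 2)\approx 6.8\sqrt e\le 10\sqrt e$ uniformly in $l$.

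The only delicate point is the Lipschitz bound on $g$ with constant at most $1$; it is needed because the contraction factor $\log 2$ must absorb the propagated error. The heat-equation computation above supplies it. Everything else is a direct induction.
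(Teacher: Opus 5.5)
Your proof is correct and follows essentially the paper's argument: induction on $l$, the bound $|\EE p_e(\alpha\xi)^2 - g(\alpha^2)| \le 3\sqrt e$, the $1$-Lipschitz property of $g$, and the contraction factor $\log 2$ coming from $t\bC_{\text{row}} < \log 2$. Your reading of \eqref{iter-cq} with $\EE f(\cX^l)^2$ is the one the paper actually uses. The only differences are cosmetic: you bound $\|\cq^{l+1}\|_\infty$ directly via Minkowski rather than through $\log 2 + 10\sqrt e$, you sum the geometric series instead of propagating the target bound $10\sqrt e$, and you verify the Lipschitz constant of $g$ via the heat equation instead of citing it.
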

\begin{proof}
Recall the expression of $g$ in~\eqref{def-g}. The bound on $\| q^l \|_\infty$ 
follows from $q^0 = 0$, $0\leq g(q) < 1$ and $\rownorm{tS} < \log 2$. 

We now show that $\| \cq^l \|_\infty \leq 1$ and 
$\| \cq^l - q^l \|_\infty \leq 10\sqrt{e}$ by recurrence on $l$. 
This is trivial for $l=0$ since $\cq^0 = q^0$. Assume that this is true for 
$l$. Since $x^2 - y^2 = 2y(x-y) + (x-y)^2$, we have for each $a\in[0,1]$ that 
\[
| \EE p_e(a\xi)^2 - \EE \Tanh(a\xi)^2 | \leq 2
 \EE | p_e(a\xi) - \Tanh(a\xi) | + \EE (p_e(a\xi) - \Tanh(a\xi))^2 
 \leq 3\sqrt{e}. 
\]
Remembering that $g$ is $1$--Lipschitz (see the proof of Lemma~\ref{lm-q}),
we obtain from what precedes that 
\begin{align*} 
\| \cq^{l+1} - q^{l+1} \|_\infty &\leq (\log 2)
 \| \EE p_e(\cX^l)^2 - g(q^l) \|_\infty 
 \leq (\log 2)  \left( \| \cq^l - q^l \|_\infty + 
  \| \EE p_e(\cX^l)^2 - g(\cq^l) \|_\infty \right) \\
&\leq 13 (\log 2) \sqrt{e} \leq 10 \sqrt{e}, 
\end{align*} 
and furthermore, 
$\| \cq^{l+1} \|_\infty \leq \log 2 + 10 \sqrt{e} \leq 1$. 
\end{proof} 
We now turn to the approximation results alluded to above. 
\begin{lemma}
\label{v-m}
It holds that $\EE(\Tanh(y_i^k) - m_i)^2 \leq C / K_n$. 
\end{lemma}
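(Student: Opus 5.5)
We prove the bound by a backward induction along the path of indices, comparing $y^l$ with the chain of TAP identities \eqref{mi1}. Fix distinct indices $i_1,\ldots,i_k$ and write $A_l=\{i_1,\ldots,i_{k-l}\}$, so that $A_0=\{i_1,\ldots,i_k\}$ and $A_k=\emptyset$. For $l=1,\ldots,k$ set
\[
D_l = \max_{A_l,\ i\notin A_l}\ \EE\left(\Tanh(y^l_{(A_l),i}) - m_{(A_l),i}\right)^2 ,
\]
where the maximum runs over all sets $A_l$ of size $k-l$ and all indices outside them. The claim is the case $l=k$, $A_k=\emptyset$. We show by induction on $l$ that $D_l\le C_l/K_n$, where $C_l$ depends only on $l$, $k$ and the constants of the paper. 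Since $k$ is fixed, this gives the lemma.

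The base case is $l=1$. By definition, $y^1_{(A_1),i}=[Wm_{(A_1\cup\{i\})}]_i$. Hence $\Tanh(y^1_{(A_1),i})-m_{(A_1),i}$ is exactly the error in \eqref{preTAP(A)} with $|A|=k-1$. Its second moment is therefore at most $C/K_n$, uniformly over $A_1$ and $i$.

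For the inductive step, write $B=A_l\cup\{i\}=A_{l-1}$. We decompose
\[
\Tanh(y^l_{(A_l),i}) - m_{(A_l),i} = \Big(\Tanh(y^l_{(A_l),i}) - \Tanh([W m_{(B)}]_i)\Big) + \Big(\Tanh([W m_{(B)}]_i) - m_{(A_l),i}\Big).
\]
The second term is controlled directly by \eqref{preTAP(A)} with $|A|=k-l$, giving $C/K_n$. For the first term, $\Tanh$ is $1$-Lipschitz, and
\[
y^l_{(A_l),i} - [Wm_{(B)}]_i = \sum_{r\notin B} W_{ir}\left(\Tanh(y^{l-1}_{(B),r}) - m_{(B),r}\right).
\]
The key point is the independence along a path of indices. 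The family $\{W_{ir}\}_{r\notin B}$ is independent of the variables $\Tanh(y^{l-1}_{(B),r})$ and $m_{(B),r}$ for $r\notin B$, because these are built from the Hamiltonian with the spins of $B\ni i$ removed and from $W$ restricted to $B^{\rm c}$. Conditioning on $\mcF_{-i}$, the cross terms vanish and the second moment equals
\[
\sum_{r\notin B} t s_{ir}\, \EE\left(\Tanh(y^{l-1}_{(B),r}) - m_{(B),r}\right)^2 \le t\,\bC_{\text{row}}\, D_{l-1} \le \log 2 \cdot D_{l-1}.
\]
Here we use the row-sum bound of Assumption~\ref{ass-K}, which is implied by Assumption~\ref{ass-card}. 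Combining the two terms with $(a+b)^2\le 2a^2+2b^2$ gives $D_l\le 2\log 2\, D_{l-1}+C/K_n$. Iterating $k$ times yields $D_k\le C_k/K_n$.

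The main obstacle is justifying the independence claim rigorously. One must check that the quantity $\Tanh(y^{l-1}_{(B),r})-m_{(B),r}$ is measurable with respect to the $\sigma$-field generated by $\{W_{ab}: a,b\notin B\}$. This holds because $y^{l-1}_{(B),r}$ only involves rows indexed by $r$ and by deeper indices outside $B$, together with the reduced means $m_{(B\cup\cdots)}$. It also has to be checked that the uniformity in \eqref{preTAP(A)} over sets of size up to $k$ is available. That is the straightforward adaptation of Lemma~\ref{preTAP} stated before \eqref{preTAP(A)}, applied for each fixed size $M\le k-1$.
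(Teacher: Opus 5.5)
Your proposal is correct and follows the paper's own argument. Both run an induction on $l$ starting from \eqref{preTAP(A)} and split off the TAP error. The $1$-Lipschitz property of $\Tanh$ and the independence along a path of indices then reduce the second moment to $\sum_r t s_{ir}\,\EE\bigl(\Tanh(y^{l-1}_{(B),r})-m_{(B),r}\bigr)^2$, and the row-sum bound closes the recursion. Your maximum over all sets $A_l$ of size $k-l$ only makes the uniformity of the induction hypothesis more explicit than the paper's fixed-path formulation.
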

\begin{proof} 
It is clear from the bound~\eqref{preTAP(A)} that $\EE\left(
\Tanh(y^1_{(i_1,\ldots,i_{k-1}),i_k}) - m_{(i_1,\ldots,i_{k-1}),i_k} 
 \right)^2 \leq C/K_n$. Given $l \geq 2$, assume that 
$\EE\left( \Tanh(y^{l}_{(i_1,\ldots,i_{k-l}),i_{k-l+1}}) - 
   m_{(i_1,\ldots,i_{k-l}),i_{k-l+1}} \right)^2  
 \leq C/K_n$. Since $\Tanh$ is $1$-Lipschitz, we have thanks to the 
independence along a path of indices property that 
\begin{multline*} 
 \EE\left( \Tanh(y^{l+1}_{(i_1,\ldots,i_{k-l-1}),i_{k-l}}) - 
  m_{(i_1,\ldots,i_{k-l-1}),i_{k-l}}  \right)^2 \\
= \EE\left( 
 \Tanh\left( \left[ W \Tanh(y^l_{(i_1,\ldots,i_{k-l})})  
   \right]_{i_{k-l}} \right) 
  - \Tanh\left( \left[ W m_{(i_1,\ldots,i_{k-l})}  \right]_{i_{k-l}} \right) 
  - e_l \right)^2  
\end{multline*}
where $\EE e_l^2 \leq C / K_n$. Therefore, 
\begin{align*} 
&\EE\left( \Tanh(y^{l+1}_{(i_1,\ldots,i_{k-l-1}),i_{k-l}}) - 
  m_{(i_1,\ldots,i_{k-l-1}),i_{k-l}}  \right)^2 \\ 
&\leq 2 
 \EE\left( \left[ W \Tanh(y^l_{(i_1,\ldots,i_{k-l})}) \right]_{i_{k-l}} 
  - \left[ W m_{(i_1,\ldots,i_{k-l})}  \right]_{i_{k-l}} \right)^2  
   + C/K_n \\ 
&\leq 2 \sum_r t s_{i_{k-l},r} 
  \EE \left(  
\Tanh(y^{l}_{(i_1,\ldots,i_{k-l}),r}) - m_{(i_1,\ldots,i_{k-l}),r} 
 \right)^2 + C/K_n \\
&\leq C/K_n. 
\end{align*} 
\end{proof}

\begin{lemma}
\label{u-v} 
It holds that $\EE ( \tx^k_i - y^k_i)^2 \leq 4(\log 2)^k$. 
\end{lemma}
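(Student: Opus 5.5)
We prove the more general bound along a path of indices: for every $l\in\{0,\ldots,k\}$ and all distinct $i_1,\ldots,i_k$,
\[
D_l := \max_{A_l,\, i\notin A_l} \EE\left(\tx^{l}_{(A_l),i} - y^{l}_{(A_l),i}\right)^2 \leq 4 (\log 2)^l ,
\]
where in the case $l=0$ we compare the initializations, $\Tanh(\tx^0)=\Tanh(0)$ and $\Tanh(y^0)=m_{(A_0)}$. The statement is then the case $l=k$, where $A_k=\emptyset$.

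The induction runs on $l$, using the recursions~\eqref{tx} and the analogous one for $y^l$. Both recursions are driven by the same weights $\{W_{i_{k-l+1},r}\}_{r\notin A_{l-1}}$. By the independence along a path of indices, these weights are independent of the family $\{\Tanh(\tx^{l-1}_{(A_{l-1}),r}) - \Tanh(y^{l-1}_{(A_{l-1}),r})\}_{r}$, because the inner quantities are built from $W$ with row and column $i_{k-l+1}$ removed. Conditioning on that family and using that the $W_{i_{k-l+1},r}$ are centered, independent, with variance $t s_{i_{k-l+1},r}$, we get
\[
\EE\left(\tx^{l}_{(A_l),i_{k-l+1}} - y^{l}_{(A_l),i_{k-l+1}}\right)^2
 = \sum_{r\notin A_{l-1}} t s_{i_{k-l+1},r}\, \EE\left(\Tanh(\tx^{l-1}_{(A_{l-1}),r}) - \Tanh(y^{l-1}_{(A_{l-1}),r})\right)^2 .
\]

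For $l\geq 2$, $\Tanh$ is $1$-Lipschitz, so each expectation on the right is at most $D_{l-1}$. The row sum $\sum_r t s_{ir}$ is at most $t\bC_{\text{row}}<\log 2$ by Assumptions~\ref{ass-K} and~\ref{ass-HT}. Hence $D_l\leq (\log 2) D_{l-1}$.

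The base case $l=1$ compares $\Tanh(0)$ with $m_{(A_0)}$ directly rather than through arguments of $\Tanh$. Here $\left(\Tanh(0)-m_{(A_0),r}\right)^2 \leq (|\tanh h|+1)^2\leq 4$, since both quantities lie in $[-1,1]$. This gives $D_1\leq 4\log 2$, and iterating the recursion yields $D_k\leq 4(\log 2)^k$.

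The one point needing care is the independence claim underlying the conditional variance computation. The $y$-iterates are initialized with $m_{(i_1,\ldots,i_k)}$, the Gibbs mean of the system with spins $i_1,\ldots,i_k$ removed, and this is a function of $\{W_{ab}:a,b\notin\{i_1,\ldots,i_k\}\}$ only. At each stage the inner vectors indexed by $A_{l-1}$ depend on $W$ only through entries not involving $i_{k-l+1},\ldots$ as appropriate. We therefore check inductively that $\tx^{l-1}_{(A_{l-1}),r}$ and $y^{l-1}_{(A_{l-1}),r}$ are measurable with respect to $\sigma\{W_{ab}: a,b\notin A_{l-1}\}$, which excludes row $i_{k-l+1}$. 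With this in hand the cross terms vanish and the bound follows.
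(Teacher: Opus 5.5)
Your proof is correct and is essentially the paper's argument. Both run an induction along the path of indices. Both use the conditional-variance identity that follows from independence of the row $\{W_{i_{k-l+1},r}\}_r$ from the inner iterates, the $1$-Lipschitz property of $\Tanh$, and the row-sum bound $t\bC_{\text{row}}<\log 2$, with the base case $l=1$ bounded by $4\log 2$ because $(\Tanh(0)-m_{(A_0),r})^2\le 4$. The only difference is that you spell out the measurability of the inner iterates with respect to $\sigma\{W_{ab}: a,b\notin A_{l-1}\}$ and take the maximum over $A_l$ explicitly, both of which the paper leaves implicit.
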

\begin{proof}
By recurrence. For different indices $i_1,\ldots, i_k$, we have  
\[
\EE (\tx^1_{(i_1,\ldots,i_{k-1}),i_k} - y^1_{(i_1,\ldots,i_{k-1}),i_k} )^2
 \leq t \sum_{r} s_{i_k,r} \EE ( \Tanh(0) - m_{(i_1,\ldots,i_k),r})^2 
 \leq 4 \log 2.
\]
For $l \geq 1$, assume that 
$\EE (\tx^l_{(i_1,\ldots,i_{k-l}),i_{k-l+1}}- 
  y^l_{(i_1,\ldots,i_{k-l}),i_{k-l+1}} )^2 \leq 4 (\log 2)^l$. Then, by using 
the independence along a path of indices property and doing the same 
calculation as above, we get that 
$\EE (\tx^{l+1}_{(i_1,\ldots,i_{k-l-1}),i_{k-l}}- 
  y^{l+1}_{(i_1,\ldots,i_{k-l-1}),i_{k-l}} )^2 \leq 4 (\log 2)^{l+1}$. 
\end{proof}

\begin{proposition} 
\label{AMP1} 
For each $b > 0$ and each $l \in[k]$, it holds that 
\begin{equation} 
\label{mombnd} 
\sup_n \max_{A_l^{(n)}} \max_{i\in[n]} 
  \EE \left|\cx^{(n),l}_{(A_l^{(n)}),i}\right|^b < \infty .
\end{equation} 
For a function $\varphi\in\PL$, an integer $l\in[k]$, a 
sequence of sets $\cS^{(n)} \in [n]$ with $| \cS^{(n)} | \to\infty$, 
a sequence of $|\cS^{(n)}|$--tuples $(\beta^{(n)}_i)_{i\in\cS^{(n)}}$ such 
that $|\beta^{(n)}_i|\leq 1$, a sequence of sets of the type $A_l^{(n)}$, and 
a number $b > 0$ which are all arbitrary, it holds that 
\begin{equation}
\label{amp-cx} 
\EE\left| \frac{1}{|\cS^{(n)}|} \sum_{i\in\cS^{(n)}} 
 \beta_i^{(n)} \varphi(\cx^{(n),l}_{(A_l^{(n)}),i}) - \beta_i^{(n)} 
  \EE \varphi(\cX^{(n),l}_i) \right|^b
 \xrightarrow[n\to\infty]{} 0. 
\end{equation} 
\end{proposition}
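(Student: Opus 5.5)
We argue by induction on $l\in[k]$, using at each step the independence along a path of indices in the form of \eqref{cx}:
\[
\cx^{l}_{(A_l),i} = \sum_{r\not\in A_{l-1}} W_{ir} f(\cx^{l-1}_{(A_{l-1}),r}), \qquad A_{l-1} = A_l\cup\{i\},
\]
where $\{W_{ir}\}_{r\notin A_{l-1}}$ is independent of the family $\{\cx^{l-1}_{(A_{l-1}),r}\}_r$. The first step is to observe that, conditionally on $\mcF_{-i}$ (or more simply on the family $\{\cx^{l-1}_{(A_{l-1}),r}\}_r$), the variable $\cx^{l}_{(A_l),i}$ is exactly centered Gaussian with random variance
\[
V^{l}_{(A_l),i} = t\sum_{r\notin A_{l-1}} s_{ir} f(\cx^{l-1}_{(A_{l-1}),r})^2 .
\]

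\textbf{Moment bound \eqref{mombnd}.} The conditional Gaussianity gives $\EE|\cx^l_{(A_l),i}|^b \le C_b\, \EE (V^l_{(A_l),i})^{b/2}$. Since $f$ is a polynomial of degree $d$, Jensen's inequality (using $t\sum_r s_{ir}\le \bC_{\text{row}} t$) yields
\[
\EE (V^l)^{b/2} \le C \max_r \EE\bigl(1+|\cx^{l-1}_{(A_{l-1}),r}|^{bd}\bigr).
\]
This is bounded by the induction hypothesis applied with exponent $bd$, uniformly over $A_{l-1}$ and $n$. The base case $l=1$ is immediate, since $V^1 = t\sum_{r\notin A_0} s_{ir} f(0)^2$ is deterministic.

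\textbf{Concentration of the variance.} The core of the argument is to show that $V^l_{(A_l),i}$ is close to $\cq^l_i = tS\EE f(\cX^{l-1})^2$ in $L^2$, uniformly in $i$ and $A_l$. Apply the induction hypothesis \eqref{amp-cx} at level $l-1$ with $\varphi=f^2\in\PL$, $\cS = \{r: s_{ir}>0\}\setminus A_{l-1}$ and weights $\beta_r = K_n s_{ir}/\bs C_s\le 1$. Under Assumption~\ref{ass-card}, $|\cS|\le \Ccard K_n$, so $V^l - \cq^l_i$ is controlled by $C$ times such a normalized weighted average. The removed indices in $A_{l-1}$ contribute only $O(k/K_n)$. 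One subtlety remains: if $|\cS|$ does not tend to infinity, or the weights are small, the total weight is small, so the error is small anyway. I would handle this by splitting into the cases where the row mass $\sum_r s_{ir}$ is small or not, or more cleanly by proving directly
\[
\EE\Bigl(\sum_r s_{ir}\bigl(\varphi(\cx^{l-1}_r)-\EE\varphi(\cX^{l-1}_r)\bigr)\Bigr)^2 \to 0
\]
via a variance-plus-mean argument.

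\textbf{Main obstacle: decorrelation of $\cx^{l-1}_{(A_{l-1}),r}$ and $\cx^{l-1}_{(A_{l-1}),r'}$ for $r\neq r'$.} The induction hypothesis as stated only gives averaged convergence, so I would strengthen it to also carry the claim $\max_{r\neq r'}|\mathrm{Cov}(\varphi(\cx_r),\varphi(\cx_{r'}))|\to0$. To prove it, compare $\cx^{l-1}_{(A_{l-1}),r}$ with $\cx^{l-1}_{(A_{l-1}\cup\{r'\}),r}$ and symmetrically. The difference involves only the term $W_{rr'}f(\cdot)$ together with changes propagated through deeper levels; each is $O(1/K_n)$ in $L^2$ by the same recursive computation as in Lemma~\ref{u-v}, using $s_{ij}\le \bs C_s/K_n$, and the pseudo-Lipschitz property combined with \eqref{mombnd} transfers this to $\varphi$. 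After this replacement, conditionally on $\mcF_{-(r,r')}$ the two variables are independent Gaussians with variances $V_{(A_{l-1}\cup\{r'\}),r}$ and $V_{(A_{l-1}\cup\{r\}),r'}$, both concentrated at $\cq^{l-1}_r$ and $\cq^{l-1}_{r'}$ by the previous step. This gives asymptotic independence. The same reasoning gives $\EE\varphi(\cx^{l}_{(A_l),i})\to\EE\varphi(\cX^l_i)$ uniformly: conditionally the variable is $\cN(0,V)$, $V\to\cq^l_i$ in probability, and $\alpha\mapsto\EE\varphi(\sqrt\alpha\xi)$ is continuous on bounded sets, with uniform integrability from \eqref{mombnd}. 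Mean convergence plus vanishing covariances yields \eqref{amp-cx} for $b=2$. Any $b>0$ then follows from the $b=2$ case by Hölder's inequality for $b\le2$, and for $b>2$ by interpolating with the uniform moment bounds \eqref{mombnd}.
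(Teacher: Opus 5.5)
Your proposal is correct and follows essentially the paper's route. The ingredients match: conditional Gaussianity given $\mcF_{-i}$ for \eqref{mombnd}, and concentration of the conditional variance via the level-$(l-1)$ hypothesis with $\varphi=f^2$. Decorrelation then goes through the leave-one-out replacement $\cx_{(A),r}\to\cx_{(A\cup\{r'\}),r}$ (the paper's \eqref{negl}), followed by conditional independence given $\mcF_{-(r,r')}$.

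Two bookkeeping points. State the induction for removed sets of any bounded size, as the paper does with its sets $B$, because your decorrelation step uses $A_{l-1}\cup\{r'\}$ and $A_{l-1}\cup\{r,r'\}$. Also, for polynomial $f$ the $O(1/K_n)$ leave-one-out rate needs higher moments of the differences (available by conditional Gaussianity), not the purely Lipschitz computation of Lemma~\ref{u-v}.
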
 
\begin{proof}
In the expression \eqref{cx} of $\cx^l_{(A_l), i_{k-l+1}}$, we have set 
$|A_l| = k-l$. We need to extend a bit this expression to include a set of 
indices $B \subset[n]$ that might be larger than $A_l$ by writing for 
$i\not\in B$: 
\[
\cx^l_{(B), i} = \sum_{r\not\in B} W_{ir} f(\cx^{l-1}_{(B\cup\{i\}), r}) ,  
\]
which provides consistent iterations one we set 
$\cx^0_{(\cdot),\cdot} = 0$.  We also write 
\[
(\varsigma_{(B),i}^l)^2 = t \sum_{r\not\in B} s_{ir} 
  f(\cx^{l-1}_{(B\cup\{i\}),r})^2 . 
\]
We notice that $(\varsigma_{(B),i}^l)^2$ is the conditional variance of 
$\cx^l_{(B), i}$ given $\mcF_{-i}$, a fact that we shall use repeatidly in
the proof. 

The moment bound \eqref{mombnd} can be proven by recurrence on $l$. For $l=1$,
consider a set $A_1$ and an index $i\not\in A_1$. Since 
$\cx^1_{(A_1),i} = f(0) \sum_{r\not\in A_1} W_{ir}$, it is clear that the
bound~\eqref{mombnd} holds true for $l = 1$. Assuming~\eqref{mombnd} is true 
for $l$, let $i\in A_l$ and $A_{l+1} = A_l \setminus\{i\}$. We have
here $\cx^{l+1}_{(A_{l+1}),i} = \sum_{r\not\in A_{l+1}} W_{ir} 
f(\cx^l_{(A_l),r})$, and thus, 
\[
\EE \left| \cx^{l+1}_{(A_{l+1}),i} \right|^b = 
\EE \EE\left[ \left| \cx^{l+1}_{(A_{l+1}),i} \right|^b \ | \ \mcF_{-i} \right]
= \EE (\varsigma_{(A_{l+1}),i}^{l+1})^b  \EE|\xi|^b 
\]
which is bounded by the recurrence assumption. 

We now prove by recurrence on $l$ the convergence~\eqref{amp-cx} as well as
\begin{equation}
\label{negl}
\EE ( \cx^l_{(A_l),r} - \cx^l_{(A_l\cup\{i\}),r} )^2 \to 0 
\end{equation}
for all sequences $(A_l^{(n)})$, $(i_n)$ with $i_n\not\in A_l^{(n)}$, 
and $(r_n)$ with $r_n \not\in A_l^{(n)}\cup\{i_n\}$.  

We first notice that at the left hand side of~\eqref{amp-cx}, the terms for
which $i\in A_l$ have a negligible contribution. Thus, in all the remainder 
of the proof, we can assume without generality loss that 
$\cS \cap A_l = \emptyset$ for each $l\in [k]$. 

Let us start our recurrence with $l=1$. As above, consider a set $A_1$ and an 
index $i\not\in A_1$. We have 
\[
\EE \varphi(\cx^{1}_{(A_1),i}) = \EE 
 \varphi\left( f(0) \sum_{r\not\in A_1} W_{ir} \right) 
= \EE \varphi\left( \varsigma_{(A_1),i}^{1} \xi  \right) . 
\]

We also have  $|\varphi(x) - \varphi(y) | \leq C|x-y|(1+|x|^a + |y|^a)$ for
some $C,a > 0$.  Recalling that $\cX^l = [\cX^l_i] \sim\cN(0,\diag(\cq^l))$,
we can write that $\cX^1_i = \sqrt{\cq^1_i} \xi$. Note also that
$\cq^1_i = t f(0)^2 \sum_{r} s_{ir}$ and that 
$(\varsigma_{(A_1),i}^1)^2 = t f(0)^2 \sum_{r\not\in A_l} s_{ir}$ with 
$|A_1| = k-1$. With this, we have 
\[
\left| \EE \varphi(\cx^{1}_{(A_1),i}) - \EE\varphi(\cX^1_i) \right| 
 \leq C | \varsigma_{(A_1),i}^1 - \sqrt{\cq^1_i} | \EE |\xi|  
  (1 + (\varsigma_{(A_1),i}^1)^a |\xi|^a + (\cq^1_i)^{a/2} |\xi|^a )  
\]
which converges to zero for each sequence of sets $(A_1^{(n)})$ and sequence
of indices $(i_n)$. We now show that 
\begin{equation} 
\label{cx1}
\frac{1}{|\cS|^2} \EE\left(\sum_{i\in\cS} \beta_i \varphi(\cx^1_{(A_1),i}) 
  - \beta_i \EE \varphi(\cX^1_i) \right)^2 \to 0 . 
\end{equation} 
By developing the square, we obtain a sum $\sum_{i,j\in \cS}\cdots$. The 
diagonal $i=j$ is easily shown to be $\cO(1/|\cS|)$. Let us show that for 
$i \neq j$, it holds that 
\begin{equation}
\label{varl=1} 
 \EE \left( \varphi(\cx^1_{(A_1),i}) - \EE \varphi(\cX^1_i) \right)
 \left( \varphi(\cx^1_{(A_1),j}) - \EE \varphi(\cX^1_j) \right) \to 0 .
\end{equation} 
Since we showed that 
$\EE\varphi(\cx^1_{(A_1),i}) - \EE \varphi(\cX^1_i) \to 0$, it remains to show 
that $\EE \varphi(\cx^1_{(A_1),i}) \varphi(\cx^1_{(A_1),j}) - 
\EE \varphi(\cX^1_i) \EE \varphi(\cX^1_j) \to 0$ to obtain~\eqref{varl=1}. 
We first have 
\[
\EE (\cx^1_{(A_1), i} - \cx^1_{(A_1\cup\{j\}), i})^2 = 
 f(0)^2 \EE W_{ij}^2 \leq C / K_n 
\]
which establishes in passing~\eqref{negl} for $l=1$. Also, by using the 
pseudo-Lipschitz property of $\varphi$, we have (details for obtaining the 
terms $o_n(1)$ below omitted): 
\begin{align*}
 \EE \varphi(\cx^1_{(A_1),i}) \varphi(\cx^1_{(A_1),j}) 
&= \EE \varphi(\cx^1_{(A_1\cup\{j\}),i}) \varphi(\cx^1_{(A_1\cup\{i\}),j}) 
 + o_n(1)  \\ 
&= \EE 
\varphi\left( f(0) \sum_{r\not\in A_1 \cup \{ j \}} W_{ir} \right)
\varphi\left( f(0) \sum_{r\not\in A_1 \cup \{ i \}} W_{jr} \right) 
                         + o_n(1) \\
&= \EE \varphi(\varsigma^1_{(A_1\cup \{j\}),i} \xi) 
    \EE \varphi(\varsigma^1_{(A_1\cup \{i\}),j} \xi) + o_n(1) \\
&= \EE \varphi(\cX^1_{i}) \EE \varphi(\cX^1_{j}) + o_n(1) , 
\end{align*} 
hence~\eqref{varl=1}. By the moment bound~\eqref{mombnd} and dominated 
convergence, the convergence~\eqref{cx1} holds true. 

Given any moment $b > 2$, we also have 
\[
\sup 
\frac{1}{|\cS|} \left( \EE \left| \sum_{i\in\cS} 
 \beta_i \varphi(\cx^{1}_{(A_1),i}) - \beta_i 
  \EE \varphi(\cX^1_i) \right|^b\right)^{1/b} 
\leq \sup \frac{1}{|\cS|} \sum_{i\in\cS} 
 \left( \EE\left| \varphi(\cx^1_{(A_1),i}) - \EE \varphi(\cX^1_i) 
  \right|^b\right)^{1/b} < \infty 
\]
thanks to~\eqref{mombnd}, where the $\sup$ is taken on $\cS^{(n)}$, 
$(\beta_i^{(n)})_{i\in\cS^{(n)}}$, and $A_1^{(n)}$. The 
convergence~\eqref{amp-cx} follows for $l=1$. 

Assume the recurrence assumption is true for $l$. Letting $i\in A_l$ and
$A_{l+1} = A_l \setminus\{i\}$, we have 
\[
\EE \varphi(\cx^{l+1}_{(A_{l+1}),i}) = \EE \EE\left[ 
 \varphi\left( \sum_{r\not\in A_l} W_{ir} f(\cx^l_{(A_l),r}) \right) 
  \ | \ \mcF_{i} \right] 
= \EE \varphi(\varsigma_{(A_{l+1}),i}^{l+1} \xi )
\]
Recalling that $\cq^{l+1}_i = t\sum_{r} s_{ir} \EE f(\cX^l_r)^2$, we have
by using the recurrence assumption and the fact that $| A_l | = k-l$ is 
fixed that 
\[
\EE\left( (\varsigma_{(A_{l+1}),i}^{l+1})^2 - \cq^{l+1}_i \right)^2 \to 0 . 
\]
Writing $\cX^{l+1}_i = \sqrt{\cq^{l+1}_i} \xi$, we have 
\[
| \EE \varphi(\varsigma_{(A_{l+1}),i}^{l+1} \xi ) - \EE \varphi(\cX^{l+1}_i) | 
\leq C \EE| (\varsigma_{(A_{l+1}),i}^{l+1} - \sqrt{\cq^{l+1}_i} |) |\xi|  
 (1 + (\varsigma_{(A_{l+1}),i}^{l+1})^a |\xi|^a 
   + (\cq^{l+1}_i)^{a/2} |\xi|^a ), 
\]
we then have that 
\[
\EE \varphi(\cx^{l+1}_{(A_{l+1}),i}) - \EE \varphi(\cX^{l+1}_i) \to 0 , 
\]
by Cauchy-Schwarz and the bound~\eqref{mombnd}. 

We now show that 
\begin{equation} 
\label{cxX}
\frac{1}{|\cS|^2} \EE\left(\sum_{i\in\cS} 
 \beta_i \varphi(\cx^{l+1}_{(A_{l+1}),i}) 
  - \beta_i \EE \varphi(\cX^{l+1}_i) \right)^2 \to 0 . 
\end{equation} 
As for the case $l=1$, this will be true if we prove that 
$\EE \varphi(\cx^{l+1}_{(A_{l+1}),i}) \varphi(\cx^{l+1}_{(A_{l+1}),j}) - 
\EE \varphi(\cX^{l+1}_{i}) \EE \varphi(\cX^{l+1}_{j}) \to 0$ for $i\neq j$. 
We write 
\begin{align*} 
\cx^{l+1}_{(A_{l+1}),i} - \cx^{l+1}_{(A_{l+1}\cup\{j\}),i} &= 
      \sum_{r\not\in A_{l+1}\cup\{j\}} W_{ir} 
    \left( f(\cx^l_{(A_l),r}) 
   - f(\cx^l_{(A_l \cup\{j\}),r})\right)
 +  W_{ij} f(\cx^l_{(A_l),j}) \\
 &= \chi_1 + \chi_2 .
\end{align*}
We obviously have $\EE\chi_2^2 \to 0$. Moreover, since the random vectors
$[W_{ir}]_r$ and 
$[ \cx^l_{(A_l),r}, \cx^l_{(A_l\cup \{j\}),r} ]_r$ 
in the expression above are independent, we have 
\[
\EE\chi_1^2 = \EE \EE \left[ \chi_1^2 \ | \ \mcF_{-i}\right] = 
    \sum_{r\not\in A_{l+1}\cup\{j\}} t s_{ir} \EE 
    \left( f(\cx^l_{(A_l),r}) - f(\cx^l_{(A_l \cup\{j\}),r})\right)^2 
\]
which converges to zero by using the recurrence assumption~\eqref{negl}, 
the bound~\eqref{mombnd}, and the pseudo-Lipschitz property of $f$ as a 
polynomial. We thus obtain that 
\[
\EE\left( \cx^{l+1}_{(A_{l+1}),i} - \cx^{l+1}_{(A_{l+1}\cup\{j\}),i} \right)^2 
  \to 0, 
\]
and the convergence~\eqref{negl} is true for $l+1$. 
Using the pseudo-Lipschitz property of $\varphi$, this last result, and 
the bound~\eqref{mombnd}, we also have 
\begin{align*}
\EE \varphi(\cx^{l+1}_{(A_{l+1}),i}) \varphi(\cx^{l+1}_{(A_{l+1}),j}) 
&= \EE \varphi(\cx^{l+1}_{(A_{l+1}\cup\{j\}),i}) 
  \varphi(\cx^{l+1}_{(A_{l+1}\cup\{i\}),j}) + o_n(1) \\
&= \EE \varphi(\varsigma^{l+1}_{(A_{l+1}\cup \{j\}),i} \xi_1) 
    \varphi(\varsigma^{l+1}_{(A_{l+1}\cup \{i\}),j} \xi_2) + o_n(1) , 
\end{align*} 
where $[\xi_1,\xi_2]^\T \sim\cN(0, I_2)$ is a vector independent of everything
else. 
The remainder of the proof is similar to the case $l=1$ with the difference 
that now we need the recurrence assumption to obtain that 
$\EE (\varsigma^{l+1}_{(A_{l+1}\cup \{j\}),i} - \sqrt{\cq^{l+1}_i})^2 \to 0$. 
This leads to~\eqref{cxX}, and this convergence can be upgraded to any 
moment thanks to~\eqref{mombnd}. The proof of Proposition~\ref{AMP1} is 
complete.
\end{proof}

\begin{lemma}
\label{tx-cx}
For each $\varepsilon > 0$, there is a polynomial $f_\varepsilon$ such that 
the iterates~\eqref{cx} with $f = f_\varepsilon$ satisfy 
\[
\limsup_n \max_{i\in[n]} \EE(\cx^k_{i} - \tx^k_{i})^2 \leq \varepsilon.
\]
This polynomial can be chosen in such a way that $f_\varepsilon(0) = 
 \Tanh(0)$, and 
\[
\max_{\alpha \in [0, \sqrt{2}]} 
 \EE \left( f_\varepsilon(\alpha\xi) - \Tanh(\alpha\xi) \right)^2 \leq e 
\] 
for some $e > 0$ that depends only on $\varepsilon$. 
\end{lemma}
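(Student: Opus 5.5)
We compare the iterations \eqref{tx} and \eqref{cx} along a path of indices and argue by recurrence on $l\in[k]$. The claim to propagate is: for every $\varepsilon'>0$ there is $e>0$ such that, whenever $f=p_e$ is the polynomial of Lemma~\ref{approx} (applied with $C=\sqrt 2$, with $p_e(0)=\Tanh(0)$), we have
\[
\limsup_n \max_{A_l}\max_{i\notin A_l} \EE\left(\cx^l_{(A_l),i} - \tx^l_{(A_l),i}\right)^2 \leq \varepsilon'_l ,
\]
with $\varepsilon'_l$ tending to zero as $e\to 0$. For $l=1$ the two quantities coincide, since $f(0)=\Tanh(0)$. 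Taking $l=k$ and $\varepsilon'_k\le\varepsilon$ then gives the lemma with $f_\varepsilon = p_e$.

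For the inductive step, fix $i=i_{k-l}\notin A_{l+1}$ and write $A_l = A_{l+1}\cup\{i\}$. By \eqref{tx}, \eqref{cx} and the independence along a path of indices (the vector $[W_{ir}]_r$ is independent of $\mcF_{-i}$, which contains both families of iterates), conditioning on $\mcF_{-i}$ gives
\[
\EE(\cx^{l+1}_{(A_{l+1}),i}-\tx^{l+1}_{(A_{l+1}),i})^2 = t\sum_{r\notin A_l} s_{ir}\,\EE\left(f(\cx^l_{(A_l),r}) - \Tanh(\tx^l_{(A_l),r})\right)^2 .
\]
We split $f(\cx)-\Tanh(\tx) = (f(\cx)-\Tanh(\cx)) + (\Tanh(\cx)-\Tanh(\tx))$. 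The second piece is handled by the $1$-Lipschitz property of $\Tanh$ together with the recurrence hypothesis. After multiplication by $t\sum_r s_{ir}\le\log 2$, its contribution is at most $2\log 2\,\varepsilon'_l$.

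The first piece, $\EE(f(\cx^l_{(A_l),r})-\Tanh(\cx^l_{(A_l),r}))^2$, is the main obstacle: we must transfer the Gaussian approximation bound of Lemma~\ref{approx} to the non-Gaussian variable $\cx^l_{(A_l),r}$. For this we use the function $\varphi=(f-\Tanh)^2$, which belongs to $\PL$, and Proposition~\ref{AMP1}. More precisely, the proof of Proposition~\ref{AMP1} gives $\EE\varphi(\cx^l_{(A_l),r})-\EE\varphi(\cX^l_r)\to 0$ uniformly in $A_l$ and $r$; alternatively, one can apply \eqref{amp-cx} with $\beta_r = t K_n s_{ir}/\bs C_s\in[0,1]$ on the support set $\cS$ of row $i$. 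If the support is not large, or to avoid requiring $|\cS|\to\infty$, one uses directly the pointwise statement proved in the proposition. Since $\cX^l_r\sim\cN(0,\cq^l_r)$ and $\cq^l_r\le 1$ when $e$ is small (Lemma~\ref{qcq}), we have $\sqrt{\cq^l_r}\in[0,1]\subset[0,\sqrt2]$. Lemma~\ref{approx} then yields $\EE\varphi(\cX^l_r)\le e$, hence $\limsup_n \max_r \EE(f(\cx^l_{(A_l),r})-\Tanh(\cx^l_{(A_l),r}))^2\le e$.

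Two points need care. First, the moment bound \eqref{mombnd} and the $\PL$ constants depend on $f$, but $f$ is fixed before $n\to\infty$, so this is harmless. Second, the convergence in Proposition~\ref{AMP1} must hold uniformly over $r$ and $A_l$, which its statement (arbitrary sequences) provides.

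Collecting the two pieces gives $\varepsilon'_{l+1}\le 2\log 2\,(e+\varepsilon'_l)$. Hence $\varepsilon'_k\le C_k e$, and choosing $e$ small enough in terms of $\varepsilon$ and $k$ concludes the proof; the stated properties of $f_\varepsilon$ follow directly from Lemma~\ref{approx}.
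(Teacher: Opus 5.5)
Your proof is correct. It differs from the paper's proof in how it transfers the Gaussian approximation of Lemma~\ref{approx} to the non-Gaussian iterates.

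\textbf{The paper's route.} The paper propagates two bounds, one on $\EE(\cx^l-\tx^l)^2$ and one on $\EE(p_e(\cx^l)-\Tanh(\tx^l))^2$. It controls $\EE\bigl(p_e(\cx^{l+1}_{(A_{l+1}),i})-\Tanh(\cx^{l+1}_{(A_{l+1}),i})\bigr)^2$ at the new level by conditioning on $\mcF_{-i}$. Given $\mcF_{-i}$, $\cx^{l+1}_{(A_{l+1}),i}$ is exactly Gaussian with random variance $[R^{l+1}_{(A_{l+1}),i}]_{11}$. On the event that this variance is at most $2$, the conditional expectation is at most $e$. Proposition~\ref{AMP1} is used only to show that the variance concentrates around $\cq^{l+1}_i\le 1$. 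The bad event is then handled by Cauchy--Schwarz and the moment bound \eqref{mombnd}, which is why the paper needs the approximation on $[0,\sqrt 2]$.

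\textbf{Your route.} You treat the same quantity one level down, inside the conditional-variance sum. You apply the state-evolution convergence $\EE\varphi(\cx^l_{(A_l),r})-\EE\varphi(\cX^l_r)\to 0$ with the test function $\varphi=(f-\Tanh)^2\in\PL$. You then use Lemma~\ref{approx} at the deterministic scale $\sqrt{\cq^l_r}\le 1$, which Lemma~\ref{qcq} guarantees.

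\textbf{What each buys.} Your route avoids the truncation and the $\sqrt 2$ slack, since only $\alpha\in[0,1]$ is used. It also propagates a single inductive bound, giving the clean recursion $\varepsilon'_{l+1}\le 2\log 2\,(e+\varepsilon'_l)$. The price is that it relies on the unaveraged convergence $\EE\varphi(\cx^l_{(A_l),r})-\EE\varphi(\cX^l_r)\to 0$, which is established inside the proof of Proposition~\ref{AMP1} rather than in its statement. That convergence is shown there for arbitrary sequences, so it is indeed uniform, as you claim.

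\textbf{A minor slip.} Your weighted alternative via \eqref{amp-cx} also works, with one fix. The weight $\beta_r=tK_ns_{ir}/\bs C_s$ is only bounded by $t$, which can exceed $1$ when $\bC_{\text{row}}$ is small. Drop the factor $t$ from $\beta_r$ and absorb it into the prefactor.
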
 
\begin{proof}
Given a small $e > 0$, Lemma~\ref{approx} shows that there exists a 
polynomial $p_e$ such that $p_e(0) = \Tanh(0)$, and
\[
\max_{\alpha \in [0, \sqrt{2}]} 
 \EE \left( p_e(\alpha\xi) - \Tanh(\alpha\xi) \right)^2 \leq e . 
\] 
Unfolding the iterations~\eqref{cx} with $f = p_e$, we shall show by recurrence
on $l$ that for each $l\in[k]$, 
\begin{align*}
& \limsup_n\max_{A_l, i} 
 \EE(\cx^l_{(A_l),i} - \tx^l_{(A_l),i})^2 \leq C e, \ \text{and} \\  
& \limsup_n\max_{A_l, i} 
 \EE \left( p_e(\cx^l_{(A_l),i}) - \Tanh(\tx^l_{(A_l),i}) \right)^2 
  \leq C e ,  
\end{align*} 
where $C > 0$ is a constant that can change from an iteration to another. 
Setting $e = \varepsilon / C$ at the $k^{\text{th}}$ iteration, we obtain our 
result with $f_\varepsilon = p_e$. 

The following results are needed before starting our recurrence. For any 
polynomial $f$, we have from the expressions~\eqref{tx} and~\eqref{cx} that 
for each $A_{l+1}$ and each $i\not\in A_{l+1}$ that 
\[
\mcL\left(\begin{bmatrix} \cx^{l+1}_{(A_{l+1}),i}  \\ \tx^{l+1}_{(A_{l+1}),i}
  \end{bmatrix} \ \mid \ \mcF_i \right) = 
\cN\left(0, R^{l+1}_{(A_{l+1}),i} \right) 
  \quad\text{with}\quad 
 R^{l+1}_{(A_{l+1}),i} = 
  \begin{bmatrix}  
[R^{l+1}_{(A_{l+1}),i}]_{11} & [R^{l+1}_{(A_{l+1}),i}]_{12} \\ 
[R^{l+1}_{(A_{l+1}),i}]_{21} & [R^{l+1}_{(A_{l+1}),i}]_{22} 
 \end{bmatrix} 
\]
satisfying 
\begin{gather*} 
[R^{l+1}_{(A_{l+1}),i}]_{11} = 
t \sum_{r\not\in A_{l}} s_{ir} f(\cx^l_{(A_l),r})^2 , 
\quad 
[R^{l+1}_{(A_{l+1}),i}]_{22} = 
t \sum_{r\not\in A_{l}} s_{ir} \Tanh(\tx^l_{(A_l),r})^2  ,  \ \text{and} \\ 
[R^{l+1}_{(A_{l+1}),i}]_{12} = 
t \sum_{r\not\in A_{l}} s_{ir} f(\cx^l_{(A_l),r}) 
  \Tanh(\tx^l_{(A_l),r})    
\end{gather*} 
with $A_l = A_{l+1} \cup \{ i\}$. By consequence, 
\[
\mcL\left(\cx^{l+1}_{(A_{l+1}),i} - \tx^{l+1}_{(A_{l+1}),i}
   \ \mid \ \mcF_i \right) = 
\cN\left(0, t \sum_{r\not\in A_{l}} s_{ir} 
  \left( f(\cx^l_{(A_l),r}) - \Tanh(\tx^l_{(A_l),r})\right)^2 
  \right) . 
\]
We now tackle our recurrence. For $l=1$, considering a set $A_1$ and an index 
$i\not\in A_1$, we have 
$\cx^1_{(A_1),i} = \tx^1_{(A_1),i} \sim \cN(0, [R^{1}_{(A_1),i}]_{ab})$ 
with 
\[
\forall a,b \in \{1,2\}, \quad 
[R^{1}_{(A_1),i}]_{ab} = t \Tanh(0)^2 \sum_{r\not\in A_1} s_{ir} \leq 1
\]
by Lemma \ref{qcq}. By the construction of $p_e$, we therefore have that 
\[
\EE \left( p_e(\cx^1_{(A_1),i}) - \Tanh(\tx^1_{(A_1),i}) \right)^2 
  \leq e , 
\]
and the recurrence assumption is true for $l = 1$. Assume it is for $l > 1$. 
Let $i\in A_l$ and $A_{l+1} = A_l \setminus \{ i \}$. Then we have 
\begin{align*} 
\EE(\cx^{l+1}_{(A_{l+1}),i} - \tx^{l+1}_{(A_{l+1}),i})^2 
 &= \EE \EE \left[ (\cx^{l+1}_{(A_{l+1}),i} - \tx^{l+1}_{(A_{l+1}),i})^2 \ | \ 
   \mcF_{-i} \right]  \\ 
 &=  t \sum_{r\not\in A_{l+1}} s_{ir} 
  \EE \left( p_e(\cx^l_{(A_l),r}) - \Tanh(\tx^l_{(A_l),r})\right)^2 
\end{align*} 
which is bounded by $C e$ by the recurrence assumption. Since $\Tanh$ is 
Lipschitz, we can furthermore write 
\begin{multline*} 
\EE \left( p_e(\cx^{l+1}_{(A_{l+1}),i}) 
    - \Tanh(\tx^{l+1}_{(A_{l+1}),i}) \right)^2 \leq 
 2 \EE \left( p_e(\cx^{l+1}_{(A_{l+1}),i}) 
    - \Tanh(\cx^{l+1}_{(A_{l+1}),i} ) \right)^2 \\ 
  + 2 \EE(\cx^{l+1}_{(A_{l+1}),i} - \tx^{l+1}_{(A_{l+1}),i})^2,   
\end{multline*} 
and we need to control the first term at the right hand side. Using the 
bound~\eqref{mombnd}, we can write  
\begin{align*} 
& \EE \left( p_e(\cx^{l+1}_{(A_{l+1}),i}) 
   - \Tanh(\cx^{l+1}_{(A_{l+1}),i}) \right)^2  \\ 
&\leq \EE \EE \left[ \left( p_e(\cx^{l+1}_{(A_{l+1}),i}) 
    - \Tanh(\cx^{l+1}_{(A_{l+1}),i}) \right)^2 \ | \ \mcF_{-i} \right]  
    \1_{[R^{l+1}_{(A_{l+1}),i}]_{11} \leq 2 }  
    + C \PP\left[ [R^{l+1}_{(A_{l+1}),i}]_{11} > 2 \right]^{1/2}.
\end{align*} 
The first term at the right hand side is bounded by $e$ by the construction
of $p_e$. Regarding the second term, invoking the previous proposition with 
$\cS = \{ r\in[n] \ : \ s_{ir} > 0, r\not\in A_l \}$, and using that 
$|A_l|$ is fixed, we obtain that 
\[
\max_{A_{l+1}, i} 
 \EE\left( [R^{l+1}_{(A_{l+1}),i}]_{11} - \cq^{l+1}_i \right)^2 \ton 0.
\] 
Since $\cq^{l+1}_i \leq 1$, we obtain that 
$\max_{A_{l+1},i} \PP\left[ [R^{l+1}_{(A_{l+1}),i}]_{11} > 2 \right] 
 \to 0$, and the recurrence assumption is verified for $l+1$. 
\end{proof}

We now manage the iterates $\cx^l$, $\cz^l$ and $z^l$ which are all built 
around a polynomial activation function. Due to this polynomial nature, we 
can express these terms with the help of a tree formalism. The tree structure
below  is a simplification of the structure of \cite{bay-lel-mon-15}. 

Let $T = (V(T), E(T))$ be a rooted tree with the vertex set $V(T)$ and edge set
$E(T)$. The root of this tree is denoted $\circ$, and the distance of a vertex
$u$ to $\circ$ is $|u|$. The root has one child (thus, this is a planted tree),
and every vertex other than $\circ$ can have up to $d$ children. We denote as
$\pi(u)$ the parent of the vertex $u$, where the vertices are oriented towards
the root. Thus, $u\to v$ is equivalent to $v = \pi(u)$.  Every vertex $v$ has a
label $\ell(v) \in [n]$. The number of children of $v$ is denoted $c(v)$. We
denote as $L(T)$ the set of leaves of $T$.  
If a leaf $v \in L(T)$ has a maximal depth in the tree, we attribute to this
leaf a number $c(v) \in \{0,\ldots d\}$ as if this vertex was the parent of 
$c(v)$ children which were pruned from the tree. If the depth of this 
leaf $v\in L(T)$ is not maximal, then we keep the natural value $c(v) = 0$. 
We also use the following notations: $\overline\cT^k$ is the set of
such labelled trees, with depth $k$ at most.  $\widecheck\cT^k
\subset\overline\cT^k$ is the subset that satisfies the following condition:
there is no path $v_1=\circ \leftarrow v_2\leftarrow \cdots \leftarrow v_i$ in
which there exists two identical labels $\ell(\cdot)$. Furthermore,  
$\cT^k \subset \overline\cT^k$ is the subset that satisfies the following 
non-backtracking condition: if $v_1=\circ \leftarrow v_2\leftarrow \cdots
\leftarrow v_i$, then the corresponding sequence of labels $\ell(\cdot)$
non-backtracking. This means that for each $j \in [i-2]$, the three labels
$\ell(v_j)$, $\ell(v_{j+1})$ and $\ell(v_{j+2})$ are distinct.  $\cT^k_i
\subset \cT^k$ is the subset of trees in $\cT^k$ for which the label 
$\ell(\circ)$ of the root is $i$, and of course, the label of the child $v$ of 
the root satisfies $\ell(v) \neq i$. We apply a similar definition for 
$\widecheck\cT_i^k$. Notice that $\widecheck\cT_i^k\subset \cT_i^k$.  Such a 
tree will be called a non-backtracking tree (NBT). 

Given a tree $T$, write 
\[
W(T) = \prod_{(u\to v) \in E(T)} W_{\ell(u) \ell(v)}, \quad 
\Gamma(T) = \prod_{(u\to v) \in E(T)} \alpha_{c(u)}, \quad \text{and} \quad  
\cx(T) = \prod_{v\in L(T)} (\cx^0_{\ell(v)})^{c(v)}, 
\] 
where we recall that the $\alpha_\ell$'s are the coefficients of the 
polynomial $f$.

With this formalism, similarly to \cite[Lemma 1]{bay-lel-mon-15}, we have  
\[
\cz_i^k = \sum_{T\in \cT^k_i} W(T) \Gamma(T) \cx(T) 
\quad\text{and}\quad 
\cx_i^k = \sum_{T\in \widecheck\cT^k_i} W(T) \Gamma(T) \cx(T) .  
\]
Given an integer $b > 0$ and $b$ trees $T_1, \ldots, T_b \in \overline\cT_i^k$,
define the graph $G = \bs G(T_1,\ldots, T_b)$ as being the rooted, undirected, 
and labelled graph obtained by merging the nodes of these trees that have the 
same label $\ell(\cdot)$.  This common label will be the
label of the resulting node in $G$.  Of course, the root node $\circ$ of $G$
will have the label $\ell(\circ) = i$.  The other nodes are numbered, say, in
the increasing order of their labels.  The edges of $G$ are furthermore
unweighted.  

For a tree $T$ labelled as above and for $j,l \in [n]$, define 
\[
\phi(T)_{j l} = \left| \left\{ (u\to v) \in E(T), \ \{ \ell(u),\ell(v) \} = 
  \{j, l\} \right\} \right| .  
\]
Finally, recalling Assumption~\ref{ass-card}, we define the set $\cK$ as 
\[
\cK = \left\{ \{i,j\} \subset [n], \ s_{ij} > 0 \right\}, 
\]
and the section $\cK_i$ for $i\in[n]$ as 
\[
\cK_i = \left\{ j \in [n], \ s_{ij} > 0 \right\}. 
\]

The following lemma is proven in Appendix~\ref{prf-cA}. 
\begin{lemma}
\label{cA} 
Let $b, r \geq 2$ be two integers. 
Let $\cA \subset (\cT_i^k)^{\otimes b}$ be such that each $b$--tuple 
$(T_1, \ldots, T_b) \in \cA$ satisfies the two following conditions: 
\begin{itemize}
\item  $|V(\bs G(T_1,\ldots, T_b))| \leq r$.
\item When $\sum_{k=1}^b \phi(T_k)_{jl} > 0$, it holds that $\{j,l\} \in \cK$. 
\end{itemize}
Then, 
\[
\left| \cA \right| \leq C K_n^{r-1}.
\]
\end{lemma}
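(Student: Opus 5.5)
We bound $|\cA|$ by an injective encoding of each $b$--tuple $(T_1,\ldots,T_b)\in\cA$ into a pair made of a \emph{shape} and a \emph{labelling} of a small graph, then count the two parts separately.

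\textbf{Shape and labelling.} Since $k$ and $d$ are fixed, each $T_m$ has at most $N = 1 + d + \cdots + d^{k}$ vertices. Hence the number of unlabelled planted trees of depth at most $k$, together with the pruned numbers $c(v)\in\{0,\ldots,d\}$ attached to the deepest leaves, is bounded by a constant. We now define the shape. Write $G = \bs G(T_1,\ldots,T_b)$ and let $p = |V(G)| \leq r$. Number the vertices of $G$ as $0,1,\ldots,p-1$, with $0$ the root, in order of first appearance along a fixed traversal (say breadth-first) of $T_1$, then $T_2$, and so on. The shape is the list of $b$ unlabelled trees together with the map sending each tree vertex to its number in $\{0,\ldots,p-1\}$. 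Since $b\leq$ fixed, $p\leq r$ and each tree has at most $N$ vertices, the number of possible shapes is bounded by a constant $C(b,r,k,d)$. The tuple is recovered from its shape together with the injective labelling $\lambda:\{0,\ldots,p-1\}\to[n]$ with $\lambda(0)=i$. So it suffices to show that, for a fixed shape, the number of admissible $\lambda$ is at most $C K_n^{p-1}\leq C K_n^{r-1}$ (we may take $K_n\geq 1$).

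\textbf{Counting labellings.} The edges of $G$ are the images of the tree edges, so $G$ is connected: every vertex is linked to the root through the tree containing it. Choose a spanning tree of $G$ rooted at $0$ and order the vertices so that each $a\geq 1$ has its spanning-tree parent $\pi(a)$ appearing before it; the first-appearance order already has this property. The second condition of the lemma says that every edge $\{a,\pi(a)\}$ of $G$ satisfies $\{\lambda(a),\lambda(\pi(a))\}\in\cK$, that is, $\lambda(a)\in\cK_{\lambda(\pi(a))}$. By Assumption~\ref{ass-card}, $|\cK_j|\leq \Ccard K_n$ for every $j$. We assign the labels greedily in this order. The root label is fixed to $i$, and once $\lambda(\pi(a))$ is known there are at most $\Ccard K_n$ choices for $\lambda(a)$. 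The number of labellings is therefore at most $(\Ccard K_n)^{p-1}$. Summing over the constantly many shapes gives $|\cA|\leq C K_n^{r-1}$, with $C$ depending on $b,r,k,d,\Ccard$.

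\textbf{Main obstacle.} The only delicate point is the encoding itself: we must check that the map $(T_1,\ldots,T_b)\mapsto(\text{shape},\lambda)$ is injective and that the number of shapes does not depend on $n$. Both follow from the fact that merging is done exactly by equal labels, so identical labels correspond to identical vertex numbers. The non-backtracking property of $\cT_i^k$ is not needed here. It only restricts the set, and so can only improve the bound.
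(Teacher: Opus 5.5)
Your proof is correct and follows essentially the same route as the paper. Both arguments multiply a bounded number of unlabelled configurations by a count of labellings along a rooted spanning tree of the merged graph, in which each label must lie in $\cK$ of its parent's label; your shape encoding makes explicit the paper's unproved claim that each labelled merged graph arises from boundedly many $b$--tuples, and your greedy bound $(\Ccard K_n)^{p-1}$ is a cleaner version of the paper's iterated leaf-pruning step, reaching the same $C K_n^{r-1}$.
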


\begin{proposition}
\label{indep-amp} 
For each even integer $b\geq 2$, it holds that 
$\EE\left( \cx^k_i - \cz^k_i \right)^b \leq C / K_n$ and 
$\EE\left( \cz^k_i - z^k_i \right)^b \leq C / K_n$ . 
\end{proposition}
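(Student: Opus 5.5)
The plan is to follow the tree-expansion method of \cite[Lemma 1 and Prop.~2]{bay-lel-mon-15} and \cite{hac-24}, using the representations already stated:
\[
\cz_i^k = \sum_{T\in \cT^k_i} W(T) \Gamma(T) \cx(T), \qquad \cx_i^k = \sum_{T\in \widecheck\cT^k_i} W(T) \Gamma(T) \cx(T).
\]
For the AMP iterates $z^k_i$ we first prove an analogous expansion by induction on $k$. It is a sum over $\overline\cT^k_i$ of the same weights, in which the Onsager term $\diag((W\odot W)f'(z^l))f(z^{l-1})$ exactly cancels the contributions of trees containing a backtracking step $\ell(v_j)=\ell(v_{j+2})$ with a \emph{single} traversal of the backtracking edge, that is, with $W_{ab}^2$ produced by a backtrack. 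This is the content of \cite[Lemma 3]{bay-lel-mon-15}. What survives in $\cz_i^k - z_i^k$ is a signed sum over trees with backtracking in which some edge $\{a,b\}$ carries total multiplicity at least $3$ (or the backtrack interacts with another occurrence of the same edge). Similarly, $\cx_i^k-\cz_i^k$ is a sum over $T\in\cT_i^k\setminus\widecheck\cT^k_i$, i.e.\ NB trees in which some root-to-leaf path repeats a label at distance at least $3$.

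In both cases we write the difference as $\sum_{T\in\mathcal D}W(T)\Gamma(T)\cx(T)$ and expand the $b$-th moment:
\[
\EE\Big(\sum_{T\in\mathcal D} W(T)\Gamma(T)\cx(T)\Big)^b = \sum_{T_1,\dots,T_b\in\mathcal D} \Gamma(T_1)\cdots\Gamma(T_b)\,\cx(T_1)\cdots\cx(T_b)\,\EE\prod_{a}W(T_a).
\]
Here $\cx^0$ is deterministic (equal to $0$, with $f(0)$ absorbed), so the $\Gamma$ and $\cx$ factors are bounded by constants depending on $d,k,b$ and the coefficients $\alpha_\ell$. By independence and the Gaussianity of the entries, the expectation vanishes unless every edge of $G=\bs G(T_1,\dots,T_b)$ is covered at least twice. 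In that case
\[
\Big|\EE\prod_{a} W(T_a)\Big| \le C\prod_{\{j,l\}\in E(G)} s_{jl}^{m_{jl}/2} \le C K_n^{-\sum m_{jl}/2} \le C K_n^{-|E(G)|},
\]
with $m_{jl}=\sum_a\phi(T_a)_{jl}\ge 2$, and it also vanishes unless all edges belong to $\cK$. Grouping the tuples by their merged graph shape $G$ and applying Lemma~\ref{cA} with $r=|V(G)|$, the contribution of each shape is at most $CK_n^{|V(G)|-1-|E(G)|}$, times an extra factor $K_n^{-1/2}$ for every edge with odd multiplicity $m_{jl}\ge 3$. Since the trees have bounded size, there are finitely many shapes.

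The core of the argument is therefore a combinatorial claim: for every shape arising from a tuple in $\mathcal D^b$ with every edge covered at least twice, either $G$ has a cycle, so $|E(G)|\ge|V(G)|$, or some edge has multiplicity at least $3$ (in fact at least $4$, by the parity of Gaussian moments). For $\cx-\cz$, each $T_a$ contains a path $j_1,j_2,\dots,j_s=j_1$ with $s\ge 4$ repeating a label. After merging, this gives a cycle of length at least $3$ in $G$, unless the path retraces, which the NB condition forbids except through multiplicity. Either way $|V(G)|-1-|E(G)|\le -1$, or the excess multiplicity gives an extra $K_n^{-1}$. For $\cz-z$, the surviving trees carry an edge of multiplicity at least $3$ coming from the uncancelled backtrack, again producing an extra $K_n^{-1/2}$. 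Pairing these odd-multiplicity edges through the second moment then yields $K_n^{-1}$. Note that $G$ connected gives $|E(G)|\ge |V(G)|-1$, which only yields $O(1)$; the $1/K_n$ improvement must come from this defect.

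I expect the main obstacle to be establishing, carefully and uniformly, the precise cancellation in the AMP expansion for a variance profile: the Onsager coefficient uses $W_{ir}^2$ rather than $ts_{ir}$, which makes the cancellation exact at the tree level but forces careful bookkeeping of which multiplicities survive. A second difficulty is checking that every surviving shape indeed loses a factor $K_n^{-1}$ rather than merely $K_n^{-1/2}$. I would first reproduce the case analysis of \cite[Lemma 3 and Lemma 4]{bay-lel-mon-15}, replacing their $n^{-1}$ counting by the $K_n$ counting from Lemma~\ref{cA}, and handle odd multiplicities by noting that each edge's total multiplicity must be even for a nonzero Gaussian expectation. Hence any edge with multiplicity exceeding $2$ has multiplicity at least $4$, contributing $K_n^{-2}$ in place of $K_n^{-1}$.
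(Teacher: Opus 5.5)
Your proposal is correct and follows the paper's route. It uses the tree expansions over $\cT^k_i\setminus\widecheck\cT^k_i$ and over the backtracking trees of \cite[Lemma 3]{bay-lel-mon-15}, Gaussian moments that force every edge multiplicity to be at least $2$ and give the factor $K_n^{-r/2}$, and Lemma~\ref{cA} for the count. The extra $K_n^{-1}$ comes from a cycle in the merged graph for $\cx^k_i-\cz^k_i$, and for $\cz^k_i-z^k_i$ from an edge of multiplicity at least $3$, hence at least $4$ by parity, which is exactly the paper's inequality $2|E(G)|+2\le r$.

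The intermediate remark about an extra $K_n^{-1/2}$ being ``paired through the second moment'' is unnecessary. The parity argument in your last paragraph is what closes the proof, as in the paper.
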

Combining the results of this proposition with the bound~\eqref{mombnd}, 
we obtain that 
\begin{equation}
\label{momz} 
\forall b > 0, \ 
 \sup_n \max_{i\in[n]} \EE \left| z^l_i \right|^b < \infty
\end{equation}
for each $b > 0$, a bound that will be useful later. 
\begin{proof}
We know that for each $T_1, \ldots, T_b\in\cT^k_i\setminus \widecheck\cT^k_i$, 
it holds that $\sum_{j<l} \phi(T_1)_{jl} +\cdots+ \phi(T_b)_{jl} \leq C_b$ with
$C_b = b(1+d+\cdots+d^{k-1})$. For $r \in [C_b]$, define the set $\cC_i(r)$ as 
\begin{align*}
\cC_i(r) = \Bigl\{ &  (T_1,\ldots,T_b) \, : \, T_1, \ldots, T_b  \in 
     \cT^k_i\setminus \widecheck\cT^k_i, \\ 
 &\forall j < l, \ \phi(T_1)_{jl} + \cdots + \phi(T_b)_{jl} \neq 1 , \\ 
 &\forall j < l, \ \phi(T_1)_{jl} + \cdots + \phi(T_b)_{jl} > 0 \ 
      \Rightarrow \{j,l\} \in \cK,  \\ 
 & \sum_{j<l} \phi(T_1)_{jl} + \cdots + \phi(T_b)_{jl} = r \Bigr\}. 
\end{align*} 
With this definition, we have 
\begin{align*} 
\EE (\cz_i^k - \cx_i^k)^b &=  
 \EE \sum_{T_1, \ldots, T_b \in \cT^k_i\setminus \widecheck\cT^k_i  } 
  \Gamma(T_1) \ldots \Gamma(T_b) x(T_1) \ldots x(T_b) W(T_1) \ldots W(T_b) \\ 
&= \EE \sum_{r=2}^{C_b} \sum_{(T_1,\ldots, T_b) \in \cC_i(r)} 
  \Gamma(T_1) \ldots \Gamma(T_b) x(T_1) \ldots x(T_b) W(T_1) \ldots W(T_b) \\ 
&\leq C \sum_{r=2}^{C_b} \sum_{(T_1,\ldots, T_b) \in \cC_i(r)} 
  \left| \EE W(T_1) \ldots W(T_b) \right| .
\end{align*} 

Fix $r$ and assume that $\cC_i(r) \neq \emptyset$. For each 
$(T_1,\ldots, T_b) \in \cC_i(r)$, we have  
\[
\left| \EE W(T_1) \ldots W(T_b) \right| = 
 \prod_{j<l} \Bigl| \EE W_{jl}^{\phi(T_1)_{jl}+\cdots+\phi(T_b)_{jl}} \Bigr| 
 \leq C K_n^{- r / 2}. 
\]
We need to show that $| \cC_i(r) | \leq C K_n^{r/2-1}$ to obtain the first
bound in the statement. 
Given $T_1,\ldots, T_b \in \cC_i(r)$, the graph $G = \bs G(T_1,\ldots, T_b)$ 
satisfies 
\[
 | E(G) | = \sum_{j<l} \1_{\phi(T_1)_{jl} +\cdots+ \phi(T_b)_{jl} \geq 2}, 
\]
which shows that $| E(G) | \leq r/2$. For any connected graph $G'$, it is 
well-known that $|V(G')| \leq |E(G')| +1$ with equality if and only if 
$G'$ is a tree. The crucial observation here is that since 
$T_1,\ldots,T_b \in \cC_i(r)$, the graph $G$ is not a tree because there is at
least one label that is repeated in some path belonging to $T_1$ (and similarly
to $T_2, ..., T_b$). Therefore $| V(G) | \leq r / 2$. It remains to apply 
Lemma~\ref{cA} to obtain the first bound.  

We now turn to the second bound. In a directed and labelled graph,  
\begin{itemize}
\item
A backtracking path of length $3$ is a path  
$a \to b \to c \to d$ such that $\ell(a) = \ell(c)$ and $\ell(b) = \ell(d)$. 
\item A backtracking star is a structure 
$a,b \to c \to d$ where $\ell(a) = \ell(b) = \ell(d)$. 
\end{itemize}

In \cite[Lemma 3]{bay-lel-mon-15}, it is shown that  
\[
z_i^k = \cz_i^k + \sum_{T\in \cB_i^k} W(T) \widetilde\Gamma(T) \cz(T) 
\]
where $\widetilde\Gamma(T)$ is bounded and where $\cB_i^k$ is 
a certain subset of $\overline\cT^k_i$ such that each $T \in \cB^k_i$ contains
at least one backtracking path of length $3$ or a backtracking star. 

With this at hand, we have 
\[
\EE(z_i^k - \cz_i^k)^b \leq C 
 \sum_{T_1, \ldots, T_b \in \cB_i^k} \left| \EE W(T_1) \ldots W(T_b) \right| .
\]
Given an integer $r\in [C_b]$, we define the set $\cD_i(r)$ similarly
to $\cC_i(r)$ above except for the fact that $T_1, \ldots, T_b \in \cB_i^k$ 
instead of $\cT^k_i\setminus \widecheck\cT^k_i$. Fixing $r$ such that 
$\cD_i(r) \neq \emptyset$, we have that $| \EE W(T_1) \ldots W(T_b) | 
 \leq C K_n^{-r/2}$ when $(T_1, \ldots, T_b) \in \cD_i(r)$. 
Furthermore, for $G = \bs G(T_1, \ldots, T_b)$, we recall that 
$| E(G) | = \sum_{j<l} \1_{\phi(T_1)_{jl} + \cdots + \phi(T_b)_{jl} \geq 2}$. 
Due to the presence of a backtracking path of length $3$ or a backtracking
star in each of the the trees $T_1, \ldots, T_b$, we observe that 
\[
2 | E(G) | + 2 \leq \sum_{j<l} \phi(T_1)_{jl} +\cdots+ \phi(T_b)_{jl} = r. 
\]
Since $G$ is connected, $|V(G) | \leq | E(G) | + 1  \leq r/2$. 
It remains to apply Lemma~\ref{cA} again. 
\end{proof}

Our last approximation result relates the iterates $z^k$ with the $x^k$: 
\begin{lemma}
\label{z-x} 
Let $C_W > 0$ be a constant, and define the probability event 
$\cE = [ \| W \| \leq C_W ]$. 
For each $\varepsilon > 0$, there is a polynomial $f_\varepsilon$ such that 
the iterates~$z^l$ obtained with $f = f_\varepsilon$ satisfy 
\[
\limsup_n \EE \left[ \left\| z^k - x^k \right\|_n^2 \1_{\cE} \right] 
  \leq \varepsilon . 
\] 
This polynomial can be chosen in such a way that $f_\varepsilon(0) = 
 \Tanh(0)$ and 
\[
\max_{\alpha \in [0, \sqrt{2}]} 
 \EE \left( f_\varepsilon(\alpha\xi) - \Tanh(\alpha\xi) \right)^2 \leq c 
 \quad\text{and} \quad 
\max_{\alpha \in [0, \sqrt{2}]} 
 \EE \left( f_\varepsilon'(\alpha\xi) - \Tanh'(\alpha\xi) \right)^2 \leq c . 
\]
for some $c > 0$ that depends on $\varepsilon$ only. 
\end{lemma}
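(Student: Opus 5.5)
We compare the two AMP recursions \eqref{amp-z} and \eqref{amp-x} directly, by induction on $l\in[k]$, on the event $\cE$. Let $\Delta^l = z^l - x^l$. The two recursions differ in three places: the activation ($f$ versus $\Tanh$), the Onsager coefficient ($\diag((W\odot W) f'(z^l))$ versus $\diag(tS\,\EE\Tanh'(X^l))$), and the previous iterate.

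\textbf{Step 1: decomposition.} Write
\begin{align*}
\Delta^{l+1} &= W\bigl(f(z^l)-\Tanh(x^l)\bigr)
 - \diag\bigl((W\odot W)f'(z^l) - tS\,\EE\Tanh'(X^l)\bigr) f(z^{l-1}) \\
 &\phantom{=} - \diag\bigl(tS\,\EE\Tanh'(X^l)\bigr)\bigl(f(z^{l-1})-\Tanh(x^{l-1})\bigr) .
\end{align*}
The following bounds hold on $\cE$:
\begin{itemize}
\item The first term has norm at most $C_W\|f(z^l)-\Tanh(x^l)\|_n$.
\item Since $\|tS\|_\infty\le \log 2$ and $|\Tanh'|\le 1$, the third term is bounded by $\|f(z^{l-1})-\Tanh(x^{l-1})\|_n$.
\item For the activation mismatch, split $f(z^l)-\Tanh(x^l) = (f(z^l)-\Tanh(z^l)) + (\Tanh(z^l)-\Tanh(x^l))$. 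The second piece is controlled by $\|\Delta^l\|_n$ because $\Tanh$ is $1$-Lipschitz.
\end{itemize}
So everything reduces to bounding the two ``empirical'' quantities
\[
\frac1n\sum_i \bigl(f(z^l_i)-\Tanh(z^l_i)\bigr)^2
\quad\text{and}\quad
\frac1n\sum_i \Bigl([(W\odot W)f'(z^l)]_i - [tS\,\EE\Tanh'(X^l)]_i\Bigr)^2 f(z^{l-1}_i)^2 .
\]
Each should be at most $Cc$ asymptotically, in expectation.

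\textbf{Step 2: state evolution for $z^l$.} Transfer Proposition~\ref{AMP1} to $z^l$ using Proposition~\ref{indep-amp} and the moment bound \eqref{momz}. Since $\EE(\cx^l_i-z^l_i)^2\to 0$ and test functions are pseudo-Lipschitz with bounded moments, we get, for $\varphi\in\PL$,
\[
\EE\Bigl|\frac1n\sum_i\varphi(z^l_i)-\EE\varphi(\cX^l_i)\Bigr|\to 0 ,
\]
and likewise for weighted sums over the sets $\cS=\cK_i$ with weights $ts_{ir}K_n$. Here $\cX^l\sim\cN(0,\diag(\cq^l))$, and by Lemma~\ref{qcq} we have $\cq^l_i\le 1\le 2$ and $\|\cq^l-q^l\|_\infty\le 10\sqrt c$.

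With $\varphi=(f-\Tanh)^2$ and the approximation guarantee on $[0,\sqrt2]$ from Lemma~\ref{approx}, the first empirical quantity tends to at most $c$.

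For the Onsager term, $[(W\odot W)f'(z^l)]_i$ has conditional mean $[tS f'(z^l)]_i$. Its fluctuation has variance $\sum_r s_{ir}^2\,\EE f'(z^l_r)^2\le C/K_n$, using independence-type arguments through the tree/NB representation, or crudely via \eqref{momz} and a leave-one-out replacement. In turn, $[tS f'(z^l)]_i$ concentrates, by the row-wise version of \eqref{amp-cx}, around $[tS\,\EE f'(\cX^l)]_i$. This differs from $[tS\,\EE\Tanh'(X^l)]_i$ by $O(\sqrt c)$: use the derivative approximation in Lemma~\ref{approx}, and compare $\cX^l$ with $X^l$ through $\|\cq^l-q^l\|_\infty$ together with smoothness of $\alpha\mapsto\EE\Tanh'(\alpha\xi)$. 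Combining with Cauchy--Schwarz and \eqref{momz} for $f(z^{l-1})$ bounds the second quantity by $C\sqrt c+o(1)$.

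\textbf{Step 3: closing the recursion.} We obtain
\[
\limsup_n\EE[\|\Delta^{l+1}\|_n^2\1_\cE]\le C\bigl(\limsup_n\EE[\|\Delta^l\|_n^2\1_\cE]+\limsup_n\EE[\|\Delta^{l-1}\|_n^2\1_\cE]\bigr)+C\sqrt c ,
\]
with $\Delta^0=0$ and $\Delta^1=W(f(0)-\Tanh(0))=0$. Iterating $k$ times gives a bound $C_k\sqrt c$. We then choose $c$, hence $f_\varepsilon=p_e$ from Lemma~\ref{approx} with $C=\sqrt2$, so that $C_k\sqrt c\le\varepsilon$.

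The main obstacle is Step 2's Onsager comparison. It needs row-wise concentration of $(W\odot W)f'(z^l)$, which is a weighted sum over the $\le\Ccard K_n$ neighbours of $i$ of quantities correlated with the row $W_{i\cdot}$. I would handle it by replacing $z^l_r$ with the NB iterate $\cz^{l}_{(i),r}$, which is independent of $\{W_{ir}\}_r$, via Proposition~\ref{indep-amp}. Applying Proposition~\ref{AMP1} with $\cS=\cK_i$ then yields the needed convergence uniformly in $i$.
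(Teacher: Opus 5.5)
Your skeleton matches the paper's proof. You argue by induction on $l$ on $\cE$, and you compare the Onsager coefficients through three pieces: the term $(W\odot W-tS)f'(z^l)$, the concentration of $tS f'(z^l)$ around $tS\,\EE f'(\cX^l)$ (Propositions~\ref{AMP1} and~\ref{indep-amp}), and Lemmas~\ref{approx} and~\ref{qcq}. Your state-evolution treatment of $\frac1n\sum_i (f(z^l_i)-\Tanh(z^l_i))^2$ with $\varphi=(f-\Tanh)^2$ is also correct.

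\textbf{The main obstacle is not resolved as proposed.} The NB iterate $\cz^l_{(i),r}$ is independent of $W_{ir}$, but not of the rest of row $i$. The non-backtracking rule only forbids immediate returns. Already $\cz^2_{(i),r}=\sum_{s\neq i}W_{rs}f(\cz^1_{(r),s})$, with $\cz^1_{(r),s}=f(0)\sum_{u\neq r}W_{su}$, contains $W_{si}=W_{is}$ for $s\neq r$ (the cycle $i\to r\to s\to i$ is allowed). Consequently, in $\EE\bigl(\sum_r (W_{ir}^2-ts_{ir})f'(\cz^l_{(i),r})\bigr)^2$ there are order $K_n^2$ cross terms $r\neq r'$. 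Each is a priori of size $K_n^{-2}$, and they do not vanish by independence, so your $C/K_n$ variance bound is not justified. The paper takes this term from \cite[Lemma~21]{hac-24}.

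Within the paper's toolkit, the right leave-one-out object is the path-independent iterate $\cx^l_{(\{i\}),r}$ from the proof of Proposition~\ref{AMP1}:
\begin{itemize}
\item it never involves the index $i$, so it is independent of row $i$ and the conditional mean-zero and variance computation is legitimate;
\item it is $L^2$-close to $z^l_r$, uniformly, by~\eqref{negl} and Proposition~\ref{indep-amp};
\item the replacement error $\sum_{r\in\cK_i}(W_{ir}^2-ts_{ir})\bigl(f'(z^l_r)-f'(\cx^l_{(\{i\}),r})\bigr)$ is negligible by Cauchy--Schwarz over the at most $\Ccard K_n$ neighbours, since $\sum_{r\in\cK_i}(W_{ir}^2-ts_{ir})^2$ is $O(1/K_n)$ in every $L^p$ while $\sum_{r\in\cK_i}\bigl(f'(z^l_r)-f'(\cx^l_{(\{i\}),r})\bigr)^2$ is $o(K_n)$.
\end{itemize}

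\textbf{Step~3 has a uniformity problem.} Choosing $c$ so that $C_k\sqrt c\le\varepsilon$ requires $C_k$ to be independent of $c$. However, you bound the residual $\bigl([tS\,\EE f'(\cX^l)]_i-[tS\,\EE\Tanh'(X^l)]_i\bigr)f(z^{l-1}_i)$ using~\eqref{momz} for $f(z^{l-1})$. That constant depends on the polynomial $f=p_c$ and may blow up as $c\to 0$. There are two fixes.
\begin{itemize}
\item Control $\limsup_n \frac1n\sum_i\EE f(z^{l-1}_i)^2$ through the state evolution, which gives $\frac1n\sum_i\EE f(\cX^{l-1}_i)^2\le 2+2c$.
\item Alternatively, split the Onsager mismatch as the paper does:
$\diag\bigl((W\odot W)f'(z^l)-tS\,\EE f'(\cX^l)\bigr)f(z^{l-1})+\diag\bigl(tS\,\EE f'(\cX^l)\bigr)\bigl(f(z^{l-1})-\Tanh(x^{l-1})\bigr)+\diag\bigl(tS(\EE f'(\cX^l)-\EE\Tanh'(X^l))\bigr)\Tanh(x^{l-1})$.
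The first term vanishes as $n\to\infty$ at fixed $c$. The second has coefficients bounded by $(1+\sqrt c)\log 2$ and feeds the recursion. The third is $O(\sqrt c)$ uniformly because $|\Tanh|\le 1$. In this arrangement, $c$-dependent moment bounds only ever multiply terms that vanish as $n\to\infty$.
\end{itemize}
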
 

The proof is close to \cite[end of the proof of Theorem 2]{hac-24}. 
\begin{proof}
Given a small $c > 0$, Lemma~\ref{approx} shows that there exists a polynomial 
$p_c$ such that $p_c(0) = \Tanh(0)$,  
\[
\max_{\alpha \in [0, \sqrt{2}]} 
 \EE \left( p_c(\alpha\xi) - \Tanh(\alpha\xi) \right)^2 \leq c , 
 \quad\text{and} \quad 
\max_{\alpha \in [0, \sqrt{2}]} 
 \EE \left( p'_c(\alpha\xi) - \Tanh'(\alpha\xi) \right)^2 \leq c . 
\] 
In the proof, $\delta(c)$ will denote a generic function defined near zero in
$\RR_+$ such that $\delta(c) \to 0$ when $c \to 0$.  Constructing the $z^l$'s
with $f = p_c$, we shall show by recurrence on $l$ that for each $l\in[k]$, 
\[
\limsup_n \EE \left\| z^l - x^l \right\|_n^2 \1_{\cE} \leq \delta(c) 
 \quad\text{and}\quad 
\limsup_n \EE \left\| p_c(z^l) - \Tanh(x^l) \right\|_n^2 \1_{\cE} 
  \leq \delta(c) , 
\]
where the function $\delta$ can change from an iteration to another. 
At Iteration $k$, it will be enough to choose $c$ such that $\delta(c) \leq 
\varepsilon$ and to set $f_\varepsilon = p_c$ to obtain the result of the 
lemma. 

Starting with $l=1$, we have $z^1 = x^1 = W f(0) = W \Tanh(0)$. Similarly to 
the beginning of the proof of Lemma~\ref{tx-cx}, we also have 
$\EE \left\| p_c(z^1) - \Tanh(x^1) \right\|_n^2 \leq c$. 

Assume now that the recurrence assumption is true for $l$. Recall the 
expressions \eqref{amp-z} and~\eqref{amp-x} of the $z^l$'s and the $x^l$'s
respectively. Our first task is to show that 
\begin{equation}
\label{ons} 
\limsup_n\EE\left\| \diag ((W\odot W) p_c'(z^l)) p_c(z^{l-1}) - 
  \diag ( tS \EE \Tanh'(X^l) ) \Tanh(x^{l-1}) \right\|_n^2 \1_{\cE} 
 \leq \delta(c). 
\end{equation} 
By making use of the bound~\eqref{momz} along with Cauchy-Schwarz, we get 
\begin{align*} 
\EE\left\| \diag ((W\odot W  - tS )p_c'(z^l)) p_c(z^{l-1}) \right\|_n^2 
 &=  \frac 1n \sum_i 
 \EE \left[ (W\odot W  - tS )p_c'(z^l) \right]_i^2 p_c(z^{l-1}_i)^2 \\  
 &\leq \frac Cn \sum_i \left( \EE \left[ (W\odot W  - tS )p_c'(z^l) \right]_i^4 
   \right)^{1/2} 
\end{align*} 
which converges to zero by \cite[Lemma 21]{hac-24}. We now write 
\begin{align*} 
& \left\| \diag (tS p_c'(z^l)) p_c(z^{l-1}) - 
  \diag ( tS \EE \Tanh'(X^l) ) \Tanh(x^{l-1}) \right\|_n \\
&\leq
\left\| \diag (tS (p_c'(z^l) - \EE p_c'(\cX^l)) p_c(z^{l-1}) 
 \right\|_n 
+ \left\| \diag (tS \EE p_c'(\cX^l)) ( p_c(z^{l-1}) - \Tanh(x^{l-1}) 
  \right\|_n \\
&\phantom{=} 
+ \left\| \diag (tS (\EE p_c'(\cX^l) - \EE\Tanh'(X^l)) \Tanh(x^{l-1}) 
  \right\|_n . 
\end{align*} 
By Cauchy-Schwarz and the bound \eqref{momz}, we have 
\[
\EE \left\| \diag (tS (p_c'(z^l) - \EE p_c'(\cX^l)) p_c(z^{l-1}) 
 \right\|_n^2 \leq 
\frac Cn \sum_i \left( \EE\left( \sum_r s_{ir} 
   \left( p_c'(z^l_r) - \EE p_c'(\cX^l_r) \right) \right)^4 \right)^{1/2} 
\]
which converges to zero by Propositions~\ref{AMP1} and~\ref{indep-amp}. 

We also have that 
\[
| \EE p_c'(\cX^l) | \leq 
| \EE p_c'(\cX^l)  - \EE \Tanh'(\cX_l) | + | \EE \Tanh'(\cX_l) | 
 \leq 1 + \sqrt{c}, 
\]
therefore, $\| \diag (tS \EE p_c'(\cX^l)) \|$ is bounded, and thus, 
\[
\EE \left\| \diag (tS \EE p_c'(\cX^l)) ( p_c(z^{l-1}) - \Tanh(x^{l-1}) ) 
  \right\|_n^2 \1_{\cE} \leq 
C \EE \left\| p_c(z^{l-1}) - \Tanh(x^{l-1}) \right\|_n^2\1_{\cE}  
\]
which $\limsup$ is bounded by $\delta(c)$ by the recurrence assumption. 

We finally have that 
\[
| \EE p_c'(\cX^l_i) - \EE\Tanh'(X^l_i) | \leq 
| \EE p_c'(\cX^l_i) - \EE \Tanh'(\cX^l_i) | + 
  | \EE \Tanh'(\cX^l_i) - \EE\Tanh'(X^l_i) |  \leq \delta(c) 
\]
by the construction of $p_c'$ and by Lemma~\ref{qcq} that shows that 
$\|\cq^l - q^l\|_\infty$ is small. With this, we obtain that 
\[
\EE \left\| \diag (tS (\EE p_c'(\cX^l) - \EE\Tanh'(X^l)) \Tanh(x^{l-1}) 
  \right\|_n^2 \leq \delta(c),  
\]
and \eqref{ons} follows. 

With this, we have 
\begin{multline*} 
\| z^{l+1} - x^{l+1} \|_n \1_{\cE} \leq C_W \| z^l - x^l \|_n \1_{\cE} \\
  + \left\| \diag ((W\odot W) p_c'(z^l)) p_c(z^{l-1}) - 
  \diag ( tS \EE \Tanh'(X^l) ) \Tanh(x^{l-1}) \right\|_n \1_{\cE}
\end{multline*} 
and we get from the recurrence assumption and from~\eqref{ons} that 
$\EE \| z^{l+1} - x^{l+1} \|_n^2 \1_{\cE} \leq \delta(c)$. Also, writing  
\begin{align*} 
\| p_c(z^{l+1}) - \Tanh(x^{l+1}) \|_n \1_{\cE} &\leq 
\| p_c(z^{l+1}) - \Tanh(z^{l+1}) \|_n + \| z^{l+1} - x^{l+1} \|_n \1_{\cE} \\ 
&\leq 
\| p_c(z^{l+1}) - \EE p_c(\cX^{l+1}) \|_n + 
\| \Tanh(z^{l+1}) - \EE \Tanh(\cX^{l+1}) \|_n \\
 &\phantom{=} + \| \EE p_c(\cX^{l+1}) - \EE \Tanh(\cX^{l+1}) \|_n + 
\| z^{l+1} - x^{l+1} \|_n \1_{\cE}, 
\end{align*} 
we see that $\limsup_n \EE \| p_c(z^{l+1}) - \EE p_c(\cX^{l+1}) \|_n^2 \1_\cE 
\leq \delta(c)$, and the recurrence assumption is verified for $l+1$. 
\end{proof} 

\subsubsection{Theorem~\ref{th-amp}: end of proof} 
Given $\varepsilon > 0$, we know that we can choose $c > 0$ small enough 
in the statement of Lemma~\ref{z-x} so that the conclusions of this lemma
are satisfied. If we make $c$ smaller if necessary, then the conclusion of
Lemma~\ref{tx-cx} will be true for this same polynomial $f_\varepsilon$. 
In this situation, combining Lemma~\ref{z-x}, Proposition~\ref{indep-amp}
with $b=2$, and Lemma~\ref{tx-cx}, we obtain that 
\[
\limsup_n \EE \left\| \tx^k - x^k \right\|_n^2 \1_\cE \leq 2\varepsilon , 
\]
which implies that 
\[
\limsup_n \EE \left\| \Tanh(\tx^k) - \Tanh(x^k) \right\|_n^2 \1_\cE 
    \leq 2\varepsilon . 
\]
Using Lemmas~\ref{u-v} and~\ref{v-m}, we obtain that 
\[
\limsup_n \EE \left\| m - \Tanh(x^k) \right\|_n^2 \1_\cE 
    \leq 2\varepsilon + 4(\log 2)^k . 
\]
Denoting as $\| M \|_\infty$ the max norm of the matrix $M$, we 
know from \cite[Th.~1.1]{ban-vha-16} that
\[
\EE\| W^{(n)} \| \leq T^{(n)}, 
\]
where
\[
T^{(n)} = (1+\delta) 
  \left( 2 \rownorm{S^{(n)}}^{1/2} 
  + \frac{6}{\sqrt{\log(1+\delta)}} (\| S^{(n)} \|_\infty \log n)^{1/2}
 \right) 
\]
for an arbitrary $\delta > 0$. Furthermore, by Gaussian concentration,
\[
\PP\left[ \| W^{(n)} \| \geq T^{(n)} + t \right] 
  \leq \exp(- t^2 / (2 \| S^{(n)}\|_\infty)^2 ) , 
\] 
for $\delta \in (0, 1/2]$, as given by \cite[Cor.~3.9]{ban-vha-16}.  By 
consequence, since $K_n\geq \log n$, it holds that there exists $C_W > 0$ for
which $\1_\cE \to_n 1$ almost surely. This implies that 
\[
\limsup_n \EE \left\| m - \Tanh(x^k) \right\|_n^2 
    \leq 2\varepsilon + 4(\log 2)^k ,  
\]
thus, $\limsup_k \limsup_n \EE \left\| m - \Tanh(x^k) \right\|_n^2 
    \leq 2\varepsilon$. Since $\varepsilon$ is arbitrary, Theorem~\ref{th-amp}
holds true. 

We close the paper with some remarks. 

\begin{remark}
\label{rem-univ} 
The proof for the bound~\eqref{preTAP(A)} requires the Gaussian assumption on
the entries of the matrix $W$. This assumption is not essential for the rest of
the proof which is based on the approach of \cite{bay-lel-mon-15, hac-24}.  If
we manage to generalize the bound~\eqref{preTAP(A)} to the non-necessarily
Gaussian case, the proof above will continue to work after some easy
adaptations.  
\end{remark} 

\begin{remark}
The condition $K_n\geq \log n$ is an artifact of our proof due to the 
fact that we needed to bound the spectral norm $\| W \|$ in order to approximate
our AMP algorithm with a polynomial activation function with the AMP 
algorithm with the $\Tanh$ activation function. Another possible proof 
technique would be to pass from the iterates $\tx^l$ to the iterates $x^l$ 
without the need of introducing the polynomial intermediates. This is left for 
future research. 
\end{remark} 

 \subsection*{Acknowledgement} 
I would like to thank Christian Brennecke for a useful discussion. 

\appendix 
\section{Appendices} 

\subsection{Proof of Lemma~\ref{lm-q}} 
\label{prf-lm-q} 

By a derivative calculation and the use of the Gaussian integration by parts
formula (see, \emph{e.g.} the proof of \cite[Proposition
1.3.8]{tal-livre11-t1}), we can show that $g$ is $1$--Lipschitz.  By
Assumptions~\ref{ass-K} and~\ref{ass-HT}, it holds that 
$\sup_n \rownorm{t S^{(n)}} < 1$. Thus, for each $n > 0$, the function
$q\in\RR_+^n \mapsto t S^{(n)} g(q)$ is a contraction for the
$\|\cdot\|_\infty$ norm on $\RR_+^n$, and the result follows by Banach's fixed
point theorem. Since $0\leq g(q) < 1$, we also have $\| q^{(n)} \|_\infty \leq
\rownorm{t S^{(n)}} \| g(q^{(n)})\|_\infty < \log 2$ by
Assumption~\ref{ass-HT}. 

\subsection{Proof of Corollary~\ref{cor-band}} 
\label{prf-cor-band} 
We take out the superscripts $^{(n)}$. 
Recall that $S = [ \psi(|i-j|) ]$. For $n$ large enough, let $\Scirc$ be the 
circulant deformation of $S$ given as the $n\times n$ circulant symmetric 
matrix which first row is 
\[
\begin{bmatrix} \psi(0)&\cdots&\psi(K_n)& 0 &\cdots &0& \psi(K_n) &\cdots&
 \psi(1) \end{bmatrix}. 
\]
Given a $n\times n$ matrix $X$ taken from the Gaussian Orthogonal Ensemble, 
let $W = (t S)^{\odot 1/2} \odot X$, and
$\widetilde W = [ \widetilde W_{ij} ] = (t \Scirc)^{\odot 1/2} \odot X$ where
$[A_{ij}]^{\odot 1/2} = [A_{ij}^{1/2}]$. Recall that $H(\sigma) = \sigma^\T W
\sigma /2 + h\ps{\sigma}{1}$ with the free energy $F_n$. Define the Hamiltonian
$\widetilde H(\sigma) = \sigma^\T \widetilde W \sigma / 2 + h\ps{\sigma}{1}$,
and let $\widetilde F_n$ be the associated free energy. Knowing from
Corollary~\ref{cor-bisto} that the convergence~\eqref{bisto} holds true for
$\widetilde F_n$, all we need to prove is that $F_n - \widetilde F_n \to 0$. 
We can write $\widetilde H(\sigma) = H(\sigma) + E(\sigma)$ where 
$E(\sigma)$ is given as 
\[
E(\sigma) = \sum_{i=1}^{K_n} \sum_{j=n-K_n+i}^n \sigma_i\sigma_j 
  \widetilde W_{ij}.  
\]
Note that $E(\sigma)$ and $H(\sigma)$ are independent and that the
$\widetilde W_{ij}$'s in the expression of $E(\sigma)$ above satisfy 
$\EE \widetilde W_{ij}^2 = t \psi(|i-j+n|)$. 
By Jensen's inequality, 
$\EE\log(\sum e^{H(\sigma) + E(\sigma)}/\sum e^{H(\sigma)}) 
= \EE\log\EQ{e^{E(\sigma)}} \geq
 \EE\EQ{E(\sigma)} = 0$, thus, $\widetilde F_n \geq F_n$. 
We also have that 
\[
\EE E(\sigma)^2 = t \sum_{i=1}^{K_n} \sum_{j=n-K_n+i}^n 
 \psi(|i-j+n|) \leq t K_n .
\]
Jensen's inequality applied to the expectation with respect to the 
law of $E(\sigma)$ leads to 
\[
\widetilde F_n \leq \frac 1n \EE\log \sum_\sigma 
  e^{H(\sigma) + \EE E(\sigma)^2/2} \leq \frac{t K_n}{2n} + F_n , 
\]
and the corollary is established. 

\subsection{Proof of Lemma~\ref{lm-var}} 
\label{prf-lm-var} 
Using Identity \eqref{mij-mi}, it is enough to bound 
$\EE (\delta_i m_j^{[i]})^2$. In the derivations below, we use Itô's lemma to 
obtain the equality. To obtain the first inequality, we extract the terms 
$k=j$ from the two sums at the right hand side of the equality, we observe
that $m_{jj}^{[i]} \in[0,1]$, and we use the inequality $ab \leq (a^2+b^2)/2$. 
We obtain 
\begin{align*} 
&\EE (\delta_i m_j^{[i]})^2 \\
&= -2 \sum_{k\neq i} s_{ik} \int_0^{tu} 
 \EE \delta_i m_j^{[i]}(v) \delta_i \left( m_{k}^{{[i]}} m_{kj}^{{[i]}}\right)(v) dv 
 +  \sum_{k\neq i} s_{ik} \int_0^{tu} 
  \EE (\varepsilon_i m_{kj}^{{[i]}}(v))^2 dv \\
&\leq  3 s_{ij} t  
 + \sum_{k\neq i,j} s_{ik} \int_0^t \left( \EE(\varepsilon_i m_{kj}^{{[i]}}(v))^2 
+ \EE(\delta_i \left( m_{k}^{{[i]}} m_{kj}^{{[i]}}\right)(v))^2 \right) dv 
+ \left(\sum_{k\neq i,j} s_{ik} \right) \int_0^t \EE(\delta_i m_j^{[i]}(v))^2 dv \\ 
&\leq  3 s_{ij} t  
 +  \bC_{\text{row}} \max_{k \neq i,j} 
\int_0^t \left( \EE(\varepsilon_i m_{kj}^{{[i]}}(v))^2 
+ \EE(\delta_i \left( m_{k}^{{[i]}} m_{kj}^{{[i]}}\right)(v))^2 \right) dv 
+ \bC_{\text{row}}  \int_0^t \EE(\delta_i m_j^{[i]}(v))^2 dv \\ 
&\leq  3 s_{ij} t  
 +  \bC_{\text{row}} \max_{\substack{k \neq i,j \\ 
    \sigma_i = \pm 1}} 
 \int_0^t \EE (1 + m_{k}^{[i]}(v)^2) m_{kj}^{{[i]}}(v)^2 dv 
 + \bC_{\text{row}} \int_0^t \EE(\delta_i m_j^{[i]}(v))^2 dv . 
\end{align*} 
In the last inequality, we used that 
\begin{multline*} 
\left(\frac{x(1) + x(-1)}{2}\right)^2 + 
\left(\frac{y(1)x(1) - y(-1)x(-1)}{2}\right)^2  \leq 
\frac{x(1)^2 + x(-1)^2 + y(1)^2x(1)^2 + y(-1)^2x(-1)^2}{2} \\ 
\leq \max_{i=\pm 1} x(i)^2 (1 + y(i)^2) .
\end{multline*} 
We now use that $m_{kj}^{[i]} = (1 - (m_k^{[i]})^2) \delta_k m_j^{[i,k]}$ 
to obtain that 
\[
\EE (\delta_i m_j^{[i]})^2 
\leq  3 s_{ij} t  
 +  \bC_{\text{row}} \max_{\substack{k \neq i,j \\ 
    \sigma_i = \pm 1}} 
 \int_0^t \EE(\delta_k m_{j}^{[i,k]}(v))^2 dv 
 + \bC_{\text{row}} \int_0^t \EE(\delta_i m_j^{[i]}(v))^2 dv
\]
Using Gronwäll's inequality which states that if $\varphi(t) \leq \alpha(t) + 
C \int_0^t \varphi(v) dv$, then  
$\varphi(t) \leq \alpha(t) + C \int_0^t \alpha(v) 
 \exp(C(t-v)) dv$, we obtain by making an Integration by Parts that 
\[ 
\EE (\delta_i m_j^{[i]})^2 
\leq  6 s_{ij} t + \bC_{\text{row}} \max_{\substack{k_1 \neq i,j \\ 
    \sigma_i = \pm 1}} 
 \int_0^t e^{\bC_{\text{row}}(t-v_1)} \EE(\delta_{k_1} m_{j}^{[i,k_1]}(v_1))^2 dv_1 . 
\]
We now similarly consider 
$\delta_{k_1} m_{j}^{[i,k_1]}(v_1)$, and focus on the dependence of
this random variable on 
$\left( W_{k_1 k_2}(t) \right)_{k_2\neq i, k_1}$. Applying the same argument
as above to $\EE(\delta_{k_1} m_{j}^{[i,k_1]}(v_1))^2$, we obtain 
\[
\EE(\delta_{k_1} m_{j}^{[i,k_1]}(v_1))^2 \leq 
 6 s_{k_1j} t + \bC_{\text{row}} \max_{\substack{k_2 \neq i,j,k_1 \\ 
    \sigma_{k_1} = \pm 1}} 
 \int_0^{t} e^{\bC_{\text{row}}(t-v_2)} 
  \EE(\delta_{k_2} m_{j}^{[i,k_1,k_2]}(v_1,v_2))^2 dv_2 , 
\]
and thus, 
\begin{multline*} 
\EE (\delta_i m_j^{[i]})^2 
\leq  \frac{6 \bs C_s}{K_n} t \left( 1 + (e^{\bC_{\text{row}}t} -1) \right) \\ 
+ \bC_{\text{row}}^2 \max_{\substack{k_1 \neq i, j \\ 
    \sigma_i = \pm 1}} 
  \max_{\substack{k_2 \neq i,j,k_1 \\ 
    \sigma_{k_1} = \pm 1}} 
 \int_0^t \int_0^t e^{\bC_{\text{row}}(2t-v_1-v_2)} 
  \EE(\delta_{k_2} m_{j}^{[i,k_1,k_2]}(v_1,v_2))^2 dv_1 dv_2 . 
\end{multline*} 
Iterating, we end up with 
\begin{align*}
\EE (\delta_i m_j^{[i]})^2 &\leq  
\frac{6 \bs C_s}{K_n} t \left( 1 + (e^{\bC_{\text{row}}t} -1) + \cdots + 
  (e^{\bC_{\text{row}}t} -1)^{n-3} \right)  \\
&\phantom{=} + \bC_{\text{row}}^{n-2} \max_{\substack{k_1 \neq i, j \\ 
    \sigma_i = \pm 1}} \ldots 
  \max_{\substack{k_{n-2} \neq i, j, k_1, \ldots, k_{n-3} \\ 
  \sigma_{k_{n-2}} = \pm 1}} 
 \int_0^t \cdots \int_0^t e^{\bC_{\text{row}} \sum_{\ell=1}^{n-2} (t - v_\ell)} \times \\
 &\phantom{=} \quad\quad\quad\quad\quad 
  \EE(\delta_{k_{n-2}} m_{j}^{[i,k_1,\ldots, k_{n-2}]}(v_1,v_2,\ldots,
  v_{n-2}))^2 dv_1 \ldots dv_{n-2}, 
\end{align*} 
which leads to the result.

\subsection{Proof of Lemma~\ref{cA}} 
\label{prf-cA} 
Let us denote as $\cG_i$ the set of rooted, undirected, labelled and
connected graphs such that $\ell(\circ) = i$, $| V(G) | \leq r$, 
and the property 
\[
\{u,v \} \in E(G) \ \Rightarrow \ \ell(u) \in \cK_{\ell(v)} .
\]
We denote as $\cR$ the set of all the elements of $\cG_i$ but without the 
labels. Given a graph $G \in \cG_i$, let us denote as $\bar G = \bs U(G) \in 
\cR$ the unlabelled version of $G$.  With these notations, we have
\begin{equation}
\label{xcheck} 
 \left| \cA \right| = 
 \sum_{\bar G \in \cR} \ 
 \sum_{\substack{G \in \cG_i \, : \\\bs U(G) = \bar G}} \ 
 \left| \left\{ 
 (T_1, \ldots, T_b) \in \cA \, : \, \bs G(T_1, \ldots, T_b)= G \right\} 
    \right| . 
\end{equation} 
The summand in this expression is bounded by a constant independent of $G$. 
We need to show that 
\begin{equation}
\label{GKmu} 
 \left| \left\{ G \in \cG_i \, : \, \bs U(G) = \bar G \right\} \right|  
  \leq C K_n^{r-1} . 
\end{equation} 

Given $\bar G \in \cR$, denote as $\circ$ the root node of 
$\bar G$, write $M = | V(\bar G) | - 1$, and write
$V(\bar G) \setminus \{\degree\} = [M]$.  Recalling that $\bar G$ is connected,
let us consider a spanning tree of this graph rooted in $\circ$. Denote as
$\bs\pi(v)$ the parent of the node $v$ in this tree. Writing $j_\circ = i$,
we obtain that 
\[ 
 \left| \left\{ G \in \cG_i \, : \, \bs U(G) = \bar G \right\} \right|  
 \leq \left|\left\{ (j_1,\ldots, j_M) \in [n]^M \, : \, 
 \forall k \in [M], \ j_k \in \cK_{j_{\bs\pi(k)}}  \right\} \right| . 
\] 
Denoting as $L \subset [M]$ the set of the leaves of the spanning tree, we can 
write 
\begin{align} 
\left|\left\{ (j_1,\ldots, j_M) \in [n]^M \, : \, 
 \forall k \in [M], \ j_k \in \cK_{j_{\bs\pi(k)}}  \right\} \right| 
 &= \sum_{\substack{j_1,\ldots, j_M \in [n] \, :  \\
 \forall k \in [M], \ j_k \in \cK_{j_{\bs\pi(k)}} }} 1 \nonumber \\ 
&= 
\sum_{k \in [M] \setminus L} \ 
  \sum_{j_k \in \cK_{j_{\bs\pi(k)}}}  
 \Bigl( \sum_{p \in L} \sum_{j_p \in \cK_{j_{\bs\pi(p)}}}  
   1 \Bigr)  \nonumber \\ 
 &\leq C K_n^{|L|} 
\sum_{k \in [M] \setminus L} \ \sum_{j_k \in \cK_{j_{\bs\pi(k)}}}  1 , 
\nonumber 
\end{align} 
recalling that $|\cK_j| \leq CK_n$ for all $j$ 
and using the inequality $|L| K_n \leq K_n^{|L|}$ as soon as $K_n\geq 2$. 
If we prune the leaves of the original spanning tree, what remains is a tree
made of the nodes that constitute the first sum above plus the root node.  We
can apply the pruning operation to the new tree as above, and iterate until
exhausting all the set $[M] = V(\bar G) \setminus \{\degree\}$. This leads to 
\[
\left|\left\{ (j_1,\ldots, j_M) \in [n]^M \, : \, 
 \forall k \in [M], \ j_k \in \cK_{j_{\bs\pi(k)}}  \right\} \right| 
 \leq C K_n^{M} \leq C K_n^{r-1}, 
\]
hence Inequality~\eqref{GKmu}. 
 
It is furthermore easy to check that 
\[
\left| \cR \right| \leq C ,
\]
and the lemma is proven.  


\def\cdprime{$''$} \def\cprime{$'$} \def\cprime{$'$} \def\cprime{$'$}
  \def\cprime{$'$}

\end{document}